\newif\ifarxiv%
\newif\ifreview%
\newif\ifanonymous%
\colorlet{HighlightColor}{T-Q-MC7}
\colorlet{AlertColor}{T-Q-MC5}
\NewDocumentCommand{\Paragraph}{m}{%
  {\par\noindent\normalfont\normalsize\bfseries#1}%
}
\NewDocumentCommand{\CiteAuthor}{m}{%
  {\protect\NoHyper\citeauthor{#1}\protect\endNoHyper}}
\newif\ifcomment%
  \newcommand{\LS}[1]{{\color{T-Q-V3}#1}}
  \newcommand{\NF}[1]{{\color{T-Q-V4}#1}}
  \newcommand{\MS}[1]{{\color{T-Q-V5}#1}}
  \newcommand{\LuigiSays}[2][All]{\LS{\textbf{Luigi $\to$ #1:} #2}}
  \newcommand{\TashSays}[2][All]{\NF{\textbf{NF $\to$ #1:} #2}}
  \newcommand{\MarioSays}[2][All]{\MS{\textbf{Mário $\to$ #1:} #2}}
  \newcommand{\LS}[1]{#1}
  \newcommand{\NF}[1]{#1}
  \newcommand{\MS}[1]{#1}
  \newcommand{\LuigiSays}[2][All]{\@bsphack\@esphack}
  \newcommand{\TashSays}[2][All]{\@bsphack\@esphack}
  \newcommand{\MarioSays}[2][All]{\@bsphack\@esphack}
  \NewDocumentCommand{\Review}{>{\TrimSpaces}+m}{{\color{HighlightColor}#1}}
  \NewDocumentCommand{\Review}{+m}{#1}
\newcommand*{\Adv}{\mathscr{A}}
\newcommand*{\Target}{Alex}
\newcommand*{\TrueAdv}{\Adv^{\Chan{P}}}
\newcommand*{\NoiseAdv}{\Adv^{\Chan{S}}}
\newcommand{\qm}[1]{``#1''} 
\NewDocumentCommand{\Comment}{m}{\tag*{%
    \footnotesize%
    \begin{tabular}{@{} >{\# } l @{}}%
      #1%
    \end{tabular}%
}}
\DeclareMathOperator*{\argmax}{argmax}
\DeclareMathOperator*{\argmin}{argmin}
\DeclareMathOperator{\FnComp}{\circ}
\newcommand*{\CalK}{\mathcal{K}}
\newcommand*{\CalL}{\mathcal{L}}
\newcommand*{\CalQ}{\mathcal{Q}}
\newcommand*{\CalW}{\mathcal{W}}
\newcommand*{\CalX}{\mathcal{X}}
\newcommand*{\CalY}{\mathcal{Y}}
\newcommand*{\CalZ}{\mathcal{Z}}
\newcommand*{\EqDef}{\stackrel{\text{def}}{=}}
\newcommand*{\Reals}{\mathbb{R}}
\newcommand*{\Supp}{\mathrm{supp}}
\newcommand*{\Dist}{\mathbb{D}}
\NewDocumentCommand{\PointDist}{m}{[#1]}
\NewDocumentCommand{\Card}{m}{|#1|}
\NewDocumentCommand{\Map}{m}{\{#1\}}
\NewDocumentCommand{\MapEntry}{m m}{#1\colon #2}
\NewDocumentCommand{\Set}{m}{\{#1\}}
\NewDocumentCommand{\OrdSet}{s m}{%
  \IfBooleanTF{#1}%
  {\left\{#2\right\}}%
  {\{#2\}}%
}
\NewDocumentCommand{\Tuple}{s m}{%
  \IfBooleanTF{#1}%
  {\left\langle#2\right\rangle}%
  {\langle#2\rangle}%
}
\DeclareMathAlphabet{\mathbbold}{U}{bbold}{m}{n}
\DeclareMathOperator{\JointOp}{\kern-0.6pt\smalltriangleright\kern.3pt}
\DeclareMathOperator{\RefinedBy}{\sqsubseteq}
\DeclareMathOperator{\SubsOp}{\parallel}
\NewDocumentCommand{\Chan}{m +o}{%
  \IfNoValueTF{#2}
  {\mathrm{#1}}
  {\mathrm{#1}^{\textnormal{#2}}}%
}
\NewDocumentCommand{\Prior}{m +o}{%
  \IfNoValueTF{#2}
  {#1}
  {#1^{\textnormal{#2}}}%
}
\newcommand*{\ChanLeaksNothing}{\mathbbold{1}}
\NewDocumentCommand{\ChanLDelim}{m}{\ldelim[{#1}{4mm}}
\NewDocumentCommand{\ChanRDelim}{m}{\rdelim]{#1}{4mm}}
\NewDocumentCommand{\Gain}{O{}}{g_{\textnormal{#1}}}
\NewDocumentCommand{\Loss}{O{}}{\ell_{\textnormal{#1}}}
\newcommand*{\GainBayes}{\mathbf{1}}
\newcommand*{\LossShannon}{\Loss[$H$]}
\NewDocumentCommand{\Joint}{s m m O{X, Y}}{%
  \IfBooleanTF{#1}%
  {\left(#2 \JointOp #3\right)_{#4}}%
  {(#2 \JointOp #3)_{#4}}%
}
\NewDocumentCommand{\Outer}{s m m O{Y}}{%
  \IfBooleanTF{#1}%
  {\Joint*{#2}{#3}[#4]}%
  {\Joint{#2}{#3}[#4]}%
}
\NewDocumentCommand{\Posterior}{s m m O{X} O{Y}}{%
  \IfBooleanTF{#1}%
  {\Joint*{#2}{#3}[#4 \mid #5]}%
  {\Joint{#2}{#3}[#4 \mid #5]}%
}
\NewDocumentCommand{\Hyper}{s m m}{%
  \IfBooleanTF{#1}%
  {\left[#2 \JointOp #3\right]}%
  {[#2 \JointOp #3]}%
}
\NewDocumentCommand{\Vg}{O{g}}{V_{#1}}
\NewDocumentCommand{\VgMax}{O{g}}{\Vg[#1]^{\max}}
\NewDocumentCommand{\Ul}{O{\ell}}{U_{#1}}
\NewDocumentCommand{\UlMin}{O{\ell}}{U_{#1}^{\min}}
\NewDocumentCommand{\LeakFn}{O{g}}{\CalL_{#1}}
\NewDocumentCommand{\Leak}{O{g} m m}{\LeakFn[#1](#3, #2)}
\NewDocumentCommand{\LeakMax}{O{g} m m}{\LeakFn[#1]^{\max}(#3, #2)}
\NewDocumentCommand{\LeakMin}{O{g} m m}{\LeakFn[#1]^{\min}(#3, #2)}
\NewDocumentCommand{\LeakDyn}{O{g} m m}{\LeakFn[#1](#2 \SubsOp #3)}
\NewDocumentCommand{\St}{O{g}}{\mathrm{st}_{#1}}
\NewDocumentCommand{\StVgFn}{O{g}}{\Vg[#1]^{\mathrm{st}}}
\NewDocumentCommand{\StVg}{O{g} m m}{\StVgFn[#1](#2 \SubsOp #3)}
\NewDocumentCommand{\StUl}{O{\Loss} m m}{\Ul[#1]^{\mathrm{st}}(#2 \SubsOp #3)}
\NewDocumentCommand{\StUlFn}{O{\Loss}}{\Ul[#1]^{\mathrm{st}}}
\NewDocumentCommand{\StLeakFn}{O{g}}{\CalL_{#1}^{\mathrm{st}}}
\NewDocumentCommand{\StLeak}{O{g} m m}{\StLeakFn[#1](#2 \SubsOp #3)}
\NewDocumentCommand{\SuppSt}{O{g}}{\Supp \FnComp \St}
\newcommand*{\Cascade}{\mathbin{;}}
\newcommand*{\ParComp}{\circledparallel}
\NewDocumentCommand{\IChoice}{O{p}}{\mathbin{{}_{#1}\oplus}}
\DeclareFontFamily{U}{mathb}{\hyphenchar\font45}
\DeclareFontShape{U}{mathb}{m}{n}{
	<5> <6> <7> <8> <9> <10> gen * mathb
	<10.95> mathb10 <12> <14.4> <17.28> <20.74> <24.88> mathb12
}{}
\DeclareSymbolFont{mathb}{U}{mathb}{m}{n}
\DeclareMathSymbol{\smalltriangleright}{2}{mathb}{"9B} 
\DeclareSymbolFont{stix2-frak}{LS1}{stix2frak}{m}{n}
\DeclareMathSymbol{\circledparallel}{\mathbin}{stix2-frak}{"A6}
\theoremstyle{plain}
\newtheorem{theorem}{Theorem}
\newtheorem{lemma}{Lemma}
\theoremstyle{definition}
\newtheorem{definition}{Definition}
\newtheorem{remark}{Remark}
\theoremstyle{definition}
\NewDocumentEnvironment{Example}{m}{%
  \pushQED{\qed}%
  \example%
  \label{#1}
}{%
  \popQED%
  \endexample%
}
\NewDocumentCommand{\exContinued@makeGroup}{m}{%
  \newcounter{exContinued@group@#1}%
  \setcounter{exContinued@group@#1}{\theexample}%
}
\NewDocumentCommand{\exContinued@makeSub}{m}{%
  \newcounter{exContinued@sub@#1}%
  \setcounter{exContinued@sub@#1}{1}%
}
\NewDocumentCommand{\exContinued@initGroupRef}{m}{%
  \exContinued@makeGroup{#1}%
  \expandafter\gdef\csname exContinued@groupRef@#1\endcsname{%
    exContinued@group@#1%
  }%
}
\NewDocumentCommand{\exContinued@initSubRef}{m}{%
  \exContinued@makeSub{#1}%
  \expandafter\gdef\csname exContinued@subRef@#1\endcsname{%
    exContinued@sub@#1%
  }%
}
\NewDocumentCommand{\exContinued@getGroup}{m}{%
  \ifcsname exContinued@groupRef@#1\endcsname%
    \@nameuse{exContinued@groupRef@#1}%
  \else%
    \PackageError{ExampleContinued}{%
      Label ``#1'' possibly used with ExampleContinued
      but has not been introduced with ExampleHead%
    }{%
      Define the example first, e.g.,\MessageBreak
      \string\begin{ExampleHead}\MessageBreak
      \string\label{...}\MessageBreak
        ...\MessageBreak
      \string\end{ExampleHead}%
    }
  \fi%
}
\NewDocumentCommand{\exContinued@getSub}{m}{%
  \ifcsname exContinued@subRef@#1\endcsname%
    \@nameuse{exContinued@subRef@#1}%
  \else%
    \PackageError{ExampleContinued}{%
      Label ``#1'' possibly used with ExampleContinued
      but has not been introduced with ExampleHead%
    }{%
      Define the example first, e.g.,\MessageBreak
      \string\begin{ExampleHead}\MessageBreak
      \string\label{...}\MessageBreak
        ...\MessageBreak
      \string\end{ExampleHead}%
    }
  \fi%
}
\NewDocumentCommand{\exContinued@addGroupRef}{m m}{%
  \expandafter\gdef\csname exContinued@groupRef@#1\endcsname{%
    \exContinued@getGroup{#2}%
  }%
}
\NewDocumentCommand{\exContinued@addSubRef}{m m}{%
  \expandafter\gdef\csname exContinued@subRef@#1\endcsname{%
    \exContinued@getSub{#2}%
  }%
}
\theoremstyle{definition}
\NewDocumentEnvironment{ExampleHead}{m}{%
  \pushQED{\qed}%
  \refstepcounter{example}%
  \exContinued@initSubRef{#1}%
  \exContinued@initGroupRef{#1}%
  \renewcommand{\theexampleGroup}{%
    \arabic{\exContinued@getGroup{#1}}\alph{\exContinued@getSub{#1}}%
  }%
  \exampleGroup%
  \label{#1}
}{%
    \popQED%
    \endexampleGroup%
    \stepcounter{\exContinued@getSub{#1}}%
}
\NewDocumentCommand{\ThmContinues}{m}{%
  \hyperref[#1]{continuing} from page~\pageref{#1}%
}
\NewDocumentEnvironment{ExampleContinued}{m m}{%
  \pushQED{\qed}%
  \exContinued@addSubRef{#2}{#1}%
  \exContinued@addGroupRef{#2}{#1}%
  \renewcommand{\theexampleGroup}{%
    \arabic{\exContinued@getGroup{#2}}\alph{\exContinued@getSub{#2}}%
  }%
  \exampleGroup[\ThmContinues{#1}]%
  \label{#2}
}{%
  \popQED%
  \endexampleGroup%
  \stepcounter{\exContinued@getSub{#2}}%
}
\title{A new measure for dynamic leakage based on quantitative information flow}
\newif\ifauthorlong%
  \author{\IEEEauthorblockN{Anonymous Authors}}
    \author{
      \IEEEauthorblockN{
        Luigi~D. C.~Soares\IEEEauthorrefmark{1}\IEEEauthorrefmark{2}%
        \orcidlink{0000-0002-9579-8427},
        Mário S.~Alvim\IEEEauthorrefmark{1}\orcidlink{0000-0002-4196-7467} and
        Natasha Fernandes\IEEEauthorrefmark{2}\orcidlink{0000-0002-9212-7839}
      } 
      \IEEEauthorblockA{
        \IEEEauthorrefmark{1}%
        \textit{Department of Computer Science},
        \textit{Federal University of Minas Gerais} ---
        Belo Horizonte, Brazil 
        \\
        \IEEEauthorrefmark{2}%
        \textit{School of Computing},
        \textit{Macquarie University} ---
        Sydney, Australia
      }
    }
    \author{
      \IEEEauthorblockN{Luigi~D. C.~Soares\,\orcidlink{0000-0002-9579-8427}}
      \IEEEauthorblockA{
        \textit{UFMG \& Macquarie University} \\
        Belo Horizonte, Brazil / Sydney, Australia \\
      }
      \and
      \IEEEauthorblockN{Mário S.~Alvim\,\orcidlink{0000-0002-4196-7467}} 
      \IEEEauthorblockA{
        \textit{UFMG} \\
        Belo Horizonte, Brazil \\
      }
      \and
      \IEEEauthorblockN{Natasha Fernandes\,\orcidlink{0000-0002-9212-7839}} 
      \IEEEauthorblockA{
        \textit{Macquarie University} \\
        Sydney, Australia \\
      }
    }
\begin{document}
\maketitle

\begin{abstract}
  \emph{Quantitative information flow (QIF)} is concerned with assessing the leakage of information in computational systems.
  In QIF there are two main perspectives for the quantification of leakage.
  On one hand,
  the \emph{static perspective} considers all possible runs of the system in the computation of information flow,
  and is usually employed when preemptively deciding whether or not to run the system.
  On the other hand,
  the \emph{dynamic perspective} considers only a specific, concrete run of the system that has been realised,
  while ignoring all other runs.
  The dynamic perspective is relevant for, e.g., system monitors and trackers,
  especially when deciding whether to continue 
  or to abort a particular run based on how much leakage has occurred up to a certain point.
  Although the static perspective of leakage is well-developed in the literature,
  the dynamic perspective still lacks the same level of theoretical maturity.
  In this paper we take steps towards bridging this gap with the following key contributions: 
  \begin{enumerate*}[label=(\roman*)]
    \item we provide a novel definition of dynamic leakage 
      that decouples the adversary's \emph{belief} about the secret value from a \emph{baseline} distribution on secrets 
      against which the success of the attack is measured;
    \item we demonstrate that our formalisation satisfies relevant information-theoretic axioms,
      including non-interference and relaxed versions of monotonicity and the data-processing inequality (DPI); 
    \item we identify under what kind of analysis strong versions of the axioms of monotonicity and the DPI might not hold,
      and explain the implications of this (perhaps counter-intuitive) outcome;
    \item we show that our definition of dynamic leakage is compatible with the well-established static perspective; and
    \item we exemplify the use of our definition 
      on the formalisation of attacks against privacy-preserving data releases.
  \end{enumerate*}
\end{abstract}

\begin{IEEEkeywords}
  quantitative information flow, dynamic leakage
\end{IEEEkeywords}

\section{Introduction}\label{sec:intro}

The field of \emph{quantitative information flow (QIF)} is concerned with assessing the amount of information that flows through a system.
This assessment is relevant both in cases in which the flow of information through the system is desirable 
(e.g., in a census dataset which produces output statistics reflecting trends in the population data fed as input)
and (more commonly) in cases where the flow of information is undesirable 
(e.g., in electronic voting protocols in which the final tally should not reveal any undue information about the ballot of each voter). 
QIF has been successfully applied to a wide variety of scenarios,
including privacy and security~\cite{%
  alvim2024privacy,alvim:23:CCS,jurado2021formal,fernandes2018processing,Alvim:15:JCS,Chatzi:2019,
  OpenBanking:Soares2025%
}, 
machine learning~\cite{Silva:22:IS,Viegas:20:IS},
robotics~\cite{Pimentel:18:JIRS},
recommendation systems~\cite{Moraes:17:ICTIR},
fairness~\cite{Alvim:23:CADE}, 
among others.

In the QIF framework a system is modelled as an information-theoretic channel taking in some secret input and producing some observable output. 
Attackers are Bayesian, modelled using a prior knowledge and a gain function describing their goals and capabilities.  
By observing the channel's output, 
the adversary can update their prior distribution on secret values to a collection of posterior, revised distributions, 
each occurring with its own probability.
The information leakage is defined as the difference between the ``vulnerability'' of (i.e., the ``information'' about) the secret,
from the point of view of the attacker, before and after passing through the channel 
(prior and posterior vulnerabilities, respectively).
These ideas are captured in the \emph{$g$-leakage} framework~\cite{Alvim:12:CSF}.

It is known~\cite{alvim2019axiomatization,QIF-Book:Alvim2020} that 
the vulnerability measures in the $g$-leakage framework satisfy some fundamental information-theoretic axioms.
Among these, three are of particular interest to this paper, as they deal with the conservation of information in a system.
The first, \emph{monotonicity}, says that, by observing the behaviour of the system,
the adversary can never end up with less information about the system's input than they had before.\footnote{%
  This is akin to Shannon's \emph{\qm{information can't hurt}} property for entropies,
  which means conditioning cannot increase entropy: $H(X \mid Y) \leq H(X)$~\cite{coverthomas:book}.%
}
The second, known as the \emph{data-processing inequality (DPI)},
says that the post-processing of a system's output cannot increase the information the adversary has about the system's input.\footnote{%
  This is a generalisation of Shannon's property of DPI that states that if $X~\rightarrow~Y~\rightarrow~Z$ form a Markov chain, 
  then $H(X \mid Y) \leq H(X \mid Z)$~\cite{coverthomas:book}.%
}
Lastly, \emph{non-interference} says that a system that never produces any useful output cannot leak information.

These axioms are satisfied under the \emph{static perspective} of the $g$-leakage framework,
which considers \emph{all possible outputs} of the system, under either \emph{expectation} 
(by averaging the vulnerability over all posteriors)
or \emph{max-case} (by taking the worst-case vulnerability over all posteriors)~\cite{alvim2019axiomatization,QIF-Book:Alvim2020}.
Static leakage is useful for analysing a system before it is executed, e.g., when preemptively deciding whether or not to run it.

However, there are situations in which it is desirable to measure leakage with respect to a single run of the system,
also known as the \emph{dynamic} perspective.
In this case only the posterior corresponding to the realised output is considered,
and all posteriors corresponding to non-realised outputs are ignored.
The dynamic perspective is particularly useful in leakage trackers or monitors,
which follow in real time the information leaked by a system 
and may allow or abort further execution based on the leakage that has occurred so far.

Although of significant interest, 
the literature lacks a sound definition for dynamic leakage that enjoys the properties of static leakage identified earlier.
More precisely, the ``traditional'' definition of dynamic leakage~\cite{DynLeak:Bielova2016,QIF-Book:Alvim2020},
motivated by the static definition of leakage,
simply compares the posterior vulnerability of a secret (computed from a specific observation) with the prior vulnerability.
However, this definition may lead to unexpected and incorrect conclusions, as we cover in Section~\ref{sec:motivation}.
Indeed, the QIF community is aware of the need for a more principled definition of dynamic leakage~\cite{DynLeak:Bielova2016}.

\Paragraph{Contributions.}
In this paper we propose a definition of dynamic information leakage that fills this gap in the QIF literature.
Our work makes the following key contributions:
\begin{enumerate}
\item We provide a novel definition of dynamic leakage that is parameterised by two states of knowledge (probability distributions):
  one modelling the adversary's \emph{belief} about the secret value ---~which is used to guide their actions~--- 
  and another one modelling the \emph{baseline} distribution on secrets ---~against which the success of the adversary's actions are measured.
  Such a distinction between states of knowledge has already been made by \CiteAuthor{Belief:Clarkson2009}~\cite{Belief:Clarkson2009}, 
  who defined information leakage in terms of the Kullback-Leibler (KL) divergence.
  We extend their work to more general adversarial models, via the $g$-leakage framework~\cite{g-leakage:Alvim2012}, 
  and generalise the notion of \qm{baseline} from \qm{strict truth} to a state of knowledge that is the \qm{closest to the truth} available.
  
\item We show that the proposed formalisation satisfies the fundamental axiom of non-interference,
  and satisfies relaxed versions (which we call ``single-step'') of the axioms of monotonicity and the data-processing inequality.
  We also show that additional post-processing steps might have the opposite effect to what is expected,
  increasing vulnerability (dually, decreasing uncertainty).
  This (perhaps) counter-intuitive result is also not always captured using the traditional definition of dynamic leakage.

\item We identify under what kind of analysis strong versions 
  (which we call ``multi-step'')
  of the axioms of monotonicity and the data-processing inequality might not hold,
  noting that results obtained under such scenarios do not invalidate 
  ---~and are in fact not comparable to~---
  the guarantees of single-step monotonicity and DPI.
  We also explain why, from the analyst's perspective,
  a break of monotonicity (i.e., a negative amount of leakage) naturally corresponds to privacy improvement.

\item We show that the proposed definition of dynamic measures is consistent with the static perspective.
  That is, the expectation of dynamic vulnerability (or uncertainty)
  corresponds to the traditional definition of expected vulnerability (uncertainty),
  and similar for the max-case.
  
\item We formalise attacks against privacy-preserving data-release systems using QIF 
  and study the application of the new definition of dynamic vulnerability 
  to evaluate the effect of adding a privacy mechanism to the system.
\end{enumerate}

\Paragraph{Outline of the paper.}
In Section~\ref{sec:motivation} we explore the failures of monotonicity
and DPI in the traditional definition of dynamic leakage,
motivating our novel definition.
In Section~\ref{sec:prelim} we provide the current definitions of leakage measures using the QIF framework,
both in the static and dynamic perspectives.
In Section~\ref{sec:formalisation} we propose a new definition of dynamic leakage,
show the necessary steps to adopt it and its limitations,
demonstrate that it satisfies non-interference,
and show that it partially satisfies monotonicity and the DPI.
We also show that this formalisation is consistent with the static perspective.
In Section~\ref{sec:case-studies} we analyse a real-world application of privacy-preserving data releases,
in which we compare two systems satisfying the DPI and show that the traditional definition of dynamic leakage is not adequate,
while our novel formalisation is.
In Section~\ref{sec:multi-step} we discuss the difference between single- and multi-step analysis,
and the implications of additional dynamic steps, which might break the axioms of monotonicity and the DPI.
In Section~\ref{sec:rw} we discuss related work.
Finally, in Section~\ref{sec:conclusion} we conclude and discuss future work.
\ifarxiv%
  Full proofs of lemmas and theorems are provided in Appendix~\ref{sec:appendix-proofs}.%
\else%
  \ifanonymous%
    Full proofs are provided in the accompanying supplementary material.%
  \else
    Full proofs are provided in the accompanying technical report~\cite{techrep}.%
  \fi
\fi%

\section{Motivating Examples}\label{sec:motivation}

In this section we exemplify how the traditional definition of dynamic leakage might lead to misleading conclusions.
For that, consider a Bayesian adversary with access to the system of interest, including implementation details, 
who starts with some prior knowledge about a secret value fed as input to the system and,
after observing the system's behaviour, updates their knowledge using Bayes' rule.
The amount of information leaked by the system is customarily defined as
the increase in the adversary's information about the secret, captured as the change in the vulnerability of the secret
---~that is, the amount of useful information to perform an attack~--- before and after the system's execution
(or, dually, the decrease in the adversary's uncertainty about the secret).

\subsection{The problem of monotonicity: The traditional definition of
dynamic leakage can produce incorrect negative values}%
\label{sec:motivation:mono}

Intuitively, we expect vulnerability/uncertainty measures to be \emph{monotonic}, 
meaning that observing an output should never decrease the attacker's
information, and so leakage should always be non-negative.
It has already been shown that monotonicity is indeed always preserved in the static perspective,
for all measures satisfying a few fundamental information-theoretic axioms~\cite[Thm 5.8]{QIF-Book:Alvim2020}.
However, as shown by \CiteAuthor{DynLeak:Bielova2016}~\cite{DynLeak:Bielova2016},
it is possible to obtain negative leakage when considering a traditional definition of dynamic leakage 
that simply subtracts the posterior uncertainty from the prior uncertainty.
To illustrate, we review \CiteAuthor{DynLeak:Bielova2016}'s Example 1:

\begin{ExampleHead}{ex:motivation:shannon}
  Consider a (deterministic) program 
  that takes a 2-bit secret $X \in \OrdSet{\textnormal{00}, \textnormal{10}, \textnormal{11}}$ 
  and produces a value $Y \in \OrdSet{a, b}$:
  \begin{center}
    if $X = \textnormal{00}$ then $Y = a$ else $Y = b$
  \end{center}
  Now, suppose that the adversary's prior knowledge about the secret is modelled by the distribution 
  $p_{X} = \OrdSet{\sfrac{7}{8}, \sfrac{1}{16}, \sfrac{1}{16}}.$
  (We write $p_{X} = \OrdSet{p_{x_{i}}}_{i = 0}^{n - 1}$, matching the indices in $\OrdSet{x_{i}}_{i = 0}^{n - 1}$.)
  Then, let us say that the program is executed and produces an output $b$ that rules out secret value $00$ with certainty,
  leading the adversary to a posterior knowledge of $p_{X \mid b} = \OrdSet{0, \sfrac{1}{2}, \sfrac{1}{2}}.$
  
  A popular measure of uncertainty is the Shannon entropy $H(p)$ of a probability distribution $p : \Dist\mathcal{X}$,
  defined as~\cite{thomas2006elements}
  \begin{equation}\label{eq:intro:shannon-entropy}
    H(p) \EqDef - \sum_{x \in \CalX} p_{x}\, \log_{2} p_{x}
  \end{equation}
  Recall that we expect a \emph{decrease} of the adversary's uncertainty once an output is observed.
  Hence, in order to have positive values of leakage representing some flow of information,
  we would typically define the (dynamic) Shannon leakage as
  \begin{equation}\label{eq:intro:shannon-leak}
    H(p_{X}) - H(p_{X \mid y})
  \end{equation}
  Therefore, going back to our example, the dynamic leakage caused by observing output $b$ is
  \begin{equation*}
    H(p_{X}) - H(p_{X \mid b}) \approx 0.67 - 1 = -0.33
  \end{equation*}
  implying that the attacker's uncertainty has \emph{increased} after observing that particular output,
  even though the adversary has effectively learned the first bit of the secret.
\end{ExampleHead}

Example~\ref{ex:motivation:shannon} 
can be captured by the \emph{Belief Tracking} framework of \CiteAuthor{Belief:Clarkson2009}~\cite{Belief:Clarkson2009}.
Although for deterministic systems that approach does not depend on the concrete value of the secret~\cite[Thm 1]{DynLeak:Bielova2016},
in general (for probabilistic systems) the framework requires a concrete secret value,
and it measures uncertainty with the KL divergence.
Our approach allows for more general adversarial models via the $g$-leakage framework.

\subsection{The problem of DPI: The traditional definition of dynamic
  leakage is not well-suited for comparing systems}%
\label{sec:motivation:dpi}

It is often the case that we want to compare two systems $\Chan{C}$ and $\Chan{D}$ in terms of their information leakage.
This is especially useful when $\Chan{C}$ and $\Chan{D}$ are related 
---~for example, when $\Chan{D}$ is exactly the program $\Chan{C}$ with an additional code snippet,
which can be modelled as a post-processing step.
One might include a post-processing step to increase security/privacy;
this intuition relies on the \emph{data-processing inequality} (DPI),
which says that post-processing cannot \emph{increase} information leakage,
and therefore $\Chan{D}$ must be \emph{at least as safe as} $\Chan{C}$. 
In QIF we call $\Chan{D}$ a \emph{refinement} of $\Chan{C}$, written $\Chan{C} \sqsubseteq \Chan{D}$,
whenever $\Chan{D}$ is a post-processing of $\Chan{C}$.\footnote{%
  The definition is a little more nuanced than this,
  but these notions have been shown to be equivalent.%
}
It has been shown that the static perspective of leakage preserves the DPI, and thus refinement.

One might wonder whether this notion of refinement, 
i.e., of being able to safely replace a system $\Chan{C}$ with $\Chan{D}$ 
if system $\Chan{D}$ can be proven to be a post-processing of $\Chan{C}$,
also holds under dynamic scenarios.
Unfortunately, by employing current definitions of dynamic information leakage, this fails in many ways.
To illustrate that,
Example~\ref{ex:motivation:query} below shows a case in which the results suggest that such a replacement could be performed,
but in practice we can verify that it is not necessarily safe.
A second example will be detailed in Section~\ref{sec:case-studies}.
  
\begin{ExampleHead}{ex:motivation:query}
  Consider a company that collects health data and provides public access to it via queries,
  such as if there is a case of a particular disease in the database.
  Due to the sensitive nature of the data, to ensure individuals' privacy,
  the data engineering team develops the following program $\Chan{P}$ which produces as output a value $y \in \OrdSet{\textsc{no}, \textsc{yes}}$,
  where ($a\, \IChoice[p]\, b$) returns $a$ with probability $p$ and $b$ with $1 - p$:
  \begin{center}
    if answer = \textsc{no} 
    then ($\textsc{no} \IChoice[\sfrac{2}{3}] \textsc{yes}$)
    else ($\textsc{no} \IChoice[\sfrac{1}{3}] \textsc{yes}$)
  \end{center}
  That is, this program reports the true answer with \sfrac{2}{3} probability and the wrong answer with \sfrac{1}{3} probability.

  To test $\Chan{P}$,
  the data engineering team runs a query from the perspective of an attacker $\TrueAdv$ 
  and measures the chance of correctly inferring the original answer.
  Assume that the adversary starts with no prior knowledge,
  modelled as a uniform distribution $p_{X} = \OrdSet{\sfrac{1}{2}, \sfrac{1}{2}}$.
  Then the value \textsc{yes} is observed by $\TrueAdv$,
  who uses Bayes' rule to obtain the posterior $p_{X \mid \textsc{yes}}^{\TrueAdv} = \OrdSet{ \sfrac{1}{3}, \sfrac{2}{3}}.$
  Thus, $\TrueAdv$'s best action is to guess \textsc{yes}.

  To measure the secret's vulnerability with respect to $\TrueAdv$'s action,
  we use the Bayes vulnerability, which chooses the secret with the highest probability,
  and corresponds to the adversary's probability of guessing the secret in one try:
  \begin{equation}\label{eq:motivation:bayes-vuln}
    \Vg[\GainBayes](p) \EqDef \max_{x \in \CalX} p_{x}
  \end{equation}
  Hence, the (dynamic) vulnerability is $\Vg[\GainBayes](p_{X \mid \textsc{yes}}^{\TrueAdv}) = \sfrac{2}{3}$.

  Now, let us say that the engineering team considers this vulnerability to be too large,
  and decides to add a second sanitisation step $\Chan{S}$ to the pipeline,
  implemented as follows:
  \begin{center}
    if (perturbed) answer = \textsc{no} then (\textsc{no}) else
    ($\textsc{no} \IChoice[\sfrac{1}{2}] \textsc{yes}$)
  \end{center}
  That is, $\Chan{S}$ always passes on input \textsc{no} accurately,
  whereas input \textsc{yes} has $\sfrac{1}{2}$ probability of being flipped.

  \begin{figure}[tb]
   \centering
   \begin{tikzpicture}

  \node[] (input) at (0, 0) {\textsc{no}};
  \node[draw, rounded corners, right = 2em of input, inner sep = 5pt] (P) {$\Chan{P}$};
  \node[draw, rounded corners, right = 4em of P, inner sep = 5pt] (S) {$\Chan{S}$};
  \node[right = 4em of S] (output) {};

  \draw[-Stealth, shorten > = 2pt] (input) to (P);
  
  \draw[-Stealth, shorten < = 2pt, shorten > = 2pt]
  ($(P.east) + (0, .15)$) to node[above] {\textsc{no} $\sfrac{2}{3}$} ($(S.west) + (0, .15)$);
  
  \draw[-Stealth, dashed, shorten < = 2pt, shorten > = 2pt, HighlightColor]
  ($(P.east) - (0, .15)$) to node[below] {\textsc{yes} $\sfrac{1}{3}$} ($(S.west) - (0, .15)$);
  
  \draw[-Stealth, shorten < = 2pt, shorten > = 2pt]
  ($(S.east) + (0, .15)$) to node[above] {\textsc{no} $1$} ($(output.west) + (0, .15)$);
  
  \draw[-Stealth, dashed, shorten < = 2pt, shorten > = 2pt, HighlightColor]
  ($(S.east) - (0, .15)$) to node[below] {\textsc{no} $\sfrac{1}{2}$} ($(output.west) - (0, .15)$);
\end{tikzpicture}
   \caption{Pipeline $\Chan{P}\Cascade\Chan{S}$ for
     Example~\ref{ex:motivation:query},
     considering as input the answer \textsc{no} and as final output the answer \textsc{no}. 
     The outgoing black arrows correspond to the case where the mechanism preserved the answer,
     whereas the outgoing (blue) dashed arrows indicate that the answer was flipped. 
     The probability of an output \textsc{no} when its input is \textsc{no} is
     $\sfrac{2}{3} \cdot 1 + \sfrac{1}{3} \cdot \sfrac{1}{2} =
     \sfrac{5}{6}$. \label{fig:query-pipeline}}
  \end{figure}
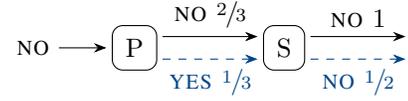

  The new post-processing mechanism $\Chan{S}$ is then applied to the output produced by $\Chan{P}$.
  In our example, 
  let us say that the output \textsc{yes} produced by $\Chan{P}$ is fed as input to $\Chan{S}$,
  which produces an output \textsc{no},
  which is observed by a second adversary $\NoiseAdv$ (see Figure~\ref{fig:query-pipeline} for an illustration).
  We can compute $\NoiseAdv$'s posterior knowledge via Bayes' rule as $p_{X \mid \textsc{no}}^{\NoiseAdv} = \OrdSet{\sfrac{5}{9}, \sfrac{4}{9}}.$

  Therefore, $\NoiseAdv$'s best action is to guess \textsc{no},
  with a vulnerability of $\Vg[\GainBayes](p_{X \mid \textsc{no}}^{\NoiseAdv}) = \sfrac{5}{9}$,
  which is \emph{smaller} than $\TrueAdv$'s chance of correctly guessing the secret.
  So, by employing the traditional definition of dynamic vulnerability,
  it seems that the engineering team has achieved their goal:
  to come up with a system (the composition $\Chan{P}\Cascade\Chan{S}$) that is \emph{always} safer than $\Chan{P}$.
 
  However, a more careful analysis shows that this conclusion can be misleading.
  Suppose that the actual input to $\Chan{P}$ was \textsc{no}.
  This is the real secret, which is known to the data engineering team.
  Notice that $\TrueAdv$'s action of guessing \textsc{yes} would not match the secret value in this scenario,
  whereas $\NoiseAdv$'s action of guessing \textsc{no} would.
  Thus, \emph{in reality} $\NoiseAdv$ would guess correctly while $\TrueAdv$ guesses incorrectly,
  even though we computed that $\NoiseAdv$'s chance of success was smaller than $\TrueAdv$'s chance.
\end{ExampleHead}

The example above shows that the traditional definition of information leakage can lead to erroneous conclusions under dynamic scenarios.
Our goal is then to provide a new definition that captures dynamic scenarios correctly, 
and also to identify when the data-processing inequality is satisfied.

\subsection{%
    A brief explanation on why the traditional
    definition of dynamic leakage leads to unexpected results%
}%
\label{sec:explain-traditional}

Our key insight is that, in the traditional definition of leakage,
the adversary's knowledge about the secret plays two distinct roles at the same time:
it represents the adversary's \emph{belief} about the secret value 
---~which is used to guide their actions~--- 
but it is also the \emph{baseline} distribution on secrets 
---~against which the success of the adversary's actions are measured.
This is a problem in the dynamic perspective because,
once a concrete execution of the system is observed,
the prior distribution on secrets is updated to a posterior distribution
that better captures what values the secret could have taken \emph{in that particular run}.
Therefore, 
although the adversary's prior distribution remains a good reflection of their \emph{belief} 
about the secret values before the system was run,
it is not necessarily an accurate reflection of the \emph{baseline} 
against which their success should be measured \emph{in that particular run}.

As such, we argue that we should divide the states of knowledge between baseline and belief. 
As we shall formalise in this paper, the absence of such a distinction
did not pose a problem under the static perspective because the expectation of all posteriors 
---~which act as both the adversary's posterior beliefs and also the baselines~---
is indeed always equal to the prior (belief) and the effect is, therefore, not detectable.

\begin{figure}[tb]
  \centering
  \begin{subfigure}{\columnwidth}
    \begin{tikzpicture}
  \node[] (P/input) at (0, 0) {$\Dist\mathcal{X}$};
  \node[draw, rounded corners, inner sep = 6pt, right = 2em of P/input]
  (P/baseline) {$\Chan{B}$};

  \draw[-Stealth, shorten < = 2pt, shorten > = 2pt]
  ($(P/input.north east |- P/baseline.north west) + (0, -1ex)$)
  to ($(P/baseline.north west) + (0, -1ex)$);
  
  \draw[-Stealth, shorten < = 2pt, shorten > = 2pt]
  (P/input) to (P/baseline);

  \draw[-Stealth, shorten < = 2pt, shorten > = 2pt]
  ($(P/input.south east |- P/baseline.south west) + (0, +1ex)$)
  to ($(P/baseline.south west) + (0, +1ex)$);

  \node[draw, rounded corners, inner sep = 6pt, right = 5em of P/baseline]
  (P/P) {$\Chan{P}$};

  \draw[-Stealth, shorten < = 2pt, shorten > = 2pt]
  ($(P/baseline.north east |- P/P.north west) + (0, -1ex)$)
  to node[midway, fill=white] {\phantom{$\Dist\mathcal{O}$}}
  ($(P/P.north west) + (0, -1ex)$);
  
  \draw[-Stealth, shorten < = 2pt, shorten > = 2pt]
  ($(P/baseline.south east |- P/P.south west) + (0, +1ex)$)
  to node[midway, fill=white] {\phantom{$\Dist\mathcal{O}$}}
  ($(P/P.south west) + (0, +1ex)$);
  
  \draw[-Stealth, shorten < = 2pt, shorten > = 2pt]
  (P/baseline) to node[midway, fill=white] {$\Dist\mathcal{O}$} (P/P);

  \node[inner sep = 6pt, right = 5em of P/P] (P/output)
  {$\phantom{\Chan{P}}$};
  
  \draw[-Stealth, shorten < = 2pt, shorten > = 2pt]
  ($(P/P.north east |- P/output.north west) + (0, -1ex)$)
  to node[midway, fill=white] {\phantom{$\Dist\mathcal{Y}$}}
  ($(P/output.north west) + (0, -1ex)$);
  
  \draw[-Stealth, shorten < = 2pt, shorten > = 2pt]
  ($(P/P.south east |- P/output.south west) + (0, +1ex)$)
  to node[midway, fill=white] {\phantom{$\Dist\mathcal{Y}$}}
  ($(P/output.south west) + (0, +1ex)$);
  
  \draw[-Stealth, shorten < = 2pt, shorten > = 2pt]
  (P/P) to node[midway, fill=white] {$\Dist\mathcal{Y}$} (P/output);

  \node[below = 3ex of P/input] (S/input) {$\Dist\mathcal{X}$};
  \node[draw, rounded corners, inner sep = 6pt, right = 2em of S/input]
  (S/baseline) {$\Chan{B}$};

  \draw[-Stealth, shorten < = 2pt, shorten > = 2pt]
  ($(S/input.north east |- S/baseline.north west) + (0, -1ex)$)
  to ($(S/baseline.north west) + (0, -1ex)$);
  
  \draw[-Stealth, shorten < = 2pt, shorten > = 2pt]
  (S/input) to (S/baseline);

  \draw[-Stealth, shorten < = 2pt, shorten > = 2pt]
  ($(S/input.south east |- S/baseline.south west) + (0, +1ex)$)
  to ($(S/baseline.south west) + (0, +1ex)$);

  \node[draw, rounded corners, inner sep = 6pt, right = 5em of S/baseline]
  (S/P) {$\Chan{P}$};

  \draw[-Stealth, shorten < = 2pt, shorten > = 2pt]
  ($(S/baseline.north east |- S/P.north west) + (0, -1ex)$)
  to node[midway, fill=white] {\phantom{$\Dist\mathcal{O}$}}
  ($(S/P.north west) + (0, -1ex)$);
  
  \draw[-Stealth, shorten < = 2pt, shorten > = 2pt]
  ($(S/baseline.south east |- S/P.south west) + (0, +1ex)$)
  to node[midway, fill=white] {\phantom{$\Dist\mathcal{O}$}}
  ($(S/P.south west) + (0, +1ex)$);
  
  \draw[-Stealth, shorten < = 2pt, shorten > = 2pt]
  (S/baseline) to node[midway, fill=white] {$\Dist\mathcal{O}$} (S/P);

  \node[draw, rounded corners, inner sep = 6pt, right = 5em of S/P]
  (S/S) {$\Chan{S}$};
  
  \draw[-Stealth, shorten < = 2pt, shorten > = 2pt]
  ($(S/P.north east |- S/S.north west) + (0, -1ex)$)
  to node[midway, fill=white] {\phantom{$\Dist\mathcal{Y}$}}
  ($(S/S.north west) + (0, -1ex)$);
  
  \draw[-Stealth, shorten < = 2pt, shorten > = 2pt]
  ($(S/P.south east |- S/S.south west) + (0, +1ex)$)
  to node[midway, fill=white] {\phantom{$\Dist\mathcal{Y}$}}
  ($(S/S.south west) + (0, +1ex)$);
  
  \draw[-Stealth, shorten < = 2pt, shorten > = 2pt]
  (S/P) to node[midway, fill=white] {$\Dist\mathcal{Y}$} (S/S);
  
  \node[inner sep = 6pt, right = 5em of S/S] (S/output)
  {$\phantom{\Chan{P}}$};
  
  \draw[-Stealth, shorten < = 2pt, shorten > = 2pt]
  ($(S/S.north east |- S/output.north west) + (0, -1ex)$)
  to node[midway, fill=white] {\phantom{$\Dist\mathcal{Z}$}}
  ($(S/output.north west) + (0, -1ex)$);
  
  \draw[-Stealth, shorten < = 2pt, shorten > = 2pt]
  ($(S/S.south east |- S/output.south west) + (0, +1ex)$)
  to node[midway, fill=white] {\phantom{$\Dist\mathcal{Z}$}}
  ($(S/output.south west) + (0, +1ex)$);
  
  \draw[-Stealth, shorten < = 2pt, shorten > = 2pt]
  (S/S) to node[midway, fill=white] {$\Dist\mathcal{Z}$} (S/output);

  \draw[dotted, color = HighlightColor, very thick]
  ($(P/output.north west) + (-1em, +1ex)$) to ($(S/S.south west) + (-1em, -1ex)$);
\end{tikzpicture}%
    \vspace{-3ex}%
    \caption{Static perspective.}%
    \label{fig:static-diagram}%
    \hspace*{\fill}%
  \end{subfigure}

  \begin{subfigure}{\columnwidth}
    \begin{tikzpicture}
  \node[] (P/input) at (0, 0) {$\Dist\mathcal{X}$};
  \node[draw, rounded corners, inner sep = 6pt, right = 2em of P/input]
  (P/baseline) {$\Chan{B}$};

  \draw[-Stealth, shorten < = 2pt, shorten > = 2pt]
  ($(P/input.north east |- P/baseline.north west) + (0, -1ex)$)
  to ($(P/baseline.north west) + (0, -1ex)$);
  
  \draw[-Stealth, shorten < = 2pt, shorten > = 2pt]
  (P/input) to (P/baseline);

  \draw[-Stealth, shorten < = 2pt, shorten > = 2pt]
  ($(P/input.south east |- P/baseline.south west) + (0, +1ex)$)
  to ($(P/baseline.south west) + (0, +1ex)$);

  \node[draw, rounded corners, inner sep = 6pt, right = 5em of P/baseline]
  (P/P) {$\Chan{P}$};

  \draw[-Stealth, shorten < = 2pt, shorten > = 2pt]
  ($(P/baseline.north east |- P/P.north west) + (0, -1ex)$)
  to node[midway, fill=white] {\phantom{$\Dist\mathcal{O}$}}
  ($(P/P.north west) + (0, -1ex)$);
  
  \draw[-Stealth, shorten < = 2pt, shorten > = 2pt]
  ($(P/baseline.south east |- P/P.south west) + (0, +1ex)$)
  to node[midway, fill=white] {\phantom{$\Dist\mathcal{O}$}}
  ($(P/P.south west) + (0, +1ex)$);
  
  \draw[-Stealth, shorten < = 2pt, shorten > = 2pt]
  (P/baseline) to node[midway, fill=white] {$\Dist\mathcal{O}$} (P/P);

  \node[inner sep = 6pt, right = 5em of P/P] (P/output)
  {$\phantom{\Chan{P}}$};
  
  \draw[-Stealth, shorten < = 2pt, shorten > = 2pt]
  (P/P) to node[midway, fill=white] {\textsc{yes}} (P/output);

  \node[below = 3ex of P/input] (S/input) {$\Dist\mathcal{X}$};
  \node[draw, rounded corners, inner sep = 6pt, right = 2em of S/input]
  (S/baseline) {$\Chan{B}$};

  \draw[-Stealth, shorten < = 2pt, shorten > = 2pt]
  ($(S/input.north east |- S/baseline.north west) + (0, -1ex)$)
  to ($(S/baseline.north west) + (0, -1ex)$);
  
  \draw[-Stealth, shorten < = 2pt, shorten > = 2pt]
  (S/input) to (S/baseline);

  \draw[-Stealth, shorten < = 2pt, shorten > = 2pt]
  ($(S/input.south east |- S/baseline.south west) + (0, +1ex)$)
  to ($(S/baseline.south west) + (0, +1ex)$);

  \node[draw, rounded corners, inner sep = 6pt, right = 5em of S/baseline]
  (S/P) {$\Chan{P}$};

  \draw[-Stealth, shorten < = 2pt, shorten > = 2pt]
  ($(S/baseline.north east |- S/P.north west) + (0, -1ex)$)
  to node[midway, fill=white] {\phantom{$\Dist\mathcal{O}$}}
  ($(S/P.north west) + (0, -1ex)$);
  
  \draw[-Stealth, shorten < = 2pt, shorten > = 2pt]
  ($(S/baseline.south east |- S/P.south west) + (0, +1ex)$)
  to node[midway, fill=white] {\phantom{$\Dist\mathcal{O}$}}
  ($(S/P.south west) + (0, +1ex)$);
  
  \draw[-Stealth, shorten < = 2pt, shorten > = 2pt]
  (S/baseline) to node[midway, fill=white] {$\Dist\mathcal{O}$} (S/P);

  \node[draw, rounded corners, inner sep = 6pt, right = 5em of S/P]
  (S/S) {$\Chan{S}$};
  
  \draw[-Stealth, shorten < = 2pt, shorten > = 2pt]
  ($(S/P.north east |- S/S.north west) + (0, -1ex)$)
  to node[midway, fill=white] {\phantom{$\Dist\mathcal{Y}$}}
  ($(S/S.north west) + (0, -1ex)$);
  
  \draw[-Stealth, shorten < = 2pt, shorten > = 2pt]
  ($(S/P.south east |- S/S.south west) + (0, +1ex)$)
  to node[midway, fill=white] {\phantom{$\Dist\mathcal{Y}$}}
  ($(S/S.south west) + (0, +1ex)$);
  
  \draw[-Stealth, shorten < = 2pt, shorten > = 2pt]
    (S/P) to node[midway, fill=white] {$\color{AlertColor}\Dist\mathcal{Y}$} (S/S);
  
  \node[inner sep = 6pt, right = 5em of S/S] (S/output)
  {$\phantom{\Chan{P}}$};

  \draw[-Stealth, shorten < = 2pt, shorten > = 2pt]
  (S/S) to node[midway, fill=white] {\textsc{no}} (S/output);

  \draw[dotted, color = HighlightColor, very thick]
  ($(P/output.north west) + (-1em, +1ex)$) to ($(S/S.south west) + (-1em, -1ex)$);
\end{tikzpicture}%
    \vspace{-3ex}%
    \caption{%
        Traditional dynamic perspective, when the output observed by adversary $\TrueAdv$ is \textsc{yes} 
        and the output observed by $\NoiseAdv$ is \textsc{no}.%
    }%
    \label{fig:traditional-dynamic-diagram}%
    \hspace*{\fill}%
  \end{subfigure}

  \begin{subfigure}{\columnwidth}
    \begin{tikzpicture}
  \node[] (P/input) at (0, 0) {$\Dist\mathcal{X}$};
  \node[draw, rounded corners, inner sep = 6pt, right = 2em of P/input]
  (P/baseline) {$\Chan{B}$};

  \draw[-Stealth, shorten < = 2pt, shorten > = 2pt]
  ($(P/input.north east |- P/baseline.north west) + (0, -1ex)$)
  to ($(P/baseline.north west) + (0, -1ex)$);
  
  \draw[-Stealth, shorten < = 2pt, shorten > = 2pt]
  (P/input) to (P/baseline);

  \draw[-Stealth, shorten < = 2pt, shorten > = 2pt]
  ($(P/input.south east |- P/baseline.south west) + (0, +1ex)$)
  to ($(P/baseline.south west) + (0, +1ex)$);

  \node[draw, rounded corners, inner sep = 6pt, right = 5em of P/baseline]
  (P/P) {$\Chan{P}$};

%
  
  \draw[-Stealth, shorten < = 2pt, shorten > = 2pt]
    (P/baseline) to node[midway, fill=white] {\textsc{no}} (P/P);

  \node[inner sep = 6pt, right = 5em of P/P] (P/output)
  {$\phantom{\Chan{P}}$};
  
  \draw[-Stealth, shorten < = 2pt, shorten > = 2pt]
  (P/P) to node[midway, fill=white] {\textsc{yes}} (P/output);

  \node[below = 3ex of P/input] (S/input) {$\Dist\mathcal{X}$};
  \node[draw, rounded corners, inner sep = 6pt, right = 2em of S/input]
  (S/baseline) {$\Chan{B}$};

  \draw[-Stealth, shorten < = 2pt, shorten > = 2pt]
  ($(S/input.north east |- S/baseline.north west) + (0, -1ex)$)
  to ($(S/baseline.north west) + (0, -1ex)$);
  
  \draw[-Stealth, shorten < = 2pt, shorten > = 2pt]
  (S/input) to (S/baseline);

  \draw[-Stealth, shorten < = 2pt, shorten > = 2pt]
  ($(S/input.south east |- S/baseline.south west) + (0, +1ex)$)
  to ($(S/baseline.south west) + (0, +1ex)$);

  \node[draw, rounded corners, inner sep = 6pt, right = 5em of S/baseline]
  (S/P) {$\Chan{P}$};

  %
  
  \draw[-Stealth, shorten < = 2pt, shorten > = 2pt]
    (S/baseline) to node[midway, fill=white] {\textsc{no}} (S/P);

  \node[draw, rounded corners, inner sep = 6pt, right = 5em of S/P]
  (S/S) {$\Chan{S}$};
  
  %
  
  \draw[-Stealth, shorten < = 2pt, shorten > = 2pt]
  (S/P) to node[midway, fill=white] {\textsc{yes}} (S/S);
  
  \node[inner sep = 6pt, right = 5em of S/S] (S/output)
  {$\phantom{\Chan{P}}$};

  \draw[-Stealth, shorten < = 2pt, shorten > = 2pt]
  (S/S) to node[midway, fill=white] {\textsc{no}} (S/output);

  \draw[dotted, color = HighlightColor, very thick]
  ($(P/output.north west) + (-1em, +1ex)$) to ($(S/S.south west) + (-1em, -1ex)$);
\end{tikzpicture}%
    \vspace{-3ex}%
    \caption{%
        (Multi-step) dynamic perspective, from the point of view of the data engineering team, 
        when the real, secret value is \textsc{no}.%
    }%
    \label{fig:dynamic-diagram-ex2}%
    \hspace*{\fill}%
  \end{subfigure}
  \caption{%
      The semantics of analyses related to adversaries 
      $\TrueAdv$ (upper diagram in each figure) and
      $\NoiseAdv$ (lower diagram) from Example~\ref{ex:motivation:query}.
      In each figure, the vertical dotted line delimits the section of the pipeline that is common to both adversaries.
      Multiple arrows indicate a ``static'' input  (respectively, output), 
      meaning that the system takes as input (produces as output) a probability distribution over all possible values.
      A single arrow indicates a concrete, single execution of the system.%
  }%
  \label{fig:example-2}
\end{figure}
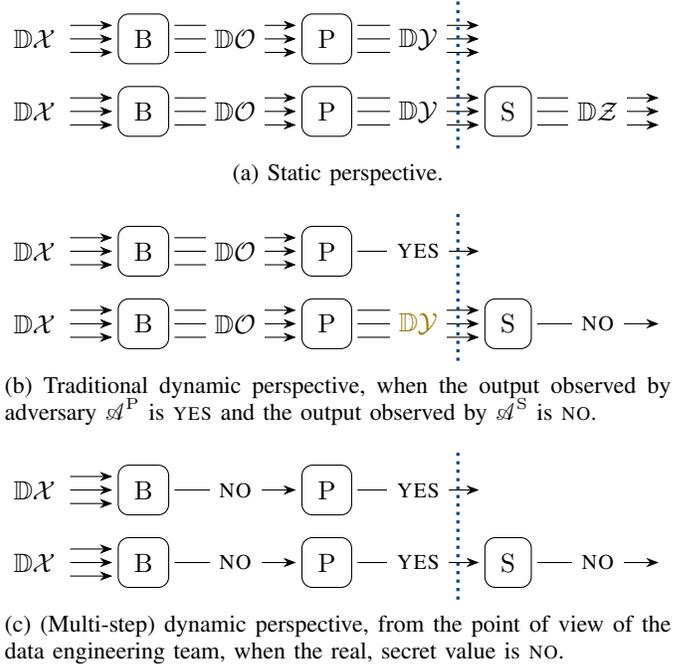
  
To explore in more detail why the traditional definition of dynamic leakage is not always suitable for comparing systems,
let us contrast it to the static perspective. 
Figures~\ref{fig:static-diagram}~and~\ref{fig:traditional-dynamic-diagram} 
illustrate what the semantics of, respectively, the static and (traditional) dynamic analysis would be
with respect to the adversaries from Example~\ref{ex:motivation:query}.
Recall that adversary $\TrueAdv$ observes the output of the system after the first post-processing step $\Chan{P}$ 
(represented in the upper diagram in each figure),
while adversary $\NoiseAdv$ observes the output of the system after the second post-processing step $\Chan{S}$ (lower diagram).
The first component of the system in both diagrams is $\Chan{B}$,
taking as input a secret value from a set $\mathcal{X}$ and producing as output a value from a set $\mathcal{O}$,
which could, in principle, have been observed by the data engineering team.
In this case the component $\Chan{B}$ is simply the program that reports the original, unaltered answer.

Notice that, in the static perspective (Figure~\ref{fig:static-diagram}),
the section from $\Dist\mathcal{X}$ up to $\Dist\mathcal{Y}$ in the lower diagram (adversary $\NoiseAdv$)
exactly matches the upper diagram (adversary $\TrueAdv$).
That is, the semantics of both analyses up to the output of system $\Chan{P}$ 
---~the section common to both analyses~---
takes into account every possible execution path.
This indicates that the results of the two analyses in the static perspective are comparable.

In contrast, in the (traditional) dynamic perspective (Figure~\ref{fig:traditional-dynamic-diagram}),
the section from $\Dist\mathcal{X}$ up to $\Dist\mathcal{Y}$ in the lower diagram ($\NoiseAdv$)
does \emph{not} match with the upper diagram ($\TrueAdv$).
That is, 
while the analysis of adversary $\TrueAdv$ (upper diagram) takes into account a single output \textsc{yes} from system $\Chan{P}$,
the analysis of $\NoiseAdv$ (lower diagram) considers both outputs \textsc{yes} and \textsc{no} from $\Chan{P}$.
Therefore, we argue that these results are \emph{not} comparable.

The intuition in this case is that 
the traditional definition of dynamic vulnerability is parameterised by only one explicit state of knowledge:
the adversary's belief after observing the final output.
However, implicitly, this definition evaluates the effect of the actions taken by the adversary,
derived from their belief, on every possible baseline
---~i.e., every possible state of knowledge 
that could have been obtained in any intermediate step prior to the output observed by the adversary~--- 
as an expectation.
The problem with this is then that 
the set of states of knowledge considered implicitly in the analysis of $\NoiseAdv$
may include states of knowledge that were not considered in the analysis of $\TrueAdv$.
Indeed, this is what happens in Example~\ref{ex:motivation:query},
when considering the traditional definition of dynamic vulnerability:
the result obtained for adversary $\NoiseAdv$ 
considers the intermediate state of knowledge obtained given output \textsc{no} from $\Chan{P}$ as one of the possible baselines,
whereas this state of knowledge is not considered in the result of $\TrueAdv$.

The reason why no issues arise in the static perspective (Figure~\ref{fig:static-diagram})
is that the semantics of the static perspective not only takes into account (implicitly) every possible baseline,
it also considers (explicitly) every possible belief that the adversary could construct.
Consequently, 
the set of states of knowledge considered in the analysis without component $\Chan{S}$ (upper diagram)
is exactly the same used in the analysis that includes $\Chan{S}$ (lower diagram),
up to the output of component $\Chan{P}$.%

A similar argument follows for Example~\ref{ex:motivation:shannon},
in which we obtained negative leakage.
In that case we would model $\Chan{P}$ as the deterministic program mapping the 2-bit secret to either $a$ or $b$,
and represent the adversary's prior knowledge as a post-processing step $\ChanLeaksNothing$ 
that maps every output from $\Chan{P}$ to the same value $\bot$ that does not reveal anything.
Then, there is only one posterior knowledge obtained from $\ChanLeaksNothing$, which is exactly the prior knowledge.
By doing this, we can see that the set of states of knowledge considered in the analysis a priori 
includes states of knowledge not considered in the analysis a posteriori.

Finally, we concluded Example~\ref{ex:motivation:query} 
by noting that there is at least one dynamic path $\textsc{no} \xrightarrow{\Chan{P}} \textsc{yes} \xrightarrow{\Chan{S}} \textsc{no}$
that shows that it is not necessarily safe to compose systems $\Chan{P}\Cascade\Chan{S}$. 
This comparison is represented by the diagrams in Figure~\ref{fig:dynamic-diagram-ex2}.
Notice that this time 
the prefix of the lower diagram ($\NoiseAdv$) matches with the upper diagram ($\TrueAdv$),
and so the results can be compared.%

\section{Preliminaries}\label{sec:prelim}

In this paper we employ the mathematical framework of \emph{quantitative information flow} (QIF)~\cite{QIF-Book:Alvim2020}
to model (probabilistic) systems and characterise Bayesian adversarial attacks against them.
In the rest of this section we formalise systems,
adversaries with different goals and capabilities,
and vulnerability/uncertainty measures under both the static and the dynamic perspectives,
using the framework of QIF.

\subsection{Modelling systems and knowledge}
In QIF a system is modelled as an \emph{information-theoretic channel} $\Chan{C} : \CalX \to \Dist\CalY$ 
that maps a (secret) input $x \in \CalX$ to a (public) output $y \in \CalY$ according to some distribution in $\Dist\CalY$. 
(Given a set $\CalY$, we denote by $\Dist\CalY$ the set of all probability distributions over $\CalY$.)
We usually assume that $\CalX$ and $\CalY$ are discrete,
in which case we write $\Chan{C}$ as a stochastic matrix with rows labeled by the possible inputs
and columns labeled by the possible outputs,
so that $\Chan{C}_{x, y}$ is the probability of channel $\Chan{C}$ outputing value $y \in \CalY$ when its input is $x \in \CalX$.

By combining a \emph{prior} knowledge $\Prior{\pi} : \Dist\CalX$ 
with the channel $\Chan{C}$ that models the system of interest,
one can compute a joint distribution $\Joint{\Prior{\pi}}{\Chan{C}} : \Dist(\CalX \times \CalY)$ as
\begin{equation}\label{eq:joint}
  \Joint{\Prior{\pi}}{\Chan{C}}[x, y] \EqDef \Prior{\pi}_{x} \Chan{C}_{x, y}
  \quad\forall x \in \CalX \textnormal{ and } y \in \CalY
\end{equation}
Then, from the joint distribution,
one can derive a marginal distribution $\Outer{\Prior{\pi}}{\Chan{C}} : \Dist\mathcal{Y}$ on the possible observations $y \in \CalY$,
which we call the \emph{outer} distribution, as
\begin{equation}\label{eq:outer}
  \Outer{\Prior{\pi}}{\Chan{C}}[y] \EqDef
  \sum_{x \in \CalX} \Joint{\Prior{\pi}}{\Chan{C}}[x, y]
  \quad\forall y \in \CalY
\end{equation}
From the joint and outer distributions,
one can construct posterior distributions $\Posterior{\Prior{\pi}}{\Chan{C}}[X][y] : \Dist\mathcal{X}$ on the values $\CalX$ 
given each $y \in \CalY$, 
which we also call the \emph{inner} distributions, as
\begin{equation}\label{eq:posterior}
  \Posterior{\Prior{\pi}}{\Chan{C}}[x][y] \EqDef
  \frac
  {\Joint{\Prior{\pi}}{\Chan{C}}[x, y]}
  {\Outer{\Prior{\pi}}{\Chan{C}}[y]}
  \quad\forall x \in \CalX
\end{equation}
Finally, when considering a static perspective,
we write $\Hyper{\Prior{\pi}}{\Chan{C}}$ to denote a \emph{hyper-distribution},
a distribution on distributions on $\CalX$, (customarily) abstracting away the actual output labels.

Channels can be composed in many ways. 
Relevant to this paper is the sequential composition of two channels $\Chan{C}$ and $\Chan{D}$,
termed \emph{cascade} and written $\Chan{C}\Cascade\Chan{D}$.
The cascade of a channel $\Chan{C}$ followed by a channel $\Chan{D}$ is defined as follows:

\begin{definition}[{Cascade~\cite[Definition 4.18]{QIF-Book:Alvim2020}}]%
  \label{def:cascade}
  Given two channel (matrices) $\Chan{C} : \CalX \to \Dist\CalY$ and $\Chan{D} : \CalY \to \Dist\CalZ$,
  the cascade $\Chan{C}\Cascade\Chan{D}$ of channels $\Chan{C}$ and $\Chan{D}$ is the channel matrix of type $\CalX \to \Dist\CalZ$,
  obtained by ordinary matrix multiplication.
\end{definition}

\subsection{Modelling adversarial goals}

Adversaries can have different goals:
they may be interested in guessing the exact secret,
guessing the secret within a certain number of tries,
they may deem some secret values more valuable than others, etc.
Each goal can be captured in the $g$-leakage framework~\cite{g-leakage:Alvim2012} as a suitable gain function $\Gain : \CalW \times \CalX \to \Reals$,
which measures the benefit to the adversary when they take an \emph{action} $w \in \CalW$ and the secret value is $x \in \CalX$.
For instance, an adversary who wants to guess the secret in one try is modelled by the identity gain function
\begin{equation}\label{eq:gain-bayes}
  \GainBayes(w, x) \EqDef 1 \textnormal{ if } w = x \textnormal{ else } 0
\end{equation}

As we shall see (Section~\ref{sec:vuln}),
gain functions are employed in the definition of vulnerability measures.
Dually, entropy measures can be defined in terms of loss functions of type $\ell : \CalW \times \CalX \to \Reals_{\geq 0} \cup \{\infty\}$, 
modelling the attacker's loss upon taking a particular action.  
For instance, Shannon entropy corresponds to an adversary who is trying to guess the \qm{true} distribution on $\CalX$ 
(that is, each action is a distribution).
Therefore, it can be modelled by the following loss function:
\begin{equation}\label{eq:loss-shannon}
  \LossShannon(w, x) \EqDef - \log_{2} w_{x}
\end{equation}
defining $-\log_{2} 0 = \infty$,\footnote{%
  For more examples, see \cite[\S3.2]{QIF-Book:Alvim2020}.%
}
noting that $\LossShannon$ corresponds to Shannon's definition of information content, measured in bits.

\begin{remark}\label{rmk:bound}
  We note that, albeit in~\cite{QIF-Book:Alvim2020} loss functions are defined disallowing $\infty$,
  we find it more useful to include $\infty$, to permit results on Shannon entropy.
  This is not a problem, as we still require uncertainty measures to be bounded.
\end{remark}

\subsection{Current definitions of vulnerability, uncertainty, and
  information leakage under static and dynamic perspectives}%
\label{sec:vuln}

The threat to a secret caused by an adversary is given by a \emph{$\Gain$-vulnerability measure $\Vg(p)$},
defined as the expected gain to the adversary by taking an optimal action:
\begin{equation}\label{eq:vg-dist}
  \Vg(p) \EqDef \max_{w \in \CalW} \sum_{x \in \CalX} p_{x}\, \Gain(w, x)
\end{equation}
replacing $\max$ with $\sup$ if $\CalW$ is infinite.
Dually, the adversary's uncertainty about the secret is given by an \emph{$\Loss$-uncertainty measure $\Ul(p)$},
parameterised by a loss function $\Loss$:
\begin{equation}\label{eq:ul-dist}
  \Ul(p) \EqDef \min_{w \in \CalW} \sum_{x \in \CalX} p_{x}\, \Loss(w, x)
\end{equation}
replacing $\min$ with $\inf$ when $\CalW$ is infinite.
For example, Shannon entropy can be obtained from \eqref{eq:loss-shannon} as
\begin{equation}\label{eqn:ul_shannon}
    \Ul[\LossShannon](p) = \inf_{w \in \CalW} \sum_{x \in \CalX} p_{x}\, (- \log_{2} w_{x})
\end{equation}
which is minimised exactly when $w = p$, and equals $H(p)$.

We then define the amount of information available to the adversary before and after the system is run,
and the amount of information leaked from the system's execution as follows:

\begin{definition}[Prior measurements]
  \label{def:prior-measures}
  Given a prior knowledge modelled by a probability distribution $\Prior{\pi} : \Dist\CalX$,
  the \emph{prior} $\Gain$-vulnerability is $\Vg(\Prior{\pi})$.
  Dually, the \emph{prior} $\Loss$-uncertainty is $\Ul(\Prior{\pi})$.
\end{definition}
  
\begin{definition}[Traditional dynamic posterior measurements]
  \label{def:post-dyn-measures}
  The \emph{dynamic posterior} $\Gain$-vulnerability 
  with respect to the adversary's posterior knowledge $\Posterior{\Prior{\pi}}{\Chan{C}}[X][y]$,
  constructed by observing an execution of the system $\Chan{C}$,
  is $\Vg(\Posterior{\Prior{\pi}}{\Chan{C}}[X][y])$.
  Dually, the \emph{dynamic posterior} $\Loss$-uncertainty is $\Ul(\Posterior{\Prior{\pi}}{\Chan{C}}[X][y])$.
\end{definition}

\begin{definition}[Traditional dynamic leakage]
  \label{def:dyn-leak}
  Let $\Prior{\pi}$ be the adversary's prior knowledge.
  The amount of information that leaks from the execution of a system $\Chan{C} : \mathcal{X} \to \Dist\mathcal{Y}$ 
  upon yielding an output $y \in \mathcal{Y}$ 
  is defined (additively) in terms of prior $\Gain$-vulnerability and posterior $\Gain$-vulnerability as
  \begin{equation*}
    \LeakDyn{\Posterior{\Prior{\pi}}{\Chan{C}}[X][y])}{\Prior{\pi}}
    \EqDef
    \Vg(\Posterior{\Prior{\pi}}{\Chan{C}}[X][y]) -
    \Vg(\Prior{\pi})
  \end{equation*}
  When considering the attacker's $\Loss$-uncertainty about the secret,
  we flip the roles of the prior and posterior measures. That is,
  \begin{equation*}
    \LeakDyn[\Loss]{\Posterior{\Prior{\pi}}{\Chan{C}}[X][y])}{\Prior{\pi}}
    \EqDef
    \Ul(\Prior{\pi}) -
    \Ul(\Posterior{\Prior{\pi}}{\Chan{C}}[X][y])
  \end{equation*}
\end{definition}

Now, under a static perspective, 
one might be interested in either the expected or worst case scenarios.
This is defined (overloading $\Vg$ and $\Ul$) as follows,
taking into account all possible outputs that could be produced from the system:

\begin{definition}[Static posterior measurements]
  \label{def:post-static-measures}
  Let $\Prior{\pi} : \Dist\mathcal{X}$ be the adversary's prior knowledge 
  and $\Chan{C} : \mathcal{X} \to \Dist\mathcal{Y}$ the channel modelling a system.
  The static posterior $\Gain$-vulnerability is defined in the expected- and max-case as, respectively,
  \begin{align*}
    \Vg\Hyper{\Prior{\pi}}{\Chan{C}}
    &\EqDef
      \sum_{y \in \CalY} \Outer{\Prior{\pi}}{\Chan{C}}[y]\,
      \Vg(\Posterior{\Prior{\pi}}{\Chan{C}}[X][y])
    \\
    \VgMax\Hyper{\Prior{\pi}}{\Chan{C}}
    &\EqDef
      \max_{y \in \CalY} \Vg(\Posterior{\Prior{\pi}}{\Chan{C}}[X][y])
  \end{align*}
  Dually, the static posterior $\Loss$-uncertainty is defined in the expected- and min-case as, respectively,
  \begin{align*}
    \Ul\Hyper{\Prior{\pi}}{\Chan{C}}
    &\EqDef
      \sum_{y \in \CalY} \Outer{\Prior{\pi}}{\Chan{C}}[y]\,
      \Ul(\Posterior{\Prior{\pi}}{\Chan{C}}[X][y])
    \\
    \UlMin\Hyper{\Prior{\pi}}{\Chan{C}}
    &\EqDef
      \min_{y \in \CalY} \Ul(\Posterior{\Prior{\pi}}{\Chan{C}}[X][y])
  \end{align*}
\end{definition}

\begin{definition}[Static leakage]
  \label{def:static-leak}
  The static $\Gain$-leakage is defined in the expected- and max-case as, respectively,
  \begin{align*}
    \Leak{\Chan{C}}{\Prior{\pi}}
    &\EqDef
    \Vg\Hyper{\Prior{\pi}}{\Chan{C}} - \Vg(\Prior{\pi})
    \\
    \LeakMax{\Chan{C}}{\Prior{\pi}}
    &\EqDef
      \VgMax\Hyper{\Prior{\pi}}{\Chan{C}} - \Vg(\Prior{\pi})
  \end{align*}
  When considering the attacker's $\Loss$-uncertainty about the secret,
  we flip the roles of the prior and posterior measures. That is,
  \begin{align*}
    \Leak[\Loss]{\Chan{C}}{\Prior{\pi}}
    &\EqDef \Ul(\Prior{\pi}) - \Ul\Hyper{\Prior{\pi}}{\Chan{C}} 
    \\
    \LeakMin[\Loss]{\Chan{C}}{\Prior{\pi}}
    &\EqDef \Ul(\Prior{\pi}) - \UlMin\Hyper{\Prior{\pi}}{\Chan{C}}
  \end{align*}
\end{definition}

Finally, we recall the notion of refinement, 
which tells us when one channel is at least as safe (i.e., leaks no more) than another,
and coincides with the cascading of the original channel and a post-processing channel (Definition~\ref{def:cascade}):

\begin{definition}[Refinement]
  \label{def:refinement}
  Given channels $\Chan{C} : \CalX \to \Dist\CalY$ and $\Chan{D} : \CalX \to \Dist\CalZ$,
  we say that $\Chan{C}$ is refined by $\Chan{D}$, written $\Chan{C} \RefinedBy \Chan{D}$,
  whenever $\Vg\Hyper{\Prior{\pi}}{\Chan{C}} \geq \Vg\Hyper{\Prior{\pi}}{\Chan{D}}$ 
  for all priors $\pi$ and gain functions $g$;
  equivalently, whenever $\Chan{D}$ can be written as a post-processing of $\Chan{C}$ 
  (i.e., as $\Chan{C}\Cascade\Chan{R}$ for some channel $\Chan{R}$).
\end{definition}

\section{Strategy-Based Formalisation of Leakage}\label{sec:formalisation}

The examples from Sections~\ref{sec:motivation:mono} and \ref{sec:motivation:dpi}
show that the traditional definition of dynamic leakage (Definition~\ref{def:dyn-leak}) fails in significant ways.
In particular,
it might result in negative leakage even when it is clear that information has flowed (Example~\ref{ex:motivation:shannon});
or it might lead to the conclusion that post-processing is safe,
even though there are reasonable circumstances where this does not hold (Example~\ref{ex:motivation:query}); 
it might also lead to the opposite conclusion: that post-processing is not safe,
even though a more careful analysis proves this to be incorrect
(we illustrate this case later in Section~\ref{sec:case-studies}).
 
As identified in Section~\ref{sec:motivation},
these correspond to the properties of monotonicity and DPI, respectively.
It turns out that, in general, monotonicity and DPI cannot be guaranteed when considering a dynamic perspective 
(as Example~\ref{ex:motivation:query} indicates),
but there are particular scenarios in which these axioms always hold.
With this in mind, 
our goal is to provide a proper formalisation of dynamic leakage that satisfies these fundamental properties when possible,
and that yields sound results when such properties are not guaranteed to hold. 
In addition,
we require that our formalisation recovers the static perspective when taking into account all possible observations.

In what follows we present a novel formalisation of leakage which allows for both dynamic and static leakage analysis.  
We provide intuition for our measure in Section~\ref{sec:st-knowledge1} 
and in Section~\ref{sec:st-leakage} we redefine $\Gain$-vulnerability,
$\Loss$-uncertainty and information leakage in terms of adversarial strategies guiding attackers' decisions.  
In Section~\ref{sec:motivation:contd} we revisit the motivating examples from Section~\ref{sec:motivation},
through the lens of our new formalisation.
In Section~\ref{sec:knowledge-ord} we discuss the limitations of our formalisation,
in that it requires a notion of ordering between states of knowledge.
In Section~\ref{sec:soundness-dyn} 
we demonstrate desirable properties of the strategy-based formulation that hold under a dynamic perspective.
Finally, in Section~\ref{sec:soundness-static}
we show that our formalisation coincides with the traditional definitions adopted in QIF when considering a static perspective.

\subsection{Defining leakage via actions: A thought experiment}\label{sec:st-knowledge1}

In Example \ref{ex:motivation:shannon} 
the dynamic leakage computed using the traditional formalisation of Definition~\ref{def:dyn-leak},
with the attacker modelled by the loss function $\LossShannon$~\eqref{eq:loss-shannon},
produces a negative value even though we know that information flow has occurred.
In that example the prior was $\OrdSet{\sfrac{7}{8}, \sfrac{1}{16}, \sfrac{1}{16}}$,
and the computed posteriors were $p_{X \mid a} = \OrdSet{1, 0, 0}$ and $p_{X \mid b} = \OrdSet{0, \sfrac{1}{2}, \sfrac{1}{2}}$, 
occurring with probability $\sfrac{7}{8}$ and $\sfrac{1}{8}$, respectively.
The first posterior (observation $a$) has all weight on secret $x_{0}$,
meaning that the adversary is completely certain,
and so it leads to a positive dynamic leakage of $0.67$. 
In contrast, the second posterior (observation $b$),
although less likely to be observed in practice,
yields a negative dynamic leakage value of $-0.33$.

Recall from Section~\ref{sec:explain-traditional} that the issue in this case is that
the set of states of knowledge considered in the analyses a priori and a posteriori are not compatible.
To resolve this issue, consider the following \textbf{thought experiment}:
imagine that there are two colluding adversaries:
Bob, whose belief about the secret is formed before any observation is made (a prior),
and Alice, whose belief results from a Bayesian update after observing a run of the system (a posterior),
both of whom have the same objective (a shared gain function). 
Bob can compute his maximum expected gain (with respect to his belief) using the shared gain function; 
this results in a set of actions which are optimal from his point of view.
Bob communicates his optimal actions to the more knowledgeable adversary, Alice, 
who fixes the posterior she computed as the baseline distribution on secret values according to which gain will be measured.
She then has two options:
gauge the gain using Bob's actions (chosen according to \emph{his} belief)
or using her actions (chosen according to \emph{her} belief, i.e., the baseline).

The difference between the expected gain of Bob's actions versus Alice's actions,
both computed against the most refined knowledge available as a baseline (Alice's posterior),
represents the advantage of choosing optimal actions according to more refined beliefs 
versus choosing them according to less refined ones.
It turns out that this difference is always non-negative
(Bob's actions will never yield a higher vulnerability than Alice's when the baseline is Alice's posterior)
and recovers static leakage under both expectation and max-case.

\subsection{Strategy-based dynamic g-leakage}\label{sec:st-leakage}

We now provide our definition for a dynamic leakage measure 
based on probability distributions over actions, which we term \emph{strategies}.  
Our strategy-based leakage measure relies on our knowledge of the order in which information flow occurs
(as when Alice has more knowledge than Bob in the thought experiment from Section~\ref{sec:st-knowledge1}),
capturing the idea that information flow corresponds to the difference in gain obtained by choosing actions 
---~i.e., defining a \emph{strategy}~---
based on a more refined state of knowledge about the secret value than on a less refined one.
The gain of such actions is then measured against the most refined state of knowledge about secrets available 
---~i.e., the \emph{baseline}.
We start by defining an adversarial strategy, obtained from the adversary's belief:

\begin{definition}[Adversarial strategy]\label{def:strategy}
  Given a gain function $\Gain : \CalW \times \CalX \to \Reals$
  modelling the adversary's goal and a probability distribution
  $q : \Dist\CalX$ representing the adversary's 
  belief about the
  secret, an optimal strategy $s$ is defined as a probability
  distribution over the actions $\CalW$ such that $s_{w^{*}} >
  0$ only if $w^{*}$ is an optimal action in the sense that it
  maximises the adversary's gain with respect to $\Gain$ and $q$; that
  is,
  \begin{equation*}
    w^{*} \in \argmax_{w \in \CalW} \sum_{x \in \CalX} q_{x}\, \Gain(w, x)
  \end{equation*}
  Analogously, for loss functions, replace $\argmax$ with $\argmin$:
  \begin{equation*}
    w^{*} \in \argmin_{w \in \CalW} \sum_{x \in \CalX} q_{x}\, \Loss(w, x)
  \end{equation*}
\end{definition}

\begin{remark}%
  \label{rmk:optimal-actions}%
  In this paper we assume that the set of optimal actions is finite 
  (but not necessarily the set of actions $\CalW$).%
  \footnote{We believe this is reasonable,
  as this is true for every case we have studied.}
\end{remark}

\Review{We note that there might be multiple strategies that are optimal according to the adversary's belief.
We write $\St(q)$ for the \emph{set} of all optimal strategies with respect to a belief $q$ and gain function $g$.
Then, taking $p : \Dist\mathcal{X}$ as the baseline,
the $\Gain$-vulnerability that results from any $s \in \St(q)$
can be defined as the expected gain of the optimal actions selected by $s$,
when secrets are distributed according to the baseline $p$:
\begin{equation}%
  \label{eq:st-vg/any}
  \sum_{w \in \mathcal{W}} s_{w} \sum_{x \in \mathcal{X}} p_{x}\, \Gain(w, x)
\end{equation}}

\Review{The adversary is, in principle, allowed to pick any strategy in $\St(q)$.
Nevertheless, we argue that, since all strategies in $\St(q)$ are equally optimal,
the adversary has no reason to favour one over another.
As such,
it seems reasonable to define dynamic $\Gain$-vulnerability (dually, $\Loss$-uncertainty) 
as an average over $s \in \St(q)$,
assigning the same probability to each $s$.
A caveat of this definition is that $\Card{\St(q)}$ may be infinite.
%
This is nonetheless solved by observing that 
the average over $\St(q)$ converges to the strategy-based $g$-vulnerability of a uniform strategy 
(i.e., uniform over all optimal actions), denoted $\St^{u}(q)$.
The next example illustrates this phenomenon:%
}

\begin{Example}{ex:avg-st}
  \Review{Let Bob's belief be $q = \OrdSet{\sfrac{1}{3}, \sfrac{1}{3}, \sfrac{1}{3}}$,
    Alice's belief after observing a run of the system be $p = \OrdSet{0, \sfrac{1}{2}, \sfrac{1}{2}}$,
    and their goal be modelled by the identity gain function $\GainBayes$ \eqref{eq:gain-bayes}.
    From Bob's point of view, all three secret values are optimal guesses,
    as per Definition~\ref{def:strategy} with $\mathcal{W} = \mathcal{X}$.
    The number of optimal strategies available for Bob is infinite, 
    but we can circumvent this by limiting the strategies that Bob can choose.%
  }

  \Review{Let the number of decimal places $n$ assigned by Bob to the probability of each action be 1. 
    The set of strategies available to Bob is 
    $\Set{\OrdSet{1, 0, 0}, \OrdSet{0.9, 0.1, 0}, \ldots}$.
    There are 66 strategies in total. 
    Among these, 11 assign probability $0$ to $x_{0}$, 
    and so, according to Alice's knowledge, lead to a vulnerability of $\sfrac{1}{2}$,
    as per \eqref{eq:st-vg/any}.
    Then, there are 10 strategies assigning probability $0.1$ to $x_{0}$,
    resulting in a vulnerability of $\sfrac{9}{20}$ each.
    The sum of the vulnerabilities is $
    \left(11 \cdot \sfrac{1}{2}\right) +
    \left(10 \cdot \sfrac{9}{20}\right) +
    \cdots +
    \left(1 \cdot 0 \right)
    = 22$.
    The average over the 66 strategies 
    ---~when picked at random~---
    then yields a vulnerability of $\sfrac{22}{66} = \sfrac{1}{3}$,
    exactly the same vulnerability if Bob chooses the uniform strategy.%
  }
\end{Example}

\Review{The convergence seen in Example~\ref{ex:avg-st} occurrs with any $n \geq 0$. \ifarxiv%
    For more details, see Appendix~\ref{sec:appendix-proofs}.
  \else%
    \ifanonymous%
      For more details, see the the accompanying supplementary material.
    \else
      For more details, see the accompanying technical report~\cite{techrep}.
    \fi
  \fi
  We thus define \emph{strategy-based} measurements 
  as the expected case over optimal actions chosen based on the adversary's belief $q$,
  when they pick one of such actions at random,%
  \footnote{%
    \Review{Alternatively, we could consider the supremum over all strategies,
    assuming that the adversary always picks the best strategy.
    We argue that this might underestimate leakage, as there is
    nothing indicating to the adversary which strategy to choose.
    For instance, in Example~\ref{ex:avg-st} leakage would be 0.}%
  } 
  and when secrets follow a more refined knowledge $p$, the baseline:
}

\Review{\begin{definition}[Strategy-based measurements]
  \label{def:st-measures}
  Let $q : \Dist\mathcal{X}$ be the adversary's belief
  and $p : \Dist\mathcal{X}$ be the baseline.
  Then, for any gain function $\Gain : \mathcal{W} \times \mathcal{X} \to \Reals$,
  the \emph{strategy-based $g$-vulnerability} caused by $q$ with respect to the baseline $p$ is
  \begin{equation*}
    \StVg{p}{q} \EqDef \sum_{w \in \mathcal{W}} \St^{u}(q)_{w} \sum_{x \in \mathcal{X}} p_{x}\, \Gain(w, x) 
  \end{equation*}
  Dually, 
  for any loss function $\Loss : \mathcal{W} \times \mathcal{X} \to \Reals_{\geq 0} \cup \Set{\infty}$,
  \begin{equation*}
    \StUl{p}{q} \EqDef \sum_{w \in \mathcal{W}} \St[\Loss]^{u}(q)_{w} \sum_{x \in \mathcal{X}} p_{x}\, \Loss(w, x)
  \end{equation*}
\end{definition}}



Finally, we formalise the strategy-based dynamic leakage based on the
difference between the vulnerability of the secret to the 
adversaries with more (posterior) and less (prior) refined beliefs.
This measures the knowledge gain to the
adversary after making an observation, and is defined (additively) as: 

\begin{definition}[Strategy-based dynamic leakage]
  \label{def:st-leak}
  For any gain function $\Gain : \CalW \times \CalX \to \Reals$,
  prior knowledge $\Prior{\pi} : \Dist\CalX$, channel
  $\Chan{C} : \CalX \to \Dist\CalY$ and output $y \in \CalY$, 
  the (additive) \emph{strategy-based dynamic $g$-leakage} $\StLeak{\Posterior{\Prior{\pi}}{\Chan{C}}[X][y]}{\Prior{\pi}}$ is defined as
  \begin{equation*}
      \StVg
      {\Posterior{\Prior{\pi}}{\Chan{C}}[X][y]}
      {\Posterior{\Prior{\pi}}{\Chan{C}}[X][y]}
      -
      \StVg
      {\Posterior{\Prior{\pi}}{\Chan{C}}[X][y]}
      {\Prior{\pi}}
  \end{equation*}
  For loss functions, we define $\StLeak[\Loss]{\Posterior{\Prior{\pi}}{\Chan{C}}[X][y]}{\Prior{\pi}}$ as
  \begin{equation*}
      \StUl
      {\Posterior{\Prior{\pi}}{\Chan{C}}[X][y]}
      {\Prior{\pi}}
      -
      \StUl
      {\Posterior{\Prior{\pi}}{\Chan{C}}[X][y]}
      {\Posterior{\Prior{\pi}}{\Chan{C}}[X][y]}
  \end{equation*}
\end{definition}

\subsection{Motivating examples revisited}\label{sec:motivation:contd}

We now revisit the motivating examples from Section~\ref{sec:motivation} 
and show how our definition of dynamic leakage resolves the identified issues of monotonicity and DPI.
We start by revisiting Example~\ref{ex:motivation:shannon},
to demonstrate that, from the point of view of an attacker,
observing an output of a system should never reduce information,
and thus a rational adversary would always rely on their posterior knowledge.

\begin{ExampleContinued}{ex:motivation:shannon}{ex:case-studies:shannon}
  Recall the deterministic program 
  that takes as input a 2-bit secret from the set of possible secrets $\CalX = \OrdSet{\textnormal{00}, \textnormal{10}, \textnormal{11}}$,
  and outputs $a$ if $X = \textnormal{00}$ or else $b$.
  This can be modelled by the deterministic channel $\Chan{B}$ below,
  which, in conjunction with the adversary's prior knowledge $\Prior{\pi} = p_{X}$,
  gives the posterior knowledge $\Hyper{\Prior{\pi}}{\Chan{B}}$:
  \begin{equation*}
    \begin{array}{c@{\;\;}c@{}c@{\;}c}
      \Prior{\pi}
      &&& \\[.5ex]
      \textnormal{00} & \ChanLDelim{3.5} & \frac{7}{8} & \ChanRDelim{3.5} \\[.5ex]
      \textnormal{10} && \frac{1}{16} & \\[.5ex]
      \textnormal{11} && \frac{1}{16} &
    \end{array}
    \hspace{-.75em}\Hyper{}{}\hspace{.15em}
    \begin{array}{c@{\;\;}c@{}cc@{\;}c}
      \Chan{B}
      && a & b & \\[.5ex]
      \textnormal{00} & \ChanLDelim{3.5} & 1 & 0 & \ChanRDelim{3.5} \\[.5ex]
      \textnormal{10} && 0 & 1 & \\[.5ex]
      \textnormal{11} && 0 & 1 &
    \end{array}
    \hspace{-.75em}=\hspace{.15em}
    \begin{array}{c@{\;\;}c@{}cc@{\;}c}
      \Hyper{\Prior{\pi}}{\Chan{B}}
      && \frac{7}{8}\,a & \frac{1}{8}\,b & \\[.5ex]
      \textnormal{00} & \ChanLDelim{3.5} & 1 & 0 & \ChanRDelim{3.5} \\[.5ex]
      \textnormal{10} && 0 & \frac{1}{2} & \\[.5ex]
      \textnormal{11} && 0 & \frac{1}{2} &
    \end{array}
  \end{equation*}

  Now, the adversary's prior strategy with respect to the loss function $\LossShannon$~\eqref{eq:loss-shannon} 
  will be to guess that the real distribution from which the secret was drawn is exactly the prior distribution $\Prior{\pi}$. 
  Therefore, the adversary's (uniform) strategy \emph{a priori},
  i.e., $\St[\LossShannon]^{u}(\Prior{\pi})$,
  is a \emph{point distribution} $\PointDist{\pi}$, 
  noting that in general
  \begin{equation}%
    \label{eq:point-dist}%
    \PointDist{x}_{y} \EqDef 1 \textnormal{ if } x = y \textnormal{ else } 0
  \end{equation}

  Then, program $\Chan{B}$ was executed and produced an output $b$. 
  We can compute the strategy-based $\LossShannon$-uncertainty of the prior $\Prior{\pi}$
  taking the posterior $\Posterior{\Prior{\pi}}{\Chan{B}}[X][b]$ as the baseline to get
  \begin{align*}
    &\;\;
      \StUl[\LossShannon]
      {\Posterior{\Prior{\pi}}{\Chan{B}}[X][b]}
      {\Prior{\pi}}
    \\ =
    &\;\;
    \sum_{w \in \CalX} \St[\LossShannon]^{u}(\Prior{\pi})_{w} \sum_{x \in \CalX}
      \Posterior{\Prior{\pi}}{\Chan{B}}[x][b]\,
      \LossShannon(w, x)
      \Comment{Definition \ref{def:st-measures}}
    \\ =
    &\;\;
      \sum_{x \in \CalX}
      \Posterior{\Prior{\pi}}{\Chan{B}}[x][b]\,
      (- \log_{2} \pi_{x})
      \Comment{$\St[\LossShannon]^{u}(\Prior{\pi}) = \PointDist{\pi}$,
        defined in \eqref{eq:point-dist}; \\[.75ex]
      Definition of $\LossShannon$ in \eqref{eq:loss-shannon}}
    \\ =
    &\;\; - \left(0\, \log_{2} \frac{7}{8} +
      \frac{1}{2}\, \log_{2} \frac{1}{16} +
      \frac{1}{2}\, \log_{2} \frac{1}{16}\right)
    =
    4
    \Comment{}
  \end{align*}
  which is greater than the strategy-based posterior uncertainty $
  \StUl[\LossShannon]
  {\Posterior{\Prior{\pi}}{\Chan{B}}[X][b]}
  {\Posterior{\Prior{\pi}}{\Chan{B}}[X][b]} 
  =
  \Ul[\LossShannon](\Posterior{\Prior{\pi}}{\Chan{B}}[X][b]) = 1.
  $
  Notice that this time leakage (Definition~\ref{def:st-leak}) is $4 - 1 = 3$,
  indicating that uncertainty decreased.
  This captures the fact that the adversary learned the first bit of the secret,
  and can be interpreted as the Kullback-Leibler (KL) divergence of the adversary's prior knowledge and the baseline,
  which we explain later.

  To flesh out why our formulation works best in this example, recall that,
  in the traditional definition of dynamic leakage (Definition~\ref{def:dyn-leak}),
  the adversary's prior state of knowledge plays the role of both belief and baseline
  when measuring the prior vulnerability of the secret.
  This is not suited for a dynamic analysis,
  as it considers that when the adversary picked ``00'' as the most likely secret a priori,
  there was indeed a $\sfrac{7}{8}$ probability of this choice being the right one,
  but the more refined knowledge obtained after observing the execution of the system
  reveals that the secret ``00'' was never possible in the first place.
  Our measure captures that by assessing prior vulnerability as the success of the prior strategy,
  based on the adversary's prior belief, against the baseline (the posterior),
  which correctly reflects that the optimal prior choice was not going to be optimal
  in the light of the more refined baseline available for this run.
  Hence, the actual leakage is positive.
\end{ExampleContinued}

Recall from Remark~\ref{rmk:bound} that we defined uncertainty measures to be bounded.
This would be trivially true if we had restricted loss functions to $\Reals_{\geq 0}$,
but $\LossShannon$~\eqref{eq:loss-shannon} has range $[0, \infty]$.
Nevertheless, as the next result shows,
we can confirm that the strategy-based $\LossShannon$-uncertainty satisfies such a requirement.

\begin{restatable}[]{lemma}{restateBoundSTLShannon}%
  \label{lemma:bound-stl-shannon}

  For any prior knowledge $\Prior{\pi} : \Dist\CalX$ and channel $\Chan{C} : \CalX \to \Dist\CalY$,
  it follows that both the strategy-based dynamic prior $\LossShannon$-uncertainty
  $
  \StUl[\LossShannon]
  {\Posterior{\Prior{\pi}}{\Chan{C}}[X][y]}
  {\Prior{\pi}}
  $
  and the strategy-based dynamic posterior $\LossShannon$-uncertainty
  $
  \StUl[\LossShannon]
  {\Posterior{\Prior{\pi}}{\Chan{C}}[X][y]}
  {\Posterior{\Prior{\pi}}{\Chan{C}}[X][y]}
  $
  are finite and non-negative.
\end{restatable}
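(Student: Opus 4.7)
The plan is to first pin down what the uniform optimal strategy for the Shannon loss looks like, and then treat the posterior and prior cases separately.

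First, I would observe that, for any belief $q : \Dist\CalX$, the expression $\sum_x q_x (-\log_2 w_x)$ displayed in \eqref{eqn:ul_shannon} is a (strictly) convex function of $w$ that attains its unique minimum at $w = q$ (this is essentially Gibbs' inequality; it is already noted right after \eqref{eqn:ul_shannon}). Consequently the set of optimal actions in the sense of Definition~\ref{def:strategy} is the singleton $\Set{q}$, so that $\St[\LossShannon]^{u}(q) = \PointDist{q}$ as defined in \eqref{eq:point-dist}. In particular, applying Definition~\ref{def:st-measures} collapses the outer sum over $\CalW$ to a single term.

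For the posterior case, let $p \EqDef \Posterior{\Prior{\pi}}{\Chan{C}}[X][y]$. Then $\StUl[\LossShannon]{p}{p} = \sum_{x \in \CalX} p_x (-\log_2 p_x) = H(p)$, the usual Shannon entropy. Using the standard convention $0 \log_2 0 = 0$ (which coincides with $\lim_{t \to 0^+} t \log_2 t$), every term is finite and non-negative, and a textbook argument bounds the sum by $\log_2 \Card{\CalX}$; this settles this case.

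For the prior case, I would write $\StUl[\LossShannon]{p}{\Prior{\pi}} = \sum_{x \in \CalX} p_x (-\log_2 \Prior{\pi}_x)$. Non-negativity is immediate, since $p_x \geq 0$ and $\Prior{\pi}_x \in [0,1]$ forces $-\log_2 \Prior{\pi}_x \geq 0$ whenever $\Prior{\pi}_x > 0$. The only potential source of trouble is finiteness, because $-\log_2 \Prior{\pi}_x = \infty$ when $\Prior{\pi}_x = 0$. Here I would invoke the definition of the posterior in \eqref{eq:posterior}: since $p_x = \Prior{\pi}_x\, \Chan{C}_{x, y} / \Outer{\Prior{\pi}}{\Chan{C}}[y]$, the support of $p$ is contained in the support of $\Prior{\pi}$. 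Thus any $x$ with $\Prior{\pi}_x = 0$ also has $p_x = 0$, and under the convention $0 \cdot \infty = 0$ contributes nothing to the sum; all other terms are finite, and the sum over the (finite-support) remainder is bounded above, e.g., by $-\log_2(\min \Set{\Prior{\pi}_x : p_x > 0})$.

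The main obstacle, and essentially the only subtle point, is the $0 \cdot \infty$ indeterminacy in the cross-entropy-like expression of the prior case; the key observation that disarms it is the absolute-continuity-style fact that $\Supp(p) \subseteq \Supp(\Prior{\pi})$, which is built into Bayesian updating. Once that is in place, everything else reduces to standard finite-sum manipulations.
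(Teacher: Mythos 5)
Your proof is correct and rests on the same key observation as the paper's: the support of the posterior $\Posterior{\Prior{\pi}}{\Chan{C}}[X][y]$ is contained in the support of $\Prior{\pi}$ (absolute continuity built into Bayesian updating), which neutralises the $0 \cdot \infty$ terms in the cross-entropy and leaves a finite, non-negative sum. The only difference is organisational: the paper proves a more general statement (Lemma~\ref{lemma:bound-shannon-general}, for an arbitrary refinement pair $\Chan{C} \RefinedBy \Chan{D}$ with $\Chan{R}_{y,z} > 0$) and obtains this lemma by setting $\Chan{R} = \ChanLeaksNothing$, whereas you argue the special case directly.
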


There is an interesting correspondence between the strategy-based $\LossShannon$-uncertainty
and the Kullback-Leibler (KL) divergence of two distributions $p : \Dist\CalX$ and $q: \Dist\CalX$,
defined as
\begin{equation}\label{eq:kl}
  D_{\mathit{KL}}(p \SubsOp q)
  \EqDef
  \sum_{x \in \CalX} p_{x} \log_{2} \frac{p_{x}}{q_{x}}
\end{equation}
Notice that the result obtained in Example~\ref{ex:case-studies:shannon} 
is exactly the KL divergence of the posterior $\Posterior{\Prior{\pi}}{\Chan{C}}[X][b]$ and prior $\Prior{\pi}$:
\begin{align*}
  &\;\;
    D_{\mathit{KL}}(\Posterior{\Prior{\pi}}{\Chan{C}}[X][b] \SubsOp \Prior{\pi})
  \\ =
  &\;\; 0\, \log_{2} \frac{0}{\sfrac{7}{8}} +
    \frac{1}{2}\, \log_{2} \frac{\sfrac{1}{2}}{\sfrac{1}{16}} +
    \frac{1}{2}\, \log_{2} \frac{\sfrac{1}{2}}{\sfrac{1}{16}}
  = 
  3
  \Comment{}
\end{align*}
In fact, as the next theorem states,
this equivalence is not particular to Example~\ref{ex:case-studies:shannon},
and this is what motivated the use of the $\SubsOp$ operator in the notation for strategy-based measures.

\begin{restatable}[]{theorem}{restateKL}\label{thm:kl}
  Let $q : \Dist\mathcal{X}$ be the adversary's belief and $p : \Dist\mathcal{X}$ be the baseline.
  Then, $\StLeak[\LossShannon]{p}{q} = D_{\mathit{KL}}(p \SubsOp q).$
\end{restatable}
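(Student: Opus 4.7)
The plan is to unfold both sides of the claimed identity using the definitions and then invoke Gibbs' inequality to characterise the optimal strategy under $\LossShannon$.

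First, I would identify $\St[\LossShannon]^u(q)$ explicitly. Since the actions $w$ for $\LossShannon$ range over distributions in $\Dist\mathcal{X}$ and $\LossShannon(w,x) = -\log_2 w_x$, the inner expected loss $\sum_x q_x(-\log_2 w_x)$ is the cross-entropy $H(q,w)$. By Gibbs' inequality this is minimised uniquely at $w = q$ (with the convention $0\log_2 0 = 0$ and $-\log_2 0 = \infty$). Hence $\St[\LossShannon](q) = \{\PointDist{q}\}$ is a singleton and the uniform strategy coincides with it: $\St[\LossShannon]^u(q) = \PointDist{q}$, where $\PointDist{\cdot}$ is the point distribution from~\eqref{eq:point-dist}.

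Next, I would substitute into Definition~\ref{def:st-measures} to obtain the two needed values. For $\StUl[\LossShannon]{p}{q}$, the point-strategy $\PointDist{q}$ collapses the outer sum and yields the cross-entropy:
\begin{equation*}
    \StUl[\LossShannon]{p}{q} = \sum_{x \in \mathcal{X}} p_x (-\log_2 q_x).
\end{equation*}
Taking $p = q$ in the same computation gives $\StUl[\LossShannon]{p}{p} = -\sum_{x} p_x \log_2 p_x = H(p)$.

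Finally, I would apply Definition~\ref{def:st-leak} and combine the two logarithms:
\begin{equation*}
    \StLeak[\LossShannon]{p}{q} = \StUl[\LossShannon]{p}{q} - \StUl[\LossShannon]{p}{p} = \sum_{x \in \mathcal{X}} p_x \log_2 \frac{p_x}{q_x},
\end{equation*}
which matches the definition of $D_{\mathit{KL}}(p \SubsOp q)$ in~\eqref{eq:kl}. There is no real obstacle here; the only subtle point is handling the boundary cases $p_x = 0$ or $q_x = 0$, which is taken care of by the conventions $0 \log_2 0 = 0$ and $-\log_2 0 = \infty$ (already permitted by Remark~\ref{rmk:bound}), so that both sides agree (and are jointly infinite) precisely when $p$ is not absolutely continuous with respect to $q$. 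Lemma~\ref{lemma:bound-stl-shannon} can be cited to ensure the subtraction is well-defined in the cases where it yields a finite value.
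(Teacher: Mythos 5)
Your proof is correct and follows essentially the same route as the paper's: both use Gibbs' inequality to identify $\St[\LossShannon]^{u}(q)$ as the point distribution $\PointDist{q}$, unfold Definitions~\ref{def:st-measures} and~\ref{def:st-leak} to express the leakage as a difference of cross-entropies, and combine the logarithms to obtain $D_{\mathit{KL}}(p \SubsOp q)$. Your explicit treatment of the boundary conventions is a small additional care the paper defers to Lemma~\ref{lemma:bound-stl-shannon}, but it does not change the argument.
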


Notice that, in principle, KL divergence could be infinite.
However, since we are not interested in generic distributions $p$ and $q$ 
but rather in a posterior distribution obtained from a prior and a channel,
it turns out that strategy-based $\LossShannon$-leakage \emph{is} bounded,
and this follows immediately from Lemma~\ref{lemma:bound-stl-shannon}.

Going back to Example~\ref{ex:case-studies:shannon},
Theorem~\ref{thm:kl} means that the value \qm{3} we obtained as the information leakage,
according to Definition~\ref{def:st-leak},
can be expressed in terms of the KL divergence of the prior and posterior states of knowledge.
Therefore, it can be interpreted according to Shannon's source-coding theorem:
it is the average number of extra bits needed to encode the secret 
when one believes that the secret was produced from a distribution $\Prior{\pi}$,
but in reality it was obtained from the baseline $\Posterior{\Prior{\pi}}{\Chan{C}}[X][b]$.
From the attacker's point of view,
this could be understood as the number of \qm{yes/no questions} they save by changing from their prior
to the posterior knowledge.\footnote{%
  Although, from a privacy perspective, 
  it is not immediately clear whether this kind of measure is well-suited,
  and prior work suggests otherwise~\cite{Foundations:Smith2009}.%
}

Next, we revisit Example~\ref{ex:motivation:query},
to show that the misleading conclusion of the preservation of the DPI,
previously obtained with the traditional definition of dynamic vulnerability as in \eqref{eq:motivation:bayes-vuln},
is now solved under our formalisation.
We note that this behaviour 
(a break in the DPI under a dynamic perspective, even though refinement holds in the static perspective)
can only be observed in analyses where there is a third system that yields the baseline 
(in Example~\ref{ex:motivation:query}, a system that reports the original answer, available to the data engineering team).
Later, in Section~\ref{sec:soundness-dyn}, we show that the DPI always holds pairwise.
This means that a single step of post-processing is safe, 
but (counter-intuitively) further steps might not be.

\begin{ExampleContinued}{ex:motivation:query}{ex:case-studies:query}
  The system $\Chan{P}$ can be modelled by a channel $\Chan{P} : \CalX \to \Dist\CalX$,
  where $\CalX = \OrdSet{\textsc{no}, \textsc{yes}}$.
  Similarly, the second post-processing step $\Chan{S}$ can be represented as a channel $\Chan{S} : \CalX \to \Dist\CalX$.
  From these, we construct the new system $\Chan{P}\Cascade\Chan{S}$,
  as per Definition~\ref{def:cascade}.
  These channels are as follows:
  \begin{equation*}
    \begin{array}{c@{}c@{}c@{\;\;}c@{}c}
      \Chan{P}
      && \textsc{no} & \textsc{yes} & \\[.5ex]
      \textsc{no} & \ChanLDelim{2.2} & \frac{2}{3} & \frac{1}{3} & \ChanRDelim{2.2} \\[.5ex]
      \textsc{yes} && \frac{1}{3} & \frac{2}{3} &
    \end{array}
    \hspace{-.9em}\Cascade\hspace{-.15em}
    \begin{array}{c@{}c@{}c@{\;\;}c@{}c}
      \Chan{S}
      && \textsc{no} & \textsc{yes} & \\[.5ex]
      \textsc{no} & \ChanLDelim{2.2} & 1 & 0 & \ChanRDelim{2.2} \\[.5ex]
      \textsc{yes} && \frac{1}{2} & \frac{1}{2} &
    \end{array}
    \hspace{-1.15em}=\hspace{-.25em}
    \begin{array}{c@{}c@{}c@{\;\;}c@{\;}c}
      \Chan{P}\Cascade\Chan{S}
      && \textsc{no} & \textsc{yes} & \\[.5ex]
      \textsc{no} & \ChanLDelim{2.2} & \frac{5}{6} & \frac{1}{6} & \ChanRDelim{2.2} \\[.5ex]
      \textsc{yes} && \frac{2}{3} & \frac{1}{3} &
    \end{array}
  \end{equation*}
  Then, from channels $\Chan{P}$ and $\Chan{P}\Cascade\Chan{S}$,
  and assuming a uniform prior $\Prior{\upsilon} = p_{X}$,
  we can construct $\TrueAdv$'s posterior knowledge $\Hyper{\Prior{\upsilon}}{\Chan{P}}$ 
  and $\NoiseAdv$'s posterior knowledge $\Hyper{\Prior{\upsilon}}{\Chan{P}\Cascade\Chan{S}}$ 
  as, respectively,
  \begin{equation*}
    \begin{array}{c@{}c@{}c@{\;\;}c@{}c}
      \Hyper{\Prior{\upsilon}}{\Chan{P}}
      && \frac{1}{2}\,\textsc{no} & \frac{1}{2}\,\textsc{yes} & \\[.5ex]
      \textsc{no} & \ChanLDelim{2.2} & \frac{2}{3} & \frac{1}{3} & \ChanRDelim{2.2} \\[.5ex]
      \textsc{yes} && \frac{1}{3} & \frac{2}{3} &
    \end{array}
    \hspace{-.75em}\textnormal{ and }\hspace{0em}
    \begin{array}{c@{}c@{}c@{\;\;}c@{}c}
      \Hyper{\Prior{\upsilon}}{\Chan{P}\Cascade\Chan{S}}
      && \frac{3}{4}\,\textsc{no} & \frac{1}{4}\,\textsc{yes} & \\[.5ex]
      \textsc{no} & \ChanLDelim{2.2} & \frac{5}{9} & \frac{1}{3} & \ChanRDelim{2.2} \\[.5ex]
      \textsc{yes} && \frac{4}{9} & \frac{2}{3} &
    \end{array}
  \end{equation*}

  Assuming that the secret answer was \textsc{no},
  the baseline we consider is that of the data engineering team,
  namely the point distribution $\PointDist{\textsc{no}}$,
  against which we evaluate both $\TrueAdv$'s and $\NoiseAdv$'s \emph{a posteriori} strategies. 
  Starting with $\TrueAdv$, recall that she observed \textsc{yes}.
  Therefore, her strategy w.r.t.\ the gain function $\GainBayes$~\eqref{eq:gain-bayes} is
  $\St[\GainBayes]^{u}(\Posterior{\Prior{\upsilon}}{\Chan{P}}[X][\textsc{yes}])
  = \PointDist{\textsc{yes}}
  = \OrdSet{0, 1},$ 
  meaning that she will guess \textsc{yes}.
  This gives a strategy-based vulnerability of
  \begin{align*}
    &\;\;
      \StVg[\GainBayes]
      {\PointDist{\textsc{no}}}
      {\Posterior{\Prior{\upsilon}}{\Chan{P}}[X][\textsc{yes}]}
    \\ =
    &\;\;
      \sum_{w \in \CalW}
      \St[\GainBayes]^{u}(\Posterior{\Prior{\upsilon}}{\Chan{P}}[X][\textsc{yes}])_{w}
      \sum_{x \in \CalX} \PointDist{\textsc{no}}_{x}\, \GainBayes(w, x)
      \Comment{Definition~\ref{def:st-measures}}
    \\ =
    &\;\;
      \sum_{w \in \CalW}
      \PointDist{\textsc{yes}}_{w}
      \sum_{x \in \CalX} \PointDist{\textsc{no}}_{x}\, \GainBayes(w, x)
      \Comment{$
        \St[\GainBayes]^{u}(\Posterior{\Prior{\upsilon}}{\Chan{P}}[X][\textsc{yes}])
      	=
      	\PointDist{\textsc{yes}}
      $}
    \\ =
    &\;\;
    \GainBayes(\textsc{yes}, \textsc{no})
    = 0
    \Comment{%
      Definition of $\PointDist{\textsc{no}}$ and $\PointDist{\textsc{yes}}$ in \eqref{eq:point-dist}; \\
      Definition of $\GainBayes$ in \eqref{eq:gain-bayes}%
    }
  \end{align*}

  For adversary $\NoiseAdv$, recall that he observed \textsc{no} 
  and thus his strategy according to his posterior knowledge $\Posterior{\Prior{\upsilon}}{\Chan{P}\Cascade\Chan{S}}[X][\textsc{no}]$ 
  will be to guess that the original answer was \textsc{no}. 
  That is,
  $\St[\GainBayes]^{u}(\Posterior{\Prior{\upsilon}}{\Chan{P}\Cascade\Chan{S}}[X][\textsc{no}])
  = \PointDist{\textsc{no}}
  = \OrdSet{1, 0}.$ 
  Following the same reasoning above, we get a vulnerability of
  \begin{equation*}
    \StVg[\GainBayes]
    {[\textsc{no}]}
    {\Posterior{\Prior{\upsilon}}{\Chan{P}\Cascade\Chan{S}}[X][\textsc{no}]}
    =
    \GainBayes(\textsc{no}, \textsc{no})
    = 1
  \end{equation*}
  which is larger than the vulnerability computed for $\TrueAdv$.
  This shows that the strategy-based formalisation correctly leads to the conclusion that,
  from the perspective of the engineering team, who know the true answer,
  it is not always safe to replace $\Chan{P}$ with $\Chan{P}\Cascade\Chan{S}$ 
  (albeit \emph{on average} it is, as $\Chan{P} \RefinedBy \Chan{P}\Cascade\Chan{S}$).
\end{ExampleContinued}

\subsection{Knowledge ordering and its limitations}\label{sec:knowledge-ord}

Notice that Definition~\ref{def:st-measures} of strategy-based $\Gain$-vulnerability 
(or, equivalently, strategy-based $\Loss$-uncertainty)
requires two probability distributions: $p \SubsOp q$.
The distribution $p$ is what we called the \emph{baseline},
and it represents the knowledge of the more-informed adversary.
However, we have not addressed how to identify who is the more-informed adversary.

We begin by noting that, given two distributions $p$ and $q$,
there is no information inherent to $p$ or $q$ 
that a rational adversary could rely on to decide which to use as a baseline.
As we shall see later in Lemma~\ref{lemma:q-leq-p},
an adversary that derives their strategy from the baseline always reaches an optimal gain,
implying that $\StLeak{p_{A}}{p_{B}} \geq 0$, but also $\StLeak{p_{B}}{p_{A}} \geq 0$.
Hence, $\StLeakFn$ on its own is meaningless in terms of whether information flow has indeed occurred.
This is not surprising, though, as this behaviour is observed in other well-known measures,
such as the Kullback-Leibler divergence.
In the case of $D_{\mathrm{KL}}(p\SubsOp q)$,
$p$ is viewed as the true probability distribution and $q$ as the model,
but looking at the distributions $p$ and $q$ without the context does not give any information,
and flipping the roles of $p$ and $q$ still leads to a non-negative KL divergence.

However, when there is a refinement $\Chan{C} \RefinedBy \Chan{D}$ (Definition~\ref{def:refinement}), 
we know that, on average, an adversary $\Adv^{\Chan{C}}$ who observes an output of channel $\Chan{C}$ 
gains more information than an adversary $\Adv^{\Chan{D}}$ with access only to $\Chan{D}$.
Thus, although in the dynamic case it is possible that the strategy of $\Adv^{\Chan{D}}$
is better than that of $\Adv^{\Chan{C}}$, 
there is no way for either adversary to know this 
and, hence, there is no incentive for each to act differently than they do.
As such, we argue that $\Adv^{\Chan{C}}$ can be deemed more informed.

\begin{remark}\label{rmk:mono-as-dpi}
  In the case where there is only one system under analysis,
  the more informed adversary is the one with the posterior knowledge,
  after observing an execution of the system (Alice, in the thought experiment).
  We note that this case can also be modelled as a refinement:
  given a prior $\Prior{\pi}$ and a channel $\Chan{C}$,
  we can recover $\Prior{\pi}$ (Bob's knowledge) as a post-processing of $\Chan{C}$ by $\ChanLeaksNothing$ via $
  \Hyper{\Prior{\pi}}{\Chan{C}\Cascade\ChanLeaksNothing} 
  = \Hyper{\Prior{\pi}}{\ChanLeaksNothing} 
  = [\Prior{\pi}],
  $
  where $\ChanLeaksNothing$ is the channel that leaks nothing, 
  i.e., a channel mapping every input to $\bot$, representing the lack of a useful observation.
\end{remark}

From this reasoning, we argue that a knowledge ordering is \emph{necessary} 
in order to draw sensible conclusions regarding dynamic leakage using our measure,
and that $\RefinedBy$ is an appropriate relation that can be used to define such an ordering.
We remark that the refinement defined for average-case static measures 
is strictly stronger than refinement on max-case measures,
and therefore constitutes the strongest order used in QIF~\cite{Max-CaseRef:Chatzikokolakis2019}.
Going back to the operation $p \SubsOp q$ in $\StVg{p}{q}$,
there must be a prior knowledge $\Prior{\pi} : \Dist\CalX$,
channels $\Chan{B} : \CalX \to \Dist\CalY$ and $\Chan{C} : \CalX \to \Dist\CalZ$,
and outputs $y \in \CalY$ and $z \in \CalZ$ such that
\begin{itemize}
  \item $\Chan{B} \RefinedBy \Chan{C}$, 
    and thus there exists $\Chan{R}$ such that $\Chan{C} = \Chan{B}\Cascade\Chan{R}$;
  \item $p = \Posterior{\Prior{\pi}}{\Chan{B}}[X][y]$ and $q = \Posterior{\Prior{\pi}}{\Chan{C}}[X][z]$; and
  \item $\Chan{R}_{y, z} > 0$, so that $z$ is the result of post-processing $y$.
\end{itemize}

\subsection{Towards an axiomatisation of dynamic leakage}
\label{sec:soundness-dyn}

We now turn our attention to three important axioms that hold under a dynamic perspective,
when comparing two adversaries that satisfy the knowledge ordering discussed in Section~\ref{sec:knowledge-ord}: 
single-step monotonicity, single-step data-processing inequality and non-interference.
To show them,
we first consider the following lemma that states that the strategy-based $g$-vulnerability of $p \SubsOp p$ 
(i.e., when the baseline is $p$ itself)
is just the traditional $g$-vulnerability of the distribution $p$:

\begin{restatable}[]{lemma}{restateApplyPP}\label{lemma:p-p}
  Let $p : \Dist\CalX$ be both the adversary's belief and the baseline.
  Then, for any gain function $\Gain : \CalW \times \CalX \to \Reals$, 
  \begin{equation*}
    \StVg{p}{p} = \Vg(p)
  \end{equation*}
  Dually, for any loss function $\Loss : \CalW \times \CalX \to \Reals_{\geq 0} \cup \{\infty\}$, 
  \begin{equation*}
    \StUl{p}{p} = \Ul(p)
  \end{equation*}
\end{restatable}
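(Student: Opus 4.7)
The plan is to unfold Definition~\ref{def:st-measures} for the case $q = p$ and observe that, by the very definition of an optimal strategy, every action in the support of $\St^{u}(p)$ achieves the same expected gain, namely $\Vg(p)$.

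First I would write out
\begin{equation*}
  \StVg{p}{p}
  = \sum_{w \in \mathcal{W}} \St^{u}(p)_{w}
    \sum_{x \in \mathcal{X}} p_{x}\, \Gain(w, x).
\end{equation*}
By Definition~\ref{def:strategy}, $\St^{u}(p)_{w} > 0$ only for $w \in \argmax_{w' \in \mathcal{W}} \sum_{x} p_{x}\, \Gain(w', x)$, and for any such optimal $w^{*}$ the inner sum equals $\max_{w' \in \mathcal{W}} \sum_{x} p_{x}\, \Gain(w', x) = \Vg(p)$ by~\eqref{eq:vg-dist}. Thus I can replace the inner sum by the constant $\Vg(p)$ throughout the support of $\St^{u}(p)$, factor it out, and use $\sum_{w} \St^{u}(p)_{w} = 1$ (since $\St^{u}(p)$ is a probability distribution by Remark~\ref{rmk:optimal-actions} and Definition~\ref{def:strategy}) to conclude $\StVg{p}{p} = \Vg(p)$.

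For the loss-function case the argument is symmetric: the support of $\St[\Loss]^{u}(p)$ lies in $\argmin_{w} \sum_{x} p_{x}\, \Loss(w, x)$, so the inner sum is constant and equal to $\Ul(p)$ by~\eqref{eq:ul-dist}, and factoring out yields $\StUl{p}{p} = \Ul(p)$.

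I do not foresee any real obstacle: the proof is a direct unfolding of definitions, and the only subtlety is making explicit the two facts (i) that $\St^{u}(p)$ is supported on optimal actions, all of which achieve the same value, and (ii) that $\St^{u}(p)$ sums to $1$. Remark~\ref{rmk:optimal-actions} guarantees that the uniform distribution over optimal actions is well defined, so there is no issue with infinitely many optimal actions.
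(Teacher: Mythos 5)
Your proposal is correct and follows essentially the same argument as the paper's proof: restrict the outer sum to the support of the strategy, observe that every supported action attains the maximum (resp.\ minimum) so the inner sum is the constant $\Vg(p)$ (resp.\ $\Ul(p)$), factor it out, and use that the strategy is a probability distribution. The only cosmetic difference is that the paper establishes the identity for an arbitrary optimal strategy $s \in \St(p)$ and then specialises to $\St^{u}(p)$, whereas you work with $\St^{u}(p)$ directly.
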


Moreover, vulnerability is maximum when the belief from which actions are chosen coincides with the baseline.
This means that, when evaluating two adversaries with beliefs represented by distributions $p$ and $q$,
the result is only meaningful if one knows that $p$ and $q$ satisfy the knowledge ordering in some direction, 
so that there is a well-defined baseline.

\begin{restatable}[]{lemma}{restateQleqP}\label{lemma:q-leq-p}
  Let $q : \Dist\mathcal{X}$ be the adversary's belief and $p : \Dist\mathcal{X}$ be the baseline.
  Then, for any gain function $g : \CalW \times \CalX \to \Reals$, 
  \begin{equation*}
    \StVg{p}{q} \leq \StVg{p}{p} = \Vg(p)
  \end{equation*}
  with equality holding when $q = p$.
  Dually, for any loss function $\Loss : \CalW \times \CalX \to \Reals_{\geq 0} \cup \{\infty\}$, it follows that
  \begin{equation*}
    \StUl{p}{q} \geq \StUl{p}{p} = \Ul(p)
  \end{equation*}
\end{restatable}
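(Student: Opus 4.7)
The plan is to unfold Definition~\ref{def:st-measures} and exhibit $\StVg{p}{q}$ as a convex combination of the per-action expected gains under the baseline $p$, then compare with the maximum appearing in \eqref{eq:vg-dist}. Lemma~\ref{lemma:p-p} already supplies the equality $\StVg{p}{p} = \Vg(p)$, so the only content is the inequality, plus a check that equality is realised when $q = p$.

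Swapping the order of summation in Definition~\ref{def:st-measures} gives
\[
\StVg{p}{q} \;=\; \sum_{w \in \mathcal{W}} \St^{u}(q)_{w}\, G_{w}, \qquad G_{w} \EqDef \sum_{x \in \mathcal{X}} p_{x}\, g(w,x).
\]
By Definition~\ref{def:strategy} together with Remark~\ref{rmk:optimal-actions}, $\St^{u}(q)$ is a probability distribution whose (finite) support lies inside $\argmax_{w \in \mathcal{W}} \sum_{x} q_{x}\, g(w,x)$. The right-hand side is therefore a genuine convex combination of the numbers $G_{w}$ over this finite support, and each such $G_{w}$ is bounded above by $\max_{w \in \mathcal{W}} G_{w}$, which is $\Vg(p)$ by \eqref{eq:vg-dist}. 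This yields $\StVg{p}{q} \leq \Vg(p) = \StVg{p}{p}$.

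For the equality case, when $q = p$ the set $\argmax_{w} \sum_{x} p_{x}\, g(w,x)$ coincides with $\argmax_{w} \sum_{x} q_{x}\, g(w,x)$, so every $w$ in the support of $\St^{u}(p)$ attains $G_{w} = \Vg(p)$, and the convex combination collapses to $\Vg(p)$. The loss case is exactly dual: $\St[\Loss]^{u}(q)$ is supported on actions minimising $\sum_{x} q_{x}\, \ell(w,x)$, so writing $L_{w} \EqDef \sum_{x} p_{x}\, \ell(w,x)$ each term in the support satisfies $L_{w} \geq \min_{w} L_{w} = \Ul(p)$, and the convex combination is bounded below by $\Ul(p)$.

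There is no real obstacle in this argument; the only subtlety worth flagging is the bookkeeping on the loss side, where $L_{w}$ may equal $\infty$ (Remark~\ref{rmk:bound}). With the standard conventions $0 \cdot \infty = 0$ and $\alpha + \infty = \infty$ the convex combination argument goes through unchanged, and the inequality $\StUl{p}{q} \geq \Ul(p) = \StUl{p}{p}$ still holds.
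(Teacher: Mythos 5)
Your proof is correct and follows essentially the same route as the paper's: bound each per-action expected gain $\sum_{x} p_{x}\, g(w,x)$ by the maximum over actions (which is $\Vg(p)$ by \eqref{eq:vg-dist}), and use the fact that the strategy is a probability distribution to collapse the convex combination, with the equality $\StVg{p}{p} = \Vg(p)$ supplied by Lemma~\ref{lemma:p-p}. Your additional remarks on the equality case and on the $0 \cdot \infty$ convention for the loss side are sound refinements of details the paper's proof leaves implicit.
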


We are now ready to state the three axioms of single-step monotonicity, single-step DPI and non-interference,
under a dynamic perspective.
We begin with single-step monotonicity, which corresponds to the perspective of an attacker,
who starts with a prior knowledge $\Prior{\pi}$,
knows the implementation details of the system $\Chan{C}$ under analysis,
and can observe the result of an execution of system $\Chan{C}$,
but cannot observe any intermediate value flowing through the sequential steps that constitute $\Chan{C}$: 

\begin{restatable}[Single-step monotonicity]{theorem}{restateDynLeak}\label{thm:mono}
  Let $\Prior{\pi} : \Dist\mathcal{X}$ be the adversary's prior belief,
  and consider a channel $\Chan{C} : \mathcal{X} \to \Dist\mathcal{Y}$ producing $y \in \CalY$.
  Then, for any gain function $\Gain : \CalW \times \CalX \to \Reals$,
  \begin{equation*}
    \StVg
    {\Posterior{\Prior{\pi}}{\Chan{C}}[X][y]}
    {\Prior{\pi}}
    \leq
    \StVg
    {\Posterior{\Prior{\pi}}{\Chan{C}}[X][y]}
    {\Posterior{\Prior{\pi}}{\Chan{C}}[X][y]}
  \end{equation*}
  Dually, for any loss function
  $\Loss : \CalW \times \CalX \to \Reals_{\geq 0} \cup \{\infty\}$,
  \begin{equation*}
    \StUl
    {\Posterior{\Prior{\pi}}{\Chan{C}}[X][y]}
    {\Prior{\pi}}
    \geq
    \StUl
    {\Posterior{\Prior{\pi}}{\Chan{C}}[X][y]}
    {\Posterior{\Prior{\pi}}{\Chan{C}}[X][y]}
  \end{equation*}
\end{restatable}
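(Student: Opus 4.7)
The plan is to recognize that single-step monotonicity is essentially a direct instantiation of Lemma~\ref{lemma:q-leq-p}. That lemma already establishes, for arbitrary distributions $p$ and $q$ over $\mathcal{X}$ playing the roles of baseline and belief respectively, that $\StVg{p}{q} \leq \StVg{p}{p}$ (and dually for loss functions). The theorem is stated for the specific pair where the baseline is the posterior $\Posterior{\Prior{\pi}}{\Chan{C}}[X][y]$ and the belief is the prior $\Prior{\pi}$, so the main task is simply to apply Lemma~\ref{lemma:q-leq-p} under this identification.

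The single proof step is therefore to substitute $p := \Posterior{\Prior{\pi}}{\Chan{C}}[X][y]$ and $q := \Prior{\pi}$ into Lemma~\ref{lemma:q-leq-p}, which immediately yields
\begin{equation*}
    \StVg{\Posterior{\Prior{\pi}}{\Chan{C}}[X][y]}{\Prior{\pi}}
    \leq
    \StVg{\Posterior{\Prior{\pi}}{\Chan{C}}[X][y]}{\Posterior{\Prior{\pi}}{\Chan{C}}[X][y]}
\end{equation*}
for the gain-function case. The loss-function direction follows by exactly the same substitution into the dual inequality of Lemma~\ref{lemma:q-leq-p}, which gives $\StUl{p}{q} \geq \StUl{p}{p}$.

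Since all real content is already carried by Lemma~\ref{lemma:q-leq-p}, there is no genuine obstacle here; the theorem is essentially a re-packaging of that lemma in the specific notation of a prior, a channel, and a realised output. The only thing worth remarking on conceptually is the interpretation: this result captures the intuition from the thought experiment in Section~\ref{sec:st-knowledge1}, namely that an adversary whose strategy is chosen from the less refined belief $\Prior{\pi}$ cannot do better, against the more refined baseline $\Posterior{\Prior{\pi}}{\Chan{C}}[X][y]$, than one whose strategy is chosen from the baseline itself. Equivalently, as noted in Remark~\ref{rmk:mono-as-dpi}, monotonicity in the single-step setting can be viewed as the degenerate instance of DPI obtained by post-processing $\Chan{C}$ with $\ChanLeaksNothing$, which is why it is subsumed by the same lemma.
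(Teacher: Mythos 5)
Your proposal is correct and matches the paper's proof exactly: the paper also derives Theorem~\ref{thm:mono} as an immediate instantiation of Lemma~\ref{lemma:q-leq-p} with baseline $p = \Posterior{\Prior{\pi}}{\Chan{C}}[X][y]$ and belief $q = \Prior{\pi}$, for both the gain- and loss-function cases. Nothing is missing.
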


The (single-step) data-processing inequality axiom for the dynamic perspective states that,
whenever comparing two adversaries with access to channels $\Chan{C}$ and $\Chan{D}$,
whose states of knowledge are known to satisfy the knowledge ordering 
---~i.e., $\Chan{C} \RefinedBy \Chan{D}$~---
it is \emph{rational} to replace channel $\Chan{C}$ with channel $\Chan{D}$. 
Equivalently, a rational adversary would always prefer to have the extra knowledge that comes with channel $\Chan{C}$:

\begin{restatable}[Single-step DPI]{theorem}{restateCompSystem}
  \label{thm:dpi}
  Let $\Prior{\pi} : \Dist\mathcal{X}$ be the adversary's prior belief, and consider two channels 
  $\Chan{C} : \CalX \to \Dist\CalY$ and $\Chan{D} : \CalX \to \Dist\CalZ$ such that $\Chan{C} \RefinedBy \Chan{D}$.
  Let $y \in \CalY$ and $z \in \CalZ$ such that $\Chan{R}_{y, z} > 0$.
  Then, for any gain function $\Gain : \CalW \times \CalX \to \Reals$, 
  \begin{equation*}
    \StVg
    {\Posterior{\Prior{\pi}}{\Chan{C}}[X][y]}
    {\Posterior{\Prior{\pi}}{\Chan{D}}[X][z]}
    \leq
    \StVg
    {\Posterior{\Prior{\pi}}{\Chan{C}}[X][y]}
    {\Posterior{\Prior{\pi}}{\Chan{C}}[X][y]}
  \end{equation*}
  Dually, for any loss function
  $\Loss : \CalW \times \CalX \to \Reals_{\geq 0} \cup \Set{\infty}$,
  \begin{equation*}
    \StUl
    {\Posterior{\Prior{\pi}}{\Chan{C}}[X][y]}
    {\Posterior{\Prior{\pi}}{\Chan{D}}[X][z]}
    \geq
    \StUl
    {\Posterior{\Prior{\pi}}{\Chan{C}}[X][y]}
    {\Posterior{\Prior{\pi}}{\Chan{C}}[X][y]}
  \end{equation*}
\end{restatable}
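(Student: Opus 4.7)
The plan is to derive this theorem as a direct consequence of Lemmas~\ref{lemma:p-p} and~\ref{lemma:q-leq-p}, recognising that the refinement hypothesis plays no role in the algebraic inequality itself: it is needed only to certify, via the knowledge-ordering discussion of Section~\ref{sec:knowledge-ord}, that $\Posterior{\Prior{\pi}}{\Chan{C}}[X][y]$ is the legitimate choice of baseline and $\Posterior{\Prior{\pi}}{\Chan{D}}[X][z]$ the legitimate choice of belief for this single-step comparison.

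First I would observe that the right-hand side collapses to the ordinary $g$-vulnerability $\Vg(\Posterior{\Prior{\pi}}{\Chan{C}}[X][y])$ by Lemma~\ref{lemma:p-p}. Then, setting $p = \Posterior{\Prior{\pi}}{\Chan{C}}[X][y]$ and $q = \Posterior{\Prior{\pi}}{\Chan{D}}[X][z]$, an immediate invocation of Lemma~\ref{lemma:q-leq-p} yields $\StVg{p}{q} \leq \StVg{p}{p} = \Vg(p)$, which is precisely the claimed inequality. The dual statement for loss functions follows symmetrically from the dual half of Lemma~\ref{lemma:q-leq-p}, with $\max$ replaced by $\min$ and the inequality reversed.

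I do not foresee a genuine obstacle, since the real work has already been invested in Lemma~\ref{lemma:q-leq-p}. The only care required is confirming that both posteriors are well-defined, i.e.\ that $y$ and $z$ lie in the supports of $\Outer{\Prior{\pi}}{\Chan{C}}$ and $\Outer{\Prior{\pi}}{\Chan{D}}$ respectively; for the latter, $\Outer{\Prior{\pi}}{\Chan{D}}[z] \geq \Outer{\Prior{\pi}}{\Chan{C}}[y] \cdot \Chan{R}_{y, z} > 0$ under the theorem's hypotheses. The conceptual (rather than technical) content of the theorem lies in explaining why the refinement $\Chan{C} \RefinedBy \Chan{D}$ together with the compatibility condition $\Chan{R}_{y, z} > 0$ singles out this particular pairing as the \emph{rational} single-step dynamic comparison between the two adversaries.
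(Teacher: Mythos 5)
Your proposal is correct and follows exactly the paper's own route: the paper likewise proves this as a direct corollary of Lemma~\ref{lemma:q-leq-p}, instantiating the baseline $p$ as $\Posterior{\Prior{\pi}}{\Chan{C}}[X][y]$ and the belief $q$ as $\Posterior{\Prior{\pi}}{\Chan{D}}[X][z]$. Your additional remarks on well-definedness of the posteriors and on the conceptual role of the refinement hypothesis are sound but not part of the paper's (one-line) argument.
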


Recall from Remark~\ref{rmk:mono-as-dpi} that monotonicity can be viewed as a particular case of DPI,
by setting channel $\Chan{D} = \ChanLeaksNothing$.
With this in mind, notice that both single-step monotonicity and single-step DPI 
require that the baseline is exactly one of the posteriors obtained from channel $\Chan{C}$,
with $\Chan{C} \RefinedBy \Chan{D}$.
In contrast, whenever the baseline comes from a third system $\Chan{B} : \mathcal{X} \to \Dist\mathcal{O}$,
with $\Chan{B} \RefinedBy \Chan{C} \RefinedBy \Chan{D}$, $\Chan{B} \neq \Chan{C}$
and $\Chan{B} \neq \Chan{D}$ (as in Example~\ref{ex:motivation:query}),
the DPI is not necessarily satisfied;
that is, it might be that 
\begin{equation*}%
  \StVg
  {\Posterior{\Prior{\pi}}{\Chan{B}}[X][o]}
  {\Posterior{\Prior{\pi}}{\Chan{D}}[X][z]}
  >
  \StVg
  {\Posterior{\Prior{\pi}}{\Chan{B}}[X][o]}
  {\Posterior{\Prior{\pi}}{\Chan{C}}[X][y]}
\end{equation*}
We call this a \emph{multi-step} analysis,
as there are (at least) two dynamic steps: $o \to y$ and then $y \to z$.
We will return to multi-step analysis in Section~\ref{sec:multi-step},
to understand its implications.

Finally, the non-interference axiom states that
whenever the system corresponds to channel $\ChanLeaksNothing$, 
i.e., the channel that leaks nothing,
then no information leakage can occur.
This is true under a static perspective, as $
  \Vg\Hyper{\Prior{\pi}}{\ChanLeaksNothing} 
  =
  \Vg\PointDist{\Prior{\pi}} = \Vg(\Prior{\pi}) 
$, and it is trivial to see under a dynamic perspective,
as there is only one possible output $\bot$ from which the adversary learns nothing useful 
that could guide them in changing their strategy:

\begin{restatable}[Non-interference]{theorem}{restateNI}
  \label{thm:ni}
  Let $\Prior{\pi} : \Dist\mathcal{X}$ be the adversary's prior belief.
  Then, for any gain function $\Gain : \CalW \times \CalX \to \Reals$,
  \begin{equation*}
    \StVg
    {\Posterior{\Prior{\pi}}{\ChanLeaksNothing}[X][\bot]}
    {\Posterior{\Prior{\pi}}{\ChanLeaksNothing}[X][\bot]}
    =
    \StVg
    {\Posterior{\Prior{\pi}}{\ChanLeaksNothing}[X][\bot]}
    {\Prior{\pi}}
  \end{equation*}
  Dually, for any loss function
  $\Loss : \CalW \times \CalX \to \Reals_{\geq 0} \cup \Set{\infty}$,
  \begin{equation*}
    \StUl
    {\Posterior{\Prior{\pi}}{\ChanLeaksNothing}[X][\bot]}
    {\Posterior{\Prior{\pi}}{\ChanLeaksNothing}[X][\bot]}
    =
    \StUl
    {\Posterior{\Prior{\pi}}{\ChanLeaksNothing}[X][\bot]}
    {\Prior{\pi}}
  \end{equation*}
\end{restatable}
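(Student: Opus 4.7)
The plan is to exploit the fact that $\ChanLeaksNothing$ is the constant channel sending every secret $x \in \mathcal{X}$ to the single output $\bot$ with probability $1$. Because every row of this channel matrix is identical (namely, a point mass on $\bot$), Bayesian updating on the observation $\bot$ must return the prior unchanged. This collapses both sides of the claimed equation to the same expression, $\StVg{\Prior{\pi}}{\Prior{\pi}}$ (dually, $\StUl{\Prior{\pi}}{\Prior{\pi}}$), from which the equality is immediate.

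First I would compute the posterior explicitly. Using~\eqref{eq:joint}--\eqref{eq:posterior}, the outer probability of $\bot$ is $\Outer{\Prior{\pi}}{\ChanLeaksNothing}[\bot] = \sum_{x \in \mathcal{X}} \Prior{\pi}_x \cdot 1 = 1$, and therefore for every $x \in \mathcal{X}$,
\begin{equation*}
  \Posterior{\Prior{\pi}}{\ChanLeaksNothing}[x][\bot]
  = \frac{\Prior{\pi}_x \cdot \ChanLeaksNothing_{x, \bot}}{\Outer{\Prior{\pi}}{\ChanLeaksNothing}[\bot]}
  = \Prior{\pi}_x.
\end{equation*}
Hence $\Posterior{\Prior{\pi}}{\ChanLeaksNothing}[X][\bot] = \Prior{\pi}$ as a distribution on $\mathcal{X}$.

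Substituting this equality into both sides of the theorem, the left-hand side becomes $\StVg{\Prior{\pi}}{\Prior{\pi}}$ and so does the right-hand side, and the two are trivially equal. The dual statement for loss functions follows by the same substitution, replacing $\StVgFn$ with $\StUlFn$. If desired, one can further invoke Lemma~\ref{lemma:p-p} to identify this common value with the traditional prior measure $\Vg(\Prior{\pi})$ (respectively, $\Ul(\Prior{\pi})$), making explicit that no flow of information has occurred.

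There is essentially no obstacle here: the only point requiring any care is making sure the knowledge-ordering requirement from Section~\ref{sec:knowledge-ord} is met, which is handled by Remark~\ref{rmk:mono-as-dpi} ($\ChanLeaksNothing$ can be viewed as a post-processing of any channel, so the belief and baseline roles are well-defined). The theorem is a sanity check confirming that our strategy-based definition behaves correctly in the degenerate case, and its proof is a one-line consequence of Bayes' rule applied to a channel whose rows are all equal.
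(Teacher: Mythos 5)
Your proof is correct and follows essentially the same route as the paper, which justifies this theorem by the observation that $\ChanLeaksNothing$ has only the single output $\bot$, so the posterior given $\bot$ coincides with the prior and both sides reduce to $\StVg{\Prior{\pi}}{\Prior{\pi}}$ (dually $\StUl{\Prior{\pi}}{\Prior{\pi}}$). Your explicit Bayes-rule computation and the optional appeal to Lemma~\ref{lemma:p-p} are exactly the formalisation the paper leaves implicit.
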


\subsection{Consistency with the static perspective}
\label{sec:soundness-static}

In addition to the properties verified in Section~\ref{sec:soundness-dyn} under a dynamic perspective, 
we require the strategy-based formalisation to be consistent with the results obtained under a static perspective.
We begin by showing that the proposed formalisation of dynamic $g$-vulnerability 
(dually, $\Loss$-uncertainty)
recovers an important property of the traditional definition: 
the expected posterior $\Gain$-vulnerability is the expectation
of the strategy-based dynamic posterior $g$-vulnerability over all posteriors:

\begin{restatable}[]{theorem}{restateRecoverExpPosterior}%
  \label{thm:recover-exp-posterior}
  Let $\Prior{\pi} : \Dist\mathcal{X}$ be the adversary's prior belief,
  and consider a channel $\Chan{C} : \mathcal{X} \to \Dist\mathcal{Y}$ modelling the system of interest.
  Then, for any gain function $\Gain : \CalW \times \CalX \to \Reals$, 
  \begin{equation*}
    \sum_{y \in \CalY} \Outer{\Prior{\pi}}{\Chan{C}}[y]\,
    \StVg
    {\Posterior{\Prior{\pi}}{\Chan{C}}[X][y]}
    {\Posterior{\Prior{\pi}}{\Chan{C}}[X][y]}
    = \Vg\Hyper{\Prior{\pi}}{\Chan{C}}
  \end{equation*}
  Dually, for any loss function $\Loss : \CalW \times \CalX \to
  \Reals_{\geq 0} \cup \{\infty\}$,
  \begin{equation*}
    \sum_{y \in \CalY} \Outer{\Prior{\pi}}{\Chan{C}}[y]\,
    \StUl
    {\Posterior{\Prior{\pi}}{\Chan{C}}[X][y]}
    {\Posterior{\Prior{\pi}}{\Chan{C}}[X][y]}
    = \Ul\Hyper{\Prior{\pi}}{\Chan{C}}
  \end{equation*}
\end{restatable}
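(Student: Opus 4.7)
The plan is straightforward: the theorem follows almost immediately by combining Lemma~\ref{lemma:p-p} with Definition~\ref{def:post-static-measures}. First I would fix an arbitrary output $y \in \CalY$ and observe that inside the sum on the left-hand side, the strategy-based term has the belief and baseline coinciding, namely both equal to $\Posterior{\Prior{\pi}}{\Chan{C}}[X][y]$. By Lemma~\ref{lemma:p-p}, instantiated with $p = \Posterior{\Prior{\pi}}{\Chan{C}}[X][y]$, this is exactly $\Vg(\Posterior{\Prior{\pi}}{\Chan{C}}[X][y])$.

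Next I would substitute this equality back into the sum, obtaining
\[
\sum_{y \in \CalY} \Outer{\Prior{\pi}}{\Chan{C}}[y]\, \Vg(\Posterior{\Prior{\pi}}{\Chan{C}}[X][y]),
\]
which is precisely the definition of the static expected posterior $g$-vulnerability $\Vg\Hyper{\Prior{\pi}}{\Chan{C}}$ as given in Definition~\ref{def:post-static-measures}.

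The dual statement for uncertainty follows by the same two-step argument, invoking the second (loss-function) half of Lemma~\ref{lemma:p-p} to replace $\StUl{p}{p}$ with $\Ul(p)$, and then appealing to the definition of $\Ul\Hyper{\Prior{\pi}}{\Chan{C}}$.

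There is no real obstacle here: the heavy lifting has already been done in Lemma~\ref{lemma:p-p}, which guarantees that the uniform strategy derived from $p$ yields, against baseline $p$ itself, exactly the traditional (optimal) $g$-vulnerability $\Vg(p)$. Once that lemma is in hand, the theorem is essentially just a pointwise substitution inside the expectation over outputs, so the proof is expected to be only a few lines long.
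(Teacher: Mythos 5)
Your proof is correct and follows exactly the route the paper takes: the paper's own proof is a one-liner stating that the theorem ``follows immediately from Lemma~\ref{lemma:p-p}'', and your pointwise substitution into the expectation plus an appeal to Definition~\ref{def:post-static-measures} is just that argument spelled out. Nothing is missing.
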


An analagous property can be derived for the max-case 
(or min-case, when computing $\Loss$-uncertainty).
That is, max-case posterior $g$-vulnerability 
is the maximum strategy-based dynamic posterior $g$-vulnerability over all posteriors:

\begin{restatable}[]{theorem}{restateRecoverMaxPosterior}%
  \label{thm:recover-max-posterior}
  Let $\Prior{\pi} : \Dist\mathcal{X}$ be the adversary's prior belief,
  and consider a channel $\Chan{C} : \mathcal{X} \to \Dist\mathcal{Y}$ modelling the system of interest.
  Then, for any gain function $\Gain : \CalW \times \CalX \to \Reals$, 
  \begin{equation*}
    \max_{y \in \CalY}
    \StVg
    {\Posterior{\Prior{\pi}}{\Chan{C}}[X][y]}
    {\Posterior{\Prior{\pi}}{\Chan{C}}[X][y]}
    = \VgMax\Hyper{\Prior{\pi}}{\Chan{C}}
  \end{equation*}
  Dually, for any loss function $\Loss : \CalW \times \CalX \to
  \Reals_{\geq 0} \cup \{\infty\}$,
  \begin{equation*}
    \min_{y \in \CalY}
    \StUl
    {\Posterior{\Prior{\pi}}{\Chan{C}}[X][y]}
    {\Posterior{\Prior{\pi}}{\Chan{C}}[X][y]}
    = \UlMin\Hyper{\Prior{\pi}}{\Chan{C}}
  \end{equation*}
\end{restatable}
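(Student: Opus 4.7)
The plan is to reduce the statement directly to Lemma~\ref{lemma:p-p} together with Definition~\ref{def:post-static-measures} of max-case (respectively min-case) posterior measurements, so almost no calculation is needed.

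First, I would observe that the left-hand side of the claim involves strategy-based $g$-vulnerability where the belief and the baseline coincide, namely both are $\Posterior{\Prior{\pi}}{\Chan{C}}[X][y]$. Lemma~\ref{lemma:p-p} already establishes that, for any distribution $p$, one has $\StVg{p}{p} = \Vg(p)$. Instantiating this lemma with $p = \Posterior{\Prior{\pi}}{\Chan{C}}[X][y]$ for each $y \in \CalY$ gives
\begin{equation*}
  \StVg
  {\Posterior{\Prior{\pi}}{\Chan{C}}[X][y]}
  {\Posterior{\Prior{\pi}}{\Chan{C}}[X][y]}
  = \Vg(\Posterior{\Prior{\pi}}{\Chan{C}}[X][y]).
\end{equation*}

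Next, I would take the maximum of both sides over $y \in \CalY$. The left-hand side becomes the expression on the left of the theorem, while the right-hand side becomes
\begin{equation*}
  \max_{y \in \CalY} \Vg(\Posterior{\Prior{\pi}}{\Chan{C}}[X][y]),
\end{equation*}
which is precisely $\VgMax\Hyper{\Prior{\pi}}{\Chan{C}}$ by Definition~\ref{def:post-static-measures}. This completes the gain-function case.

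The dual case for loss functions follows by exactly the same argument, using the second part of Lemma~\ref{lemma:p-p}, i.e., $\StUl{p}{p} = \Ul(p)$, and taking the minimum (rather than the maximum) over $y \in \CalY$, which by Definition~\ref{def:post-static-measures} yields $\UlMin\Hyper{\Prior{\pi}}{\Chan{C}}$. There is essentially no obstacle in this proof: all the conceptual work is already contained in Lemma~\ref{lemma:p-p}, and the only additional step is pulling the $\max$ (or $\min$) through an equality of pointwise values, which is immediate.
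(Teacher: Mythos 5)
Your proof is correct and follows exactly the same route as the paper, which simply notes that the result is immediate from Lemma~\ref{lemma:p-p} combined with Definition~\ref{def:post-static-measures}; you have just spelled out the pointwise instantiation and the passage to the $\max$ (resp.\ $\min$) explicitly.
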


Theorems \ref{thm:recover-exp-posterior} and \ref{thm:recover-max-posterior} 
follow directly from Lemma \ref{lemma:p-p}, 
from which we get that $
  \StVg {\Posterior{\Prior{\pi}}{\Chan{C}}[X][y]}
  {\Posterior{\Prior{\pi}}{\Chan{C}}[X][y]}
$ is exactly the traditional $\Vg( \Posterior{\Prior{\pi}}{\Chan{C}}[X][y])$.
A perhaps more interesting property
is that we can reconstruct the traditional prior $g$-vulnerability 
by averaging the strategy-based dynamic prior $g$-vulnerability over each possible posterior.
To illustrate, recall the thought experiment from Section~\ref{sec:st-knowledge1}.
If Alice has more information than Bob 
(because she made an observation that Bob has not),
but she decides to follow Bob's strategy,
she will end up with the same vulnerability as Bob would have computed.
In the end this shows that knowledge does not give you an advantage,
unless it guides the design of strategies.

\begin{restatable}[]{theorem}{restateRecoverPrior}
  \label{thm:recover-prior}
  Let $\Prior{\pi} : \Dist\mathcal{X}$ be the adversary's prior belief,
  and consider a channel $\Chan{C} : \mathcal{X} \to \Dist\mathcal{Y}$ modelling the system of interest.
  Then, for any gain function $\Gain : \CalW \times \CalX \to \Reals$, 
  \begin{equation*}
    \sum_{y \in \CalY} \Outer{\Prior{\pi}}{\Chan{C}}[y]\,
    \StVg{\Posterior{\Prior{\pi}}{\Chan{C}}[X][y]}{\Prior{\pi}}
    = \Vg(\Prior{\pi})
  \end{equation*}
  Dually, for any loss function $\Loss : \CalW \times \CalX \to
  \Reals_{\geq 0} \cup \{\infty\}$,
  \begin{equation*}
    \sum_{y \in \CalY} \Outer{\Prior{\pi}}{\Chan{C}}[y]\,
    \StUl{\Posterior{\Prior{\pi}}{\Chan{C}}[X][y]}{\Prior{\pi}}
    = \Ul(\Prior{\pi})
  \end{equation*}
\end{restatable}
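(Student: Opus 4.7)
The plan is to unfold the definition of strategy-based $g$-vulnerability (Definition~\ref{def:st-measures}) on the left-hand side and show that the structure collapses to $\Vg(\Prior{\pi})$ by a sequence of summation manipulations. First, I would write
\begin{equation*}
  \StVg
  {\Posterior{\Prior{\pi}}{\Chan{C}}[X][y]}
  {\Prior{\pi}}
  =
  \sum_{w \in \CalW} \St^{u}(\Prior{\pi})_{w}
  \sum_{x \in \CalX} \Posterior{\Prior{\pi}}{\Chan{C}}[x][y]\, \Gain(w, x),
\end{equation*}
and substitute this into the outer sum weighted by $\Outer{\Prior{\pi}}{\Chan{C}}[y]$.

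Next I would apply the identity $\Outer{\Prior{\pi}}{\Chan{C}}[y]\cdot\Posterior{\Prior{\pi}}{\Chan{C}}[x][y] = \Joint{\Prior{\pi}}{\Chan{C}}[x,y] = \Prior{\pi}_{x}\,\Chan{C}_{x,y}$, which is immediate from~\eqref{eq:outer} and~\eqref{eq:posterior}. After interchanging the finite sums, the $y$-summation sits innermost as $\sum_{y \in \CalY} \Chan{C}_{x,y}$, which equals $1$ because $\Chan{C}$ is a stochastic matrix; this eliminates $y$ entirely and leaves
\begin{equation*}
  \sum_{w \in \CalW} \St^{u}(\Prior{\pi})_{w}
  \sum_{x \in \CalX} \Prior{\pi}_{x}\, \Gain(w, x).
\end{equation*}

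For the final step, I would invoke Definition~\ref{def:strategy}: every $w$ in the support of $\St^{u}(\Prior{\pi})$ is an $\argmax$ of $\sum_{x} \Prior{\pi}_{x}\,\Gain(w, x)$, and hence the inner sum equals the constant $\Vg(\Prior{\pi})$ for each such $w$. Pulling this constant out and using that $\St^{u}(\Prior{\pi})$ is a probability distribution over $\CalW$, so $\sum_{w} \St^{u}(\Prior{\pi})_{w} = 1$, yields exactly $\Vg(\Prior{\pi})$. The dual statement for $\Ul$ follows by the same chain, swapping $\argmax$ for $\argmin$ (and using Remark~\ref{rmk:bound} to justify that the sums remain well-defined even when $\Loss$ takes value $\infty$, since by Lemma~\ref{lemma:bound-stl-shannon}-style reasoning the terms involved are finite when $\Prior{\pi}$ has full support on the relevant secrets; in general one appeals to the boundedness assumption on uncertainty measures).

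I do not expect a serious obstacle here: the argument is a purely algebraic unfolding. The only subtlety worth noting is \emph{why} averaging over optimal actions via $\St^{u}(\Prior{\pi})$ (as opposed to picking a particular one) does not affect the outcome --- it is because all actions in the support of $\St^{u}(\Prior{\pi})$ realise the same optimal value $\Vg(\Prior{\pi})$, so the convex combination is trivial. This is also the conceptual content of the theorem, echoing the thought experiment: Alice's decision to follow Bob's prior-based strategy recovers exactly Bob's prior vulnerability, regardless of which particular output she observes.
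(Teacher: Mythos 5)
Your proof is correct and follows essentially the same route as the paper's: unfold the definition, convert $\Outer{\Prior{\pi}}{\Chan{C}}[y]\cdot\Posterior{\Prior{\pi}}{\Chan{C}}[x][y]$ to the joint and then to $\Prior{\pi}_{x}\,\Chan{C}_{x,y}$, collapse $\sum_{y}\Chan{C}_{x,y}=1$, and finish by the optimality of every action in the support of the prior strategy (the paper cites the argument inside Lemma~\ref{lemma:p-p} for this last step). The only cosmetic difference is that the paper carries out the computation for an arbitrary optimal strategy $s \in \St(\Prior{\pi})$ and then specialises to the uniform one, whereas you work with $\St^{u}(\Prior{\pi})$ directly; this changes nothing of substance.
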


Alice's extra knowledge can come from two sources:
either she observed something that Bob has not 
or Bob's observation was the result of post-processing the output observed by Alice.%
\footnote{%
  If the output observed by Bob was not the result of post-processing Alice's observation,
  then Alice's and Bob's states of knowledge are not necessarily comparable.
  The measurement of dynamic leakage relies on the knowledge ordering (Section~\ref{sec:knowledge-ord}),
  which is only known from the context;
  there is nothing explicit in the distributions of Alice and Bob 
  which can always be used to determine who has more knowledge than the other.%
}
The next two theorems thus show that 
the extra knowledge due to the lack of post-processing is, again,
only useful if it guides the construction of strategies:

\begin{restatable}[]{theorem}{restateRecoverExpRefinement}%
  \label{thm:recover-exp-refinement}
  Let $\Prior{\pi} : \Dist\mathcal{X}$ be the adversary's prior belief, and consider two channels 
  $\Chan{C} : \CalX \to \Dist\CalY$ and $\Chan{D} : \CalX \to \Dist\CalZ$, $\Chan{C} \RefinedBy \Chan{D}$.
  For any gain function $\Gain : \CalW \times \CalX \to \Reals$, 
  the \emph{expected} posterior $\Gain$-vulnerability $\Vg\Hyper{\Prior{\pi}}{\Chan{D}}$ is recovered as
  \begin{equation*}
    \sum_{z \in \CalZ} \sum_{y \in \CalY}
    \Outer{\Prior{\pi}}{\Chan{C}}[y]\, \Chan{R}_{y, z}\,
    \StVg
    {\Posterior{\Prior{\pi}}{\Chan{C}}[X][y]}
    {\Posterior{\Prior{\pi}}{\Chan{D}}[X][z]}
  \end{equation*}
  Dually, for any loss function $\Loss : \CalW \times \CalX \to \Reals_{\geq 0} \cup \{\infty\}$,
  the \emph{expected} posterior $\Loss$-uncertainty $\Ul\Hyper{\Prior{\pi}}{\Chan{D}}$ is recovered as
  \begin{equation*}
    \sum_{z \in \CalZ} \sum_{y \in \CalY}
    \Outer{\Prior{\pi}}{\Chan{C}}[y]\, \Chan{R}_{y, z}\,
    \StUl
    {\Posterior{\Prior{\pi}}{\Chan{C}}[X][y]}
    {\Posterior{\Prior{\pi}}{\Chan{D}}[X][z]}
  \end{equation*}
\end{restatable}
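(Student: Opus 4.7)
The plan is to unfold the definition of strategy-based $g$-vulnerability on the left-hand side and then perform a change of summation order so that the inner post-processing sum collapses via the refinement identity $\Chan{D} = \Chan{C}\Cascade\Chan{R}$, at which point Lemma~\ref{lemma:p-p} identifies what remains with the expected posterior $g$-vulnerability.

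Concretely, I would first abbreviate $\gamma_y = \Posterior{\Prior{\pi}}{\Chan{C}}[X][y]$ and $\delta_z = \Posterior{\Prior{\pi}}{\Chan{D}}[X][z]$, and write out
\begin{equation*}
  \StVg{\gamma_y}{\delta_z}
  = \sum_{w \in \CalW} \St^{u}(\delta_z)_w \sum_{x \in \CalX} (\gamma_y)_x\, \Gain(w, x)
\end{equation*}
using Definition~\ref{def:st-measures}. Next, I would apply the identity $\Outer{\Prior{\pi}}{\Chan{C}}[y]\,(\gamma_y)_x = \Prior{\pi}_x\, \Chan{C}_{x,y}$, which follows from \eqref{eq:joint}--\eqref{eq:posterior}, to rewrite the weighted double sum on the left-hand side as
\begin{equation*}
  \sum_{z} \sum_{w} \St^{u}(\delta_z)_w \sum_{x} \Prior{\pi}_x\, \Gain(w, x) \sum_{y} \Chan{C}_{x,y}\,\Chan{R}_{y,z}.
\end{equation*}

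The key step is then recognising that the innermost sum over $y$ is exactly $(\Chan{C}\Cascade\Chan{R})_{x,z}$ by Definition~\ref{def:cascade}, and by the hypothesis $\Chan{C} \RefinedBy \Chan{D}$ witnessed by $\Chan{R}$, this equals $\Chan{D}_{x,z}$. Substituting back and invoking the dual identity $\Prior{\pi}_x\, \Chan{D}_{x,z} = \Outer{\Prior{\pi}}{\Chan{D}}[z]\,(\delta_z)_x$, the expression simplifies to
\begin{equation*}
  \sum_{z} \Outer{\Prior{\pi}}{\Chan{D}}[z] \sum_{w} \St^{u}(\delta_z)_w \sum_{x} (\delta_z)_x\, \Gain(w, x)
  = \sum_{z} \Outer{\Prior{\pi}}{\Chan{D}}[z]\, \StVg{\delta_z}{\delta_z},
\end{equation*}
and Lemma~\ref{lemma:p-p} reduces $\StVg{\delta_z}{\delta_z}$ to $\Vg(\delta_z)$, yielding $\Vg\Hyper{\Prior{\pi}}{\Chan{D}}$ by Definition~\ref{def:post-static-measures}. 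The loss-function case follows by the same argument with $\Gain$ replaced by $\Loss$, $\St^u$ by $\St[\Loss]^u$, and $\Vg$ by $\Ul$.

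I do not expect a serious obstacle here: the proof is essentially bookkeeping on joint/outer/inner distributions together with one use of the cascade identity. The only subtlety is justifying the swap of summations, which is unproblematic since $\CalX, \CalY, \CalZ$ are discrete and the set of optimal actions supporting $\St^u(\delta_z)$ is finite by Remark~\ref{rmk:optimal-actions}, so all sums are absolutely convergent (and finite in the gain case; in the loss case boundedness follows as in Remark~\ref{rmk:bound} together with Lemma~\ref{lemma:bound-stl-shannon}-style reasoning for $\LossShannon$).
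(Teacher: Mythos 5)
Your proof is correct and follows essentially the same route as the paper: unfold Definition~\ref{def:st-measures}, convert posteriors to joints, collapse the sum over $y$ via the cascade identity $\Chan{C}\Cascade\Chan{R} = \Chan{D}$, convert back, and finish with Lemma~\ref{lemma:p-p} and Definition~\ref{def:post-static-measures}. The only difference is organisational: the paper factors the same computation through an intermediate lemma that fixes $z$ and normalises by $\Outer{\Prior{\pi}}{\Chan{D}}[z]$ (so it can be reused for the max-case theorem), whereas you do it in one pass.
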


\begin{restatable}[]{theorem}{restateRecoverMaxRefinement}%
  \label{thm:recover-max-refinement}
  Let $\Prior{\pi} : \Dist\mathcal{X}$ be the adversary's prior belief, and consider two channels 
  $\Chan{C} : \CalX \to \Dist\CalY$ and $\Chan{D} : \CalX \to \Dist\CalZ$, $\Chan{C} \RefinedBy \Chan{D}$.
  For any gain function $\Gain : \CalW \times \CalX \to \Reals$, 
  the \emph{max-case} posterior $\Gain$-vulnerability $\VgMax\Hyper{\Prior{\pi}}{\Chan{D}}$ is recovered as
  \begin{equation*}
      \max_{z \in \CalZ} \frac{1}{\Outer{\Prior{\pi}}{\Chan{D}}[z]}
      \sum_{y \in \CalY}
      \Outer{\Prior{\pi}}{\Chan{C}}[y]\, \Chan{R}_{y, z}\,
      \StVg
      {\Posterior{\Prior{\pi}}{\Chan{C}}[X][y]}
      {\Posterior{\Prior{\pi}}{\Chan{D}}[X][z]}
    \end{equation*}
    Dually, for any loss function $\Loss : \CalW \times \CalX \to \Reals_{\geq 0} \cup \{\infty\}$,
    the \emph{min-case} posterior $\Loss$-uncertainty $\UlMin\Hyper{\Prior{\pi}}{\Chan{D}}$ is recovered as
    \begin{equation*}
      \min_{z \in \CalZ} \frac{1}{\Outer{\Prior{\pi}}{\Chan{D}}[z]}
      \sum_{y \in \CalY}
      \Outer{\Prior{\pi}}{\Chan{C}}[y]\, \Chan{R}_{y, z}\,
      \StUl
      {\Posterior{\Prior{\pi}}{\Chan{C}}[X][y]}
      {\Posterior{\Prior{\pi}}{\Chan{D}}[X][z]}
    \end{equation*}
\end{restatable}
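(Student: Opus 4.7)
The plan is to reduce the statement pointwise in $z$: for each fixed $z \in \CalZ$, I will show that
\begin{equation*}
  \frac{1}{\Outer{\Prior{\pi}}{\Chan{D}}[z]} \sum_{y \in \CalY} \Outer{\Prior{\pi}}{\Chan{C}}[y]\, \Chan{R}_{y,z}\, \StVg{\Posterior{\Prior{\pi}}{\Chan{C}}[X][y]}{\Posterior{\Prior{\pi}}{\Chan{D}}[X][z]} \;=\; \Vg(\Posterior{\Prior{\pi}}{\Chan{D}}[X][z])
\end{equation*}
after which taking $\max_z$ on both sides yields $\VgMax\Hyper{\Prior{\pi}}{\Chan{D}}$ by Definition~\ref{def:post-static-measures}. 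The dual loss-function equality follows by the same calculation with $\Gain \leadsto \Loss$, $\max \leadsto \min$, and $\Vg \leadsto \Ul$.

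The key algebraic identity I would rely on is the cascade decomposition of joint distributions: since $\Chan{D} = \Chan{C}\Cascade\Chan{R}$, a short computation using \eqref{eq:joint} and \eqref{eq:outer} gives
\begin{equation*}
  \Joint{\Prior{\pi}}{\Chan{D}}[x,z] \;=\; \sum_{y \in \CalY} \Outer{\Prior{\pi}}{\Chan{C}}[y]\, \Chan{R}_{y,z}\, \Posterior{\Prior{\pi}}{\Chan{C}}[x][y]
\end{equation*}
and dividing by $\Outer{\Prior{\pi}}{\Chan{D}}[z]$ expresses $\Posterior{\Prior{\pi}}{\Chan{D}}[X][z]$ as a convex combination of the $\Chan{C}$-posteriors $\Posterior{\Prior{\pi}}{\Chan{C}}[X][y]$ with the same weights appearing in the statement.

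Next, I would expand $\StVgFn$ via Definition~\ref{def:st-measures}, swap the finite sums, and factor out the uniform strategy $\sigma \EqDef \St^{u}(\Posterior{\Prior{\pi}}{\Chan{D}}[X][z])$. The crucial point is that $\sigma$ depends only on $z$, not on $y$, so it can be pulled outside the outer sum over $y$. Substituting the cascade identity in the inner sum collapses the weighted average over baselines into $\sum_x \Posterior{\Prior{\pi}}{\Chan{D}}[x][z]\, \Gain(w,x)$, so what remains is exactly $\StVg{\Posterior{\Prior{\pi}}{\Chan{D}}[X][z]}{\Posterior{\Prior{\pi}}{\Chan{D}}[X][z]}$. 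A single invocation of Lemma~\ref{lemma:p-p} turns this into $\Vg(\Posterior{\Prior{\pi}}{\Chan{D}}[X][z])$, establishing the pointwise identity.

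There is no real obstacle: the proof is essentially bookkeeping, and is in fact simpler than the expected-case analogue in Theorem~\ref{thm:recover-exp-refinement}, because $\max$ commutes trivially with a pointwise equality once the inner identity is in hand, so no summation over $z$ needs to be manipulated. The only subtle step to check carefully is the linearity-in-baseline move, which works precisely because the belief (and hence the strategy $\sigma$) is held fixed at $\Posterior{\Prior{\pi}}{\Chan{D}}[X][z]$ across all terms in the $y$-sum; this is exactly the asymmetric role the two arguments of $\StVgFn$ play in our formalism, and it is what distinguishes the statement from a naïve convexity claim.
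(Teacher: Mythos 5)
Your proposal is correct and follows essentially the same route as the paper: the paper also isolates the pointwise-in-$z$ identity (its Lemma~\ref{lemma:recover-dyn-refinement}), proves it by unfolding posteriors to joints and using the cascade identity $\sum_{y}\Chan{C}_{x,y}\Chan{R}_{y,z}=\Chan{D}_{x,z}$ with the strategy fixed by the $\Chan{D}$-posterior, concludes via Lemma~\ref{lemma:p-p}, and then takes the max using Definition~\ref{def:post-static-measures}. Your packaging of the key step as affinity of $\StVgFn$ in the baseline argument combined with the convex-combination decomposition of $\Posterior{\Prior{\pi}}{\Chan{D}}[X][z]$ is the same algebra, just stated more conceptually.
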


\section{Case study: privacy-preserving data releases}
\label{sec:case-studies}

In this section we show a dynamic scenario in which post-processing is safe, 
but the traditional definition of dynamic vulnerability
(Definition~\ref{def:post-dyn-measures})
fails to capture that.
This example is set in the context of privacy-preserving dataset releases.

To satisfy privacy constraints, datasets may undergo various transformations,
which we abstract into two major steps.
First, \emph{\textbf{d}e-identification}, in which obvious personal identifiers,
such as names or official ids,
are stripped from the original data.
This is modelled as a channel $\Chan{D} : \mathcal{X} \to \Dist\mathcal{Y}$
that maps the original datasets to de-identified datasets.
Second, mitigation,
where one or more \emph{privacy \textbf{m}echanisms} are employed to protect against different kinds of privacy attacks.
This is modelled as a channel $\Chan{M} : \mathcal{Y} \to \Dist\mathcal{Z}$
that maps de-identified datasets to sanitised datasets.
The full privacy-preserving data-release process can be described by the cascading $
  \Chan{D}\Cascade\Chan{M} : \CalX{} \to \Dist{}\CalZ{}
$ (Definition~\ref{def:cascade}), 
depicted as \qm{Data Release} in Figure~\ref{fig:data-release-model}.

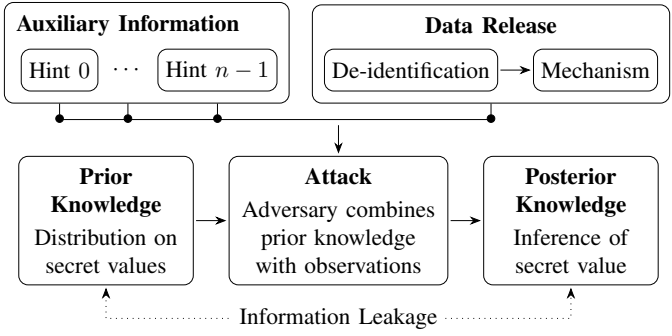
\begin{figure}[tb]
  \centering
  \resizebox{\columnwidth}{!}{\begin{tikzpicture}[
  arrow/.style = {shorten < = 1pt, shorten > = 1pt},
  inner box/.style = {draw, rounded corners, minimum height = 4ex},
  outer box/.style = {draw, rounded corners, inner sep = 5pt},
  title/.style = {inner sep = 0pt},
]
  \node[title] (auxInfo) at (0, 0) {\textbf{Auxiliary Information}};
  \node[below = 2ex of auxInfo] (hintDots) {$\cdots$};
  \node[inner box, left = .25em of hintDots] (fstHint) {Hint $0$};
  \node[inner box, right = .25em of hintDots] (lastHint) {Hint $n - 1$};
  \node[outer box, fit = (auxInfo) (hintDots) (fstHint) (lastHint)] 
    (boxAuxInfo) {};

  \coordinate[below = 3ex of fstHint] (fstHintOutput);
  \coordinate[below = 3ex of lastHint] (lastHintOutput);
  \coordinate (hintDotsOutput) at ($(hintDots |- lastHint.south) + (0, -3ex)$);

  \draw[-Circle] (fstHintOutput |- boxAuxInfo.south) to (fstHintOutput);
  \draw[-Circle] (hintDots |- boxAuxInfo.south) to (hintDotsOutput);
  \draw[-Circle] (lastHint |- boxAuxInfo.south) to (lastHintOutput);

  \node[inner box, right = 2em of lastHint] (deid) {De-identification};
  \node[inner box, right = 1.5em of deid]
    (mechanism) {Mechanism};
  \coordinate (midDataRelease) at ($(deid.west)!0.5!(mechanism.east)$);
  \node[title] (dataRelease) at (auxInfo -| midDataRelease)
    {\textbf{Data Release}};
  \node[outer box, fit = (dataRelease) (deid) (mechanism)] 
    (boxDataRelease) {};

  \draw[-Stealth, arrow] (deid) to (mechanism);

  \coordinate (dataReleaseOutput) at (lastHintOutput -| boxDataRelease);
  \draw[-Circle] (boxDataRelease) to (dataReleaseOutput);

  \node[
    outer box, below = 5ex of boxAuxInfo, xshift = -1.75em,
    text width=6.5em, text centered, minimum height=12ex
  ] (prior) {%
    \textbf{Prior Knowledge} \\[.5ex]
    Distribution on secret values
  };

  \node[
    outer box, right = 1.5em of prior,
    text width=8.4em, text centered, minimum height=12ex
  ] (attack) {%
      \textbf{Attack} \\[.5ex]
      Adversary combines prior knowledge with observations
  };

  \draw[] ($(fstHintOutput)+(0, .05)$) to ($(dataReleaseOutput)+(0, .05)$);
  \draw[-Stealth, arrow] (prior) to (attack);
  \draw[-Stealth, arrow] (dataReleaseOutput -| attack) to (attack);

  \node[
    outer box, right = 1.5em of attack,
    text width=6.5em, text centered, minimum height=12ex
  ] (posterior) {%
      \textbf{Posterior Knowledge} \\[.5ex]
      Inference of secret value
  };

  \draw[-Stealth, arrow] (attack) to (posterior);

  \draw[Stealth-Stealth, arrow, dotted] (prior |- posterior.south) 
    to +(0, -.5) 
    to node[fill = white, midway] {Information Leakage} ($(posterior.south) + (0, -.5)$) 
    to (posterior.south);
\end{tikzpicture}}%
  \caption{Attack model against privacy-preserving data releases.\label{fig:data-release-model}}
\end{figure}

We assume that the adversary has a prior knowledge about the original datasets.
Then, once the data is actually published, the adversary observes a sanitised dataset,
which leads them to update their knowledge.
We assume also that the adversary has access to some auxiliary information 
that can be used to further improve their knowledge,
often called \emph{quasi-identifiers} (QIDs)~\cite{Dalenius1986,Samarati1998,Sweeney2000,Narayanan2008}.
Formally, the adversary observes
\begin{enumerate*}[label=(\roman*)]
\item the sanitised dataset produced by the data-release system $\Chan{D}\Cascade\Chan{M}$; and

\item some QID values of one of the individuals,
  which can be viewed as the output of another channel $\Chan{H} : \CalX{} \to \Dist{}\CalQ{}$
  that produces a \emph{\textbf{h}int} from that user's record.
\end{enumerate*}

We allow multiple hints (see Figure~\ref{fig:data-release-model})
and assume these are independent of each other, and also of the output of $\Chan{D}\Cascade\Chan{M}$. 
This translates to the parallel composition
$\Chan{H}[$0$] \ParComp \cdots \ParComp \Chan{H}[$n\!\!-\!\!1$] \ParComp (\Chan{D}\Cascade\Chan{M})$,
defined for two channels $\Chan{C} : \CalX \to \Dist\CalY$ and $\Chan{D} : \CalX \to \Dist\CalZ$ as
\begin{equation}
  (\Chan{C} \ParComp \Chan{D})_{x, \Tuple{y, z}}
  \EqDef
  \Chan{C}_{x, y}\, \Chan{D}_{x, z}
  \quad\forall x \in \CalY \textnormal{ and } z \in \CalZ
\end{equation}
Figure~\ref{fig:data-release-model} then describes the attack model for data releases.

\begin{Example}{ex:case-studies:data-release}
  Consider an adversary $\NoiseAdv$ 
  who wants to determine which locations users visited, and how often, during a time period.
  Beforehand, he does not know anything.
  This is represented as a uniform prior over datasets,
  which are mappings from users to a multiset of locations visited by them:
  \begin{equation*}
    \begin{array}{c@{\;\;}c@{}c@{\;}c}
      \Prior{\upsilon}
      &&& \\[.5ex]
      x_{0} = \Map{
        \MapEntry{\textnormal{\Target}}{\Set{\textnormal{A, A}}},
        \MapEntry{\textnormal{Bob}}{\Set{\textnormal{B}}}
      } & \ChanLDelim{6} & \frac{1}{\Card{\CalX}} & \ChanRDelim{6} \\[.5ex]
      x_{1} = \Map{
        \MapEntry{\textnormal{\Target}}{\Set{\textnormal{B}}},
        \MapEntry{\textnormal{Bob}}{\Set{\textnormal{A, A}}}
      } && \frac{1}{\Card{\CalX}} & \\[.5ex]
      x_{2} = \Map{
        \MapEntry{\textnormal{\Target}}{\Set{\textnormal{A, B}}},
        \MapEntry{\textnormal{Bob}}{\Set{\textnormal{A}}}
      } && \frac{1}{\Card{\CalX}} & \\[.5ex]
      x_{3} = \Map{
        \MapEntry{\textnormal{\Target}}{\Set{\textnormal{A}}},
        \MapEntry{\textnormal{Bob}}{\Set{\textnormal{A, B}}}
      } && \frac{1}{\Card{\CalX}} & \\
      \vdots && \vdots &
    \end{array}
  \end{equation*}

  Now, suppose that $\NoiseAdv$ gains access to the de-identified database 
  $s = \Map{
    \MapEntry{0}{\Set{\textnormal{A, A}}},
    \MapEntry{1}{\Set{\textnormal{B}}}
  }$.
  Moreover, he knows that the database he received was perturbed using the mechanism
  \begin{center}
      \begin{tabular}{l}
      if location $=$ A then replace with (A $\IChoice[\sfrac{3}{4}]$ B) \\
      else replace with (A $\IChoice[\sfrac{1}{4}]$ B).
    \end{tabular}
  \end{center}
  This means that the probability of observing the sanitised dataset $s$ given, e.g.,
  that the secret is $x_{2}$, is $\sfrac{3}{4} \cdot \sfrac{1}{4} \cdot \sfrac{1}{4} = \sfrac{3}{64},$
  noting that the mechanism is applied onto each location in the dataset.  
  As such, when lifted to the domain of datasets,
  the mechanism can be represented by the following channel:
  \begin{equation*}
    \begin{array}{c@{\;}c@{}cc@{\;\;}c}
      \Chan{M}
      && s & \cdots & \\[.5ex]
      d_{0} = \Map{
         \MapEntry{1}{\Set{\textnormal{A, A}}},
         \MapEntry{2}{\Set{\textnormal{B}}}
      }
      & \ChanLDelim{3.8} & \frac{27}{64} & \cdots & \ChanRDelim{3.8} \\[.5ex]
      d_{1} = \Map{
         \MapEntry{1}{\Set{\textnormal{A, B}}},
         \MapEntry{2}{\Set{\textnormal{A}}}
      }
      && \frac{3}{64} & \cdots & \\
      \vdots && \vdots & \vdots &
    \end{array}
  \end{equation*}
  where $s = \Map{
    \MapEntry{0}{\Set{\textnormal{A, A}}},
    \MapEntry{1}{\Set{\textnormal{B}}}
  }$ is the sanitised dataset observed by $\NoiseAdv$ 
  and $d_{i}$ is one of the possible de-identified datasets produced by channel $\Chan{D}$.
  The system $\Chan{D}\Cascade\Chan{M}$ is then
  \begin{equation*}
    \begin{array}{c@{\;}c@{}c@{\;\;}c@{\;\;}c@{\;}c}
      \Chan{D}
      && d_{0} & d_{1} & \cdots & \\[.5ex]
      x_{0} & \ChanLDelim{6} & 1 & 0 & 0 & \ChanRDelim{6} \\[.5ex]
      x_{1} && 1 & 0 & 0 & \\[.5ex]
      x_{2} && 0 & 1 & 0 & \\[.5ex]
      x_{3} && 0 & 1 & 0 & \\
      \vdots && \vdots & \vdots & \vdots &
    \end{array}
    \hspace{-1em}
    \begin{array}{c@{\;}c@{}c@{\;\;}c@{\;\;}c}
      \Chan{M}
      && s & \cdots & \\[.5ex]
      d_{0} & \ChanLDelim{6} & \frac{27}{64} & \cdots & \ChanRDelim{6} \\[.5ex]
      d_{1} && \frac{3}{64} & \cdots & \\[.5ex]
      \vdots && \vdots & \vdots & \\[.5ex]
      &&&& \\
      &&&&
    \end{array}
    \hspace{-1em}=\hspace{-.5em}
    \begin{array}{c@{\;}c@{}c@{\;\;}c@{\;\;}c}
      \Chan{D}\Cascade\Chan{M}
      && s & \cdots & \\[.5ex]
      x_{0} & \ChanLDelim{6} & \frac{27}{64} & \cdots & \ChanRDelim{6} \\[.5ex]
      x_{1} && \frac{27}{64} & \cdots & \\[.5ex]
      x_{2} && \frac{3}{64} & \cdots & \\[.5ex]
      x_{3} && \frac{3}{64} & \cdots & \\
      \vdots && \vdots & \vdots &
    \end{array}
  \end{equation*}

  Suppose also that $\NoiseAdv$ observes the hint that \Target\ visited location A,
  and a histogram $k = \Map{ \MapEntry{\textnormal{A}}{2}, \MapEntry{\textnormal{B}}{1} }$ 
  with the number of occurrences of each location over the entire dataset.
  Let $\Chan{H}[Hist] : \CalX \to \Dist\CalK$ model the system that maps datasets to histograms,
  and let $\Chan{H}[\Target]$ reveal one of \Target's locations,
  defined in terms of the frequency of locations in her record.
  For example, $\Chan{H}[\Target]_{x_{2}, \textnormal{A}} = \sfrac{1}{2}$,
  as location A corresponds to half of the locations visited by \Target{} in $x_{2}$.
  Then, $\NoiseAdv$ has the composition
  \begin{equation*}
    \begin{array}{c@{}c@{}c@{\;\;}c@{}c}
      \Chan{H}[Hist]
      && k & \cdots & \\[.5ex]
      x_{0} & \ChanLDelim{6} & 1 & 0 & \ChanRDelim{6} \\[.5ex]
      x_{1} && 1 & 0 & \\[.5ex]
      x_{2} && 1 & 0 & \\[.5ex]
      x_{3} && 1 & 0 & \\
      \vdots && \raisebox{.5ex}{0} & \vdots &
    \end{array}
    \hspace{-1em}\ParComp{}\hspace{-.75em}
    \begin{array}{c@{}c@{}c@{\;\;}c@{\;}c}
      \Chan{H}[\Target]
      && \textnormal{A} & \textnormal{B} & \\[.5ex]
      x_{0} & \ChanLDelim{6} & 1 & 0 & \ChanRDelim{6} \\[.5ex]
      x_{1} && 0 & 1 & \\[.5ex]
      x_{2} && \frac{1}{2} & \frac{1}{2} & \\[.5ex]
      x_{3} && 1 & 0 & \\
      \vdots && \vdots & \vdots &
    \end{array}
    \hspace{-1.15em}=\hspace{-1.30em}
    \begin{array}{c@{\;\;}c@{}c@{\;}c@{\;}c@{\;\;}c}
      && \Tuple{k, \textnormal{A}} & \Tuple{k,\textnormal{B}} & \cdots & \\[.5ex]
      & \ChanLDelim{6} & 1 & 0 & \cdots & \ChanRDelim{6} \\[.5ex]
      && 0 & 1 & \cdots & \\[.5ex]
      && \frac{1}{2} & \frac{1}{2} & \cdots & \\[.5ex]
      && 1 & 0 & \cdots & \\
      && \raisebox{.5ex}{0} & \raisebox{.5ex}{0} & \vdots &
    \end{array}
  \end{equation*}

  Let $
    \Chan{S} =
    \Chan{H}[Hist] \ParComp
    \Chan{H}[Alex] \ParComp (\Chan{D}\Cascade\Chan{M}).
  $
  By combining his uniform prior $\upsilon$ with the two components above,
  $\NoiseAdv$ can construct the posterior knowledge $
    \Posterior
    {\Prior{\upsilon}}
    {\Chan{S}}[X][\Tuple{k, \textnormal{A}, s}]
    =
    \OrdSet{\sfrac{6}{7}, 0, \sfrac{1}{21}, \sfrac{2}{21}, \ldots}.
  $
  From this, given the observation $\Tuple{k, \textnormal{A}, s}$,
  we can conclude that $\NoiseAdv$'s best action is to guess $x_{0}$, which,
  following the traditional definition of dynamic Bayes vulnerability (Definition~\ref{def:post-dyn-measures}),
  gives him a chance of success of
  $
    \Vg[\GainBayes]
    (\Posterior{\Prior{\upsilon}}{\Chan{S}}[X][\Tuple{k, \textnormal{A}, s}])
    =
    \sfrac{6}{7}.
  $

  Now, consider another adversary $\TrueAdv$ who observes the original (de-identified) data $
    d_{1} = \Map{
      \MapEntry{0}{\Set{\textnormal{A, B}}},
      \MapEntry{1}{\Set{\textnormal{A}}}
    }
  $ before being sanitised onto dataset $s$,
  in addition to the histogram $k$ (which can be implicitly derived from $d_{1}$)
  and the hint that \Target\ visited location A.
  Then, she can construct the following posterior knowledge, where $
    \Chan{P} =
    \Chan{H}[Hist] \ParComp
    \Chan{H}[Alex] \ParComp \Chan{D}
  $: $
    \Posterior{\Prior{\upsilon}}{\Chan{P}}[X][\Tuple{k, \textnormal{A}, d_{1}}]
    =
    \OrdSet{0, 0, \sfrac{1}{3}, \sfrac{2}{3}, \ldots}.
  $
  Therefore, given the observation $\Tuple{k, \textnormal{A}, d_{1}}$,
  $\TrueAdv$'s best action is to guess $x_{3}$,
  which gives her a chance of success of $
    \Vg[\GainBayes]
    (\Posterior{\Prior{\upsilon}}{\Chan{P}}[X][\Tuple{k, \textnormal{A}, d_{1}}])
    =
    \sfrac{2}{3}.
  $

  Notice that $
    \Vg[\GainBayes]
    (\Posterior{\Prior{\upsilon}}{\Chan{S}}[X][\Tuple{k, \textnormal{A}, s}])
    >
    \Vg[\GainBayes]
    (\Posterior{\Prior{\upsilon}}{\Chan{P}}[X][\Tuple{k, \textnormal{A}, d_{1}}])
  $, suggesting that $\NoiseAdv$ has a better chance of guessing the correct secret.
  Nonetheless, his posterior knowledge is skewed towards the dataset $x_{0}$,
  which contains the wrong data: 
  we assumed that $\TrueAdv$ observed the real data,
  and there is no record \Set{A, A} in $x_{3}$.
  Hence, $\NoiseAdv$ would actually make a wrong guess;
  that is, his chance of success \emph{in practice} is zero.

  Now, let us see what happens when we employ the proposed strategy-based formalisation.
  Formally, $\NoiseAdv$'s strategy with respect to gain function $\GainBayes$ is a point distribution $\PointDist{x_{0}}$; 
  i.e., $
    \St[\GainBayes]^{u}(\Posterior{\Prior{\upsilon}}{\Chan{S}}
    [X][\Tuple{k, \textnormal{A}, s}])
    =
    \PointDist{x_{0}}
    =
    \OrdSet{1, 0, 0, 0, \ldots}.
  $
  In contrast, the strategy of $\TrueAdv$ is a point distribution $\PointDist{x_{3}}$; i.e., $
    \St[\GainBayes]^{u}(\Posterior{\Prior{\upsilon}}{\Chan{P}}
    [X][\Tuple{k, \textnormal{A}, d_{1}}])
    =
    \PointDist{x_{3}}
    =
    \OrdSet{0, 0, 0, 1, \ldots}.
  $
  Given the two strategies above,
  and taking the posterior knowledge of adversary $\TrueAdv$ as the baseline for the evaluation of such strategies
  ($\Chan{D} \RefinedBy \Chan{D}\Cascade\Chan{M}$ implies that $\Chan{P} \RefinedBy \Chan{S}$~\cite[Thm 5]{Americo2018}),
  we compute that $\NoiseAdv$'s chance of determining the secret dataset is
  \begin{align*}
    &\;\;
      \StVg[\GainBayes]
      {\Posterior{\Prior{\upsilon}}{\Chan{P}}[X][\Tuple{k, \textnormal{A}, d_{1}}]}
      {\Posterior{\Prior{\upsilon}}{\Chan{S}}[X][\Tuple{k, \textnormal{A}, s}]}
    \\ =
    &\;\;
      \sum_{w \in \CalW}
      \St[\GainBayes]^{\Review{u}}(\Posterior{\Prior{\upsilon}}{\Chan{S}}[X][%
      \Tuple{k, \textnormal{A}, s}])_{w} \sum_{x \in \CalX}
      \Posterior{\Prior{\upsilon}}{\Chan{P}}[x][\Tuple{k, \textnormal{A}, d_{1}}]\,
      \GainBayes(w, x)
    \\ =
    &\;\;
      \sum_{w \in \CalW}
      \PointDist{x_{0}}
      \sum_{x \in \CalX}
      \Posterior{\Prior{\upsilon}}{\Chan{P}}[x][\Tuple{k, \textnormal{A}, d_{1}}]\,
      \GainBayes(w, x)
    \\ =
    &\;\;
      \Posterior{\Prior{\upsilon}}{\Chan{P}}
      [x_{0}][\Tuple{k, \textnormal{A}, d_{1}}]\,
      \Comment{Definition of $\PointDist{x_{0}}$ in \eqref{eq:point-dist}; \\
      Definition of $\GainBayes$ in \eqref{eq:gain-bayes}}
    =
    0
  \end{align*}
  which is smaller than $\TrueAdv$'s chance of success, as expected:
  \begin{align*}
    &\;\;
      \StVg[\GainBayes]
      {\Posterior{\Prior{\upsilon}}{\Chan{P}}[X][\Tuple{k, \textnormal{A}, d_{1}}]}
      {\Posterior{\Prior{\upsilon}}{\Chan{P}}[X][\Tuple{k, \textnormal{A}, d_{1}}]}
    \\ =
    &\;\;
      \Vg[\GainBayes](\Posterior{\Prior{\upsilon}}{\Chan{P}}
      [X][\Tuple{k, \textnormal{A}, d_{1}}])
    = 
    \sfrac{2}{3}\qedhere
  \end{align*}
\end{Example}

As with Example~\ref{ex:motivation:query}, the example above 
shows how the traditional definition of information-flow measures leads to incorrect conclusions 
when comparing specific executions of systems (that is, dynamic scenarios).
And, as the example illustrates, this issue is solved by our formalisation.

\section{Multi-step analysis}%
\label{sec:multi-step}

In Section~\ref{sec:soundness-dyn}
we proved that the strategy-based formalisation of dynamic leakage 
satisfies relaxed versions of  the axioms of monotonicity and of the data-processing inequality.
More specifically,
these properties are satisfied whenever the analysis involves a single $\mathrm{input} \to \mathrm{output}$ dynamic step.
Figure~\ref{fig:single-step-diagram} depicts this kind of analysis:
the upper diagram represents a dynamic analysis 
with respect to some output $y \in \CalY$ from channel $\Chan{C} = \Chan{B}\Cascade\Chan{R}$,
obtained from a sequence of steps $\Dist\CalX \to \Dist\mathcal{O} \to y$,
whereas the lower diagram represents a dynamic analysis 
of some output $z \in \CalZ$ from channel $\Chan{D} = \Chan{C}\Cascade\Chan{S}$,
obtained from a sequence $\Dist\CalX \to \Dist\mathcal{O} \to y \to z$.
Notice that, in the lower diagram, 
there is only one step that is entirely dynamic (the last step $y \to z$).
This is the kind of single-step analysis covered by Theorem~\ref{thm:dpi}.

A comparison between the result of two single-step analyses,
before and after the execution of a post-processing step
(i.e., comparing the two diagrams from Figure~\ref{fig:single-step-diagram}),
corresponds to the question
\qm{Taking into account every possible path from the input that, going
  through system $\Chan{C}$, could lead to an output $y \in \CalY$, is
it safe to post-process $y$ with $\Chan{S}$?}
and the conclusion, given Theorem~\ref{thm:dpi}, is that it is safe \emph{on average}. 
Here,
even though $\Chan{D} = \Chan{B}\Cascade\Chan{R}\Cascade\Chan{S}$ involves a first post-processing step $\Chan{R}$,
we are still considering every path through $\Chan{R}$.
To illustrate that, recall Example~\ref{ex:motivation:query}
(pages~\pageref{ex:motivation:query}~and~\pageref{ex:case-studies:query}): 
instead of focusing on a particular secret answer \textsc{no},
a single-step analysis would consider every secret answer 
that could be mapped by the first post-processing step $\Chan{P}$ to \textsc{yes};
if this was the kind of analysis we were interested on,
then we could conclude that it is safe (on average) to replace $\Chan{P}$ with $\Chan{P}\Cascade\Chan{S}$.%

On the other hand,
if we also focus on a concrete output $o \in \mathcal{O}$ from $\Chan{B}$,
so that we are comparing the result of two analyses with semantics 
$\Dist\CalX \to o \to y$ vs.\ $\Dist\CalX \to o \to y \to z$,
then the comparison now involves two dynamic steps,
from $o \to y$ and then $y \to z$,
which is \emph{not} covered by Theorem~\ref{thm:dpi}.
This is what we call a \emph{multi-step} analysis,
which Figure~\ref{fig:multi-step-diagram} depicts, 
and it is exactly the kind of analysis described in Example~\ref{ex:case-studies:query},
by focusing on a concrete original answer \textsc{no},
instead of taking into account every possible original answer. 

Hence, a comparison between two multi-step analyses can be interpreted as:
\qm{Taking into account every possible path from the input that, going
  through system $\Chan{B}$, could lead to an output $o \in
  \mathcal{O}$, which was then mapped to $y \in \mathcal{Y}$ by system
$\Chan{R}$, is it safe to post-process $y$ with $\Chan{S}$?}.
In this case the conclusion is that it is not necessarily safe,
as Example~\ref{ex:case-studies:query} illustrates,
and thus a stronger version of the data-processing inequality cannot be guaranteed in general.
We reinforce, nonetheless, that the conclusions that we obtain from single- and multi-step analysis are not comparable,
as their semantics are not compatible, and therefore one result does not invalidate the other.

\begin{figure}[tb]
  \centering%
  \begin{subfigure}{\columnwidth}
  \begin{tikzpicture}
  \node[] (C/input) at (0, 0) {$\Dist\mathcal{X}$};
  \node[draw, rounded corners, inner sep = 6pt, right = 2em of C/input]
  (C/baseline) {$\Chan{B}$};

  \draw[-Stealth, shorten < = 2pt, shorten > = 2pt]
  ($(C/input.north east |- C/baseline.north west) + (0, -1ex)$)
  to ($(C/baseline.north west) + (0, -1ex)$);
  
  \draw[-Stealth, shorten < = 2pt, shorten > = 2pt]
  (C/input) to (C/baseline);

  \draw[-Stealth, shorten < = 2pt, shorten > = 2pt]
  ($(C/input.south east |- C/baseline.south west) + (0, +1ex)$)
  to ($(C/baseline.south west) + (0, +1ex)$);

  \node[draw, rounded corners, inner sep = 6pt, right = 5em of C/baseline]
  (C/C) {$\Chan{R}$};

  \draw[-Stealth, shorten < = 2pt, shorten > = 2pt]
  ($(C/baseline.north east |- C/C.north west) + (0, -1ex)$)
  to node[midway, fill=white] {\phantom{$\Dist\mathcal{O}$}}
  ($(C/C.north west) + (0, -1ex)$);
  
  \draw[-Stealth, shorten < = 2pt, shorten > = 2pt]
  ($(C/baseline.south east |- C/C.south west) + (0, +1ex)$)
  to node[midway, fill=white] {\phantom{$\Dist\mathcal{O}$}}
  ($(C/C.south west) + (0, +1ex)$);

  \draw[-Stealth, shorten < = 2pt, shorten > = 2pt]
  (C/baseline) to node[midway, fill=white] {$\Dist\mathcal{O}$} (C/C);

  \node[inner sep = 6pt, right = 5em of C/C] (C/output)
  {$\phantom{\Chan{R}}$};
  
  \draw[-Stealth, shorten < = 2pt, shorten > = 2pt]
  (C/C) to node[midway, fill=white] {$y$} (C/output);

  \node[below = 3ex of C/input] (D/input) {$\Dist\mathcal{X}$};
  \node[draw, rounded corners, inner sep = 6pt, right = 2em of D/input]
  (D/baseline) {$\Chan{B}$};

  \draw[-Stealth, shorten < = 2pt, shorten > = 2pt]
  ($(D/input.north east |- D/baseline.north west) + (0, -1ex)$)
  to ($(D/baseline.north west) + (0, -1ex)$);
  
  \draw[-Stealth, shorten < = 2pt, shorten > = 2pt]
  (D/input) to (D/baseline);

  \draw[-Stealth, shorten < = 2pt, shorten > = 2pt]
  ($(D/input.south east |- D/baseline.south west) + (0, +1ex)$)
  to ($(D/baseline.south west) + (0, +1ex)$);

  \node[draw, rounded corners, inner sep = 6pt, right = 5em of D/baseline]
  (D/C) {$\Chan{R}$};

  \draw[-Stealth, shorten < = 2pt, shorten > = 2pt]
  ($(D/baseline.north east |- D/C.north west) + (0, -1ex)$)
  to node[midway, fill=white] {\phantom{$\Dist\mathcal{O}$}}
  ($(D/C.north west) + (0, -1ex)$);
  
  \draw[-Stealth, shorten < = 2pt, shorten > = 2pt]
  ($(D/baseline.south east |- D/C.south west) + (0, +1ex)$)
  to node[midway, fill=white] {\phantom{$\Dist\mathcal{O}$}}
  ($(D/C.south west) + (0, +1ex)$);

  \draw[-Stealth, shorten < = 2pt, shorten > = 2pt]
  (D/baseline) to node[midway, fill=white] {$\Dist\mathcal{O}$} (D/C);

  \node[draw, rounded corners, inner sep = 6pt, right = 5em of D/C]
  (D/D) {$\Chan{S}$};
  
  \draw[-Stealth, shorten < = 2pt, shorten > = 2pt]
  (D/C) to node[midway, fill=white] {$y$} (D/D);
  
  \node[inner sep = 6pt, right = 5em of D/D] (D/output)
  {$\phantom{\Chan{R}}$};

  \draw[-Stealth, shorten < = 2pt, shorten > = 2pt]
  (D/D) to node[midway, fill=white] {$z$} (D/output);

  \draw[dotted, color = HighlightColor, very thick]
  ($(C/output.north west) + (-1em, +1ex)$) to ($(D/D.south west) + (-1em, -1ex)$);
\end{tikzpicture}%
  \caption{Single-step analysis.\label{fig:single-step-diagram}}%
  \hspace*{\fill}%
  \end{subfigure}

  \begin{subfigure}{\columnwidth}
  \begin{tikzpicture}
  \node[] (C/input) at (0, 0) {$\Dist\mathcal{X}$};
  \node[draw, rounded corners, inner sep = 6pt, right = 2em of C/input]
  (C/baseline) {$\Chan{B}$};

  \draw[-Stealth, shorten < = 2pt, shorten > = 2pt]
  ($(C/input.north east |- C/baseline.north west) + (0, -1ex)$)
  to ($(C/baseline.north west) + (0, -1ex)$);
  
  \draw[-Stealth, shorten < = 2pt, shorten > = 2pt]
  (C/input) to (C/baseline);

  \draw[-Stealth, shorten < = 2pt, shorten > = 2pt]
  ($(C/input.south east |- C/baseline.south west) + (0, +1ex)$)
  to ($(C/baseline.south west) + (0, +1ex)$);

  \node[draw, rounded corners, inner sep = 6pt, right = 5em of C/baseline]
  (C/C) {$\Chan{R}$};

  \draw[-Stealth, shorten < = 2pt, shorten > = 2pt]
  (C/baseline) to node[midway, fill=white] {$o$} (C/C);

  \node[inner sep = 6pt, right = 5em of C/C] (C/output)
  {$\phantom{\Chan{R}}$};
  
  \draw[-Stealth, shorten < = 2pt, shorten > = 2pt]
  (C/C) to node[midway, fill=white] {$y$} (C/output);

  \node[below = 3ex of C/input] (D/input) {$\Dist\mathcal{X}$};
  \node[draw, rounded corners, inner sep = 6pt, right = 2em of D/input]
  (D/baseline) {$\Chan{B}$};

  \draw[-Stealth, shorten < = 2pt, shorten > = 2pt]
  ($(D/input.north east |- D/baseline.north west) + (0, -1ex)$)
  to ($(D/baseline.north west) + (0, -1ex)$);
  
  \draw[-Stealth, shorten < = 2pt, shorten > = 2pt]
  (D/input) to (D/baseline);

  \draw[-Stealth, shorten < = 2pt, shorten > = 2pt]
  ($(D/input.south east |- D/baseline.south west) + (0, +1ex)$)
  to ($(D/baseline.south west) + (0, +1ex)$);

  \node[draw, rounded corners, inner sep = 6pt, right = 5em of D/baseline]
  (D/C) {$\Chan{R}$};

  \draw[-Stealth, shorten < = 2pt, shorten > = 2pt]
  (D/baseline) to node[midway, fill=white] {$o$} (D/C);

  \node[draw, rounded corners, inner sep = 6pt, right = 5em of D/C]
  (D/D) {$\Chan{S}$};
  
  \draw[-Stealth, shorten < = 2pt, shorten > = 2pt]
  (D/C) to node[midway, fill=white] {$y$} (D/D);
  
  \node[inner sep = 6pt, right = 5em of D/D] (D/output)
  {$\phantom{\Chan{S}}$};

  \draw[-Stealth, shorten < = 2pt, shorten > = 2pt]
  (D/D) to node[midway, fill=white] {$z$} (D/output);

  \draw[dotted, color = HighlightColor, very thick]
  ($(C/output.north west) + (-1em, +1ex)$) to ($(D/D.south west) + (-1em, -1ex)$);
\end{tikzpicture}%
  \caption{Multi-step analysis.\label{fig:multi-step-diagram}}%
  \hspace*{\fill}%
  \end{subfigure}
  \caption{%
      The semantics of single- and multi-step analysis
      with respect to adversaries $\Adv^{\Chan{C}}$ (upper diagram in each figure),
      who observes an output $y \in \mathcal{Y}$ from system $\Chan{C} = \Chan{B}\Cascade\Chan{R}$,
      and $\Adv^{\Chan{D}}$ (lower diagram),
      who observes $z \in \mathcal{Z}$ from system $\Chan{D} = \Chan{C} \Cascade\Chan{S}$.%
      \label{fig:multi-vs-single-step}%
  }
\end{figure}
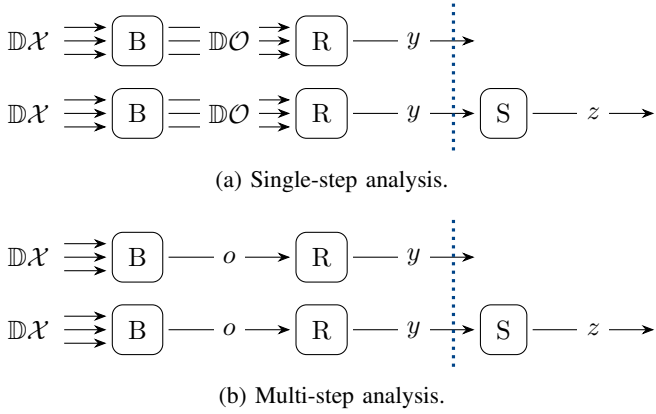

A similar reasoning follows when assessing the information that leaks from a particular system, 
setting the post-processing channel $\Chan{S} = \ChanLeaksNothing$ (see Remark~\ref{rmk:mono-as-dpi}).
In this case a single-step analysis corresponds to $\StLeak{\Posterior{\Prior{\pi}}{\Chan{C}}[X][y]}{\Prior{\pi}}$,
as defined in Definition~\ref{def:st-leak}.
This is the usual analysis we conduct,
evaluating the threat to the secret before and after the adversary observes an execution of the system,
from the adversary's point of view.
This value is always non-negative, as stated in Theorem~\ref{thm:mono},
and indicates that a rational adversary would always change their strategy once they observe an execution of the system.%

Nevertheless, a multi-step analysis is still meaningful \emph{from the point of view of a data analyst} 
who has access to an intermediate result $o \in \mathcal{O}$ that is hidden to the adversary.
Let channel $\Chan{B} : \mathcal{X} \to \Dist\mathcal{O}$ model the system that produces the intermediate result, 
so that $\Posterior{\Prior{\pi}}{\Chan{B}}[X][o]$ is the baseline.
Then, we can define $\Gain$- and $\Loss$-leakage more generally as, respectively,%
\begin{align}
  &
  \StVg
  {\Posterior{\Prior{\pi}}{\Chan{B}}[X][o]}
  {\Posterior{\Prior{\pi}}{\Chan{C}}[X][y]}
  -
  \StVg
  {\Posterior{\Prior{\pi}}{\Chan{B}}[X][o]}
  {\Prior{\pi}}
  \label{eq:extended-g-leak}%
  \\
  &
  \StUl
  {\Posterior{\Prior{\pi}}{\Chan{B}}[X][o]}
  {\Prior{\pi}}
  -
  \StUl
  {\Posterior{\Prior{\pi}}{\Chan{B}}[X][o]}
  {\Posterior{\Prior{\pi}}{\Chan{C}}[X][y]}
  \label{eq:extended-l-leak}
\end{align}

This is \emph{not} covered by Theorem~\ref{thm:mono},
since here the baseline does not come from the channel that is available to the adversary (channel $\Chan{C}$).
Therefore, it might be that, under this type of analysis,
we observe negative information leakage.
Such a result would mean that the adversary would perform better 
by constructing their strategy from their prior knowledge 
instead of using their posterior knowledge for that 
(at least, in this particular execution of the system).
Yet, as already discussed, the adversary does not have access to the intermediate result $o \in \mathcal{O}$,
and thus can only compute $\StLeak{\Posterior{\Prior{\pi}}{\Chan{C}}[X][y]}{\Prior{\pi}}$,
which is always non-negative 
and would thus lead them to update their strategy based on their posterior.
Hence, we argue that negative information leakage indicates a privacy improvement, 
as it corresponds to an \qm{opportunity loss} for the adversary.%

\CiteAuthor{DynLeak:Bielova2016}'s Example 2~\cite{DynLeak:Bielova2016},
using the Belief Tracking framework of \CiteAuthor{Belief:Clarkson2009}~\cite{Belief:Clarkson2009}, 
illustrates a case in which negative leakage can be observed.
We note that \CiteAuthor{Belief:Clarkson2009}'s framework occurs as a particular case of our formalisation.
\CiteAuthor{Belief:Clarkson2009} defines information leakage with respect to a particular secret input $x^{*}$ as follows:
\begin{equation}%
  \label{eq:belief-tracking/defn}%
  D_{\mathit{KL}}(\PointDist{x^{*}} \SubsOp \Prior{\pi}) -
  D_{\mathit{KL}}(\PointDist{x^{*}} \SubsOp \Posterior{\Prior{\pi}}{\Chan{C}}[X][y])
\end{equation}
But, recall from Theorem~\ref{thm:kl} that $D_{\mathit{KL}}(p \SubsOp q) = \StLeak[\LossShannon]{p}{q}$,
and also notice that both terms \Review{in \eqref{eq:belief-tracking/defn}} involve subtracting 
$\StUl[\LossShannon]{\PointDist{x^{*}}}{\PointDist{x^{*}}} = H(\PointDist{x^{*}}) = 0$.
Therefore, we get that \CiteAuthor{Belief:Clarkson2009}'s definition of information leakage
is exactly our general definition given in \eqref{eq:extended-l-leak},
with $\Posterior{\Prior{\pi}}{\Chan{B}}[X][o] = \PointDist{x^{*}}$.

To further clarify the notion of a negative leakage,
we now provide another example that illustrates this behaviour:

\begin{Example}{ex:negative-leakage}
  Consider the scenario of a doctor (an \qm{adversary}) 
  who is trying to diagnose a patient's disease (the \qm{secret}).
  Suppose that, based on the symptoms described by the patient,
  the doctor knows that there are three possible diseases,
  but one of them is significantly more probable than the others.
  In this case the prior knowledge of the doctor can be represented, for instance,
  by the probability distribution $\Prior{\pi} = \OrdSet{\sfrac{9}{10}, \sfrac{1}{20}, \sfrac{1}{20}}$ 
  ranging over diseases $x_{0}, x_{1}, x_{2}$.
  Based on this, the doctor would initially guess that the patient's disease is $x_{0}$.
  That is, $\St[\GainBayes]^{u}(\Prior{\pi}) = \PointDist{x_{0}}$,
  where the gain function is $\GainBayes$, as defined in~\eqref{eq:gain-bayes}.

  Then, the doctor asks for a medical test to confirm their guess.
  The result of the test is either positive (\textsc{p}),
  which confirms that the patient has the disease $x_{0}$,
  or negative (\textsc{n}), which rules out $x_{0}$ as the possible disease, with 1\% chance of error.
  The medical test can be represented by a channel $\Chan{C} : \CalX \to \Dist\OrdSet{\textsc{p}, \textsc{n}}$.
  From that, and given the doctor's prior knowledge $\Prior{\pi}$,
  their posterior knowledge $\Hyper{\Prior{\pi}}{\Chan{C}}$ would be
  \begin{equation*}%
    \begin{array}{c@{\;}c@{}c@{\;}c}
      \Prior{\pi}
      &&& \\[.5ex]
      x_{0} & \ChanLDelim{3.4} & \frac{9}{10} & \ChanRDelim{3.4} \\[.5ex]
      x_{1} && \frac{1}{20} & \\[.5ex]
      x_{2} && \frac{1}{20} &
    \end{array}
    \hspace{-.85em}\Hyper{}{}\hspace{.05em}
    \begin{array}{c@{\;}c@{}c@{\;}c@{\;}c}
      \Chan{C}
      && \textsc{p} & \textsc{n} & \\[.5ex]
      x_{0} & \ChanLDelim{3.4} & \frac{99}{100} & \frac{1}{100} & \ChanRDelim{3.4} \\[.5ex]
      x_{1} && \frac{1}{100} & \frac{99}{100} & \\[.5ex]
      x_{2} && \frac{1}{100} & \frac{99}{100} & 
    \end{array}
    \hspace{-1em}=\hspace{-.75em}
    \begin{array}{c@{}c@{}c@{\;\;}c@{\;}c}
      \Hyper{\Prior{\pi}}{\Chan{C}}
      && \frac{223}{250}\,\textsc{p} & \frac{27}{250}\,\textsc{n} & \\[.5ex]
      x_{0} & \ChanLDelim{3.4} & \frac{891}{892} & \frac{1}{12} & \ChanRDelim{3.4} \\[.5ex]
      x_{1} && \frac{1}{1784} & \frac{11}{24} & \\[.5ex]
      x_{2} && \frac{1}{1784} & \frac{11}{24} & 
    \end{array}
  \end{equation*}

  If the output is \textsc{n}, the doctor's strategy a posteriori would be 
  $\St[\GainBayes]^{u}(\Posterior{\Prior{\pi}}{\Chan{C}}[X][\textsc{n}]) = \OrdSet{0, \sfrac{1}{2}, \sfrac{1}{2}}$,
  since disease $x_{0}$ is now less likely than the others.
  Hence, the doctor would compute
  \begin{equation*}
    \StLeak{\Posterior{\Prior{\pi}}{\Chan{C}}[X][\textsc{n}]}{\Prior{\pi}}
    = \sfrac{11}{24} - \sfrac{1}{12} = \sfrac{3}{8}
  \end{equation*}
  This indicates that the doctor should change their strategy to 
  $\St[\GainBayes]^{u}(\Posterior{\Prior{\pi}}{\Chan{C}}[X][\textsc{n}])$.
  However, if we (as the \qm{data analyst}) knew that the disease is $x_{0}$ (\qm{intermediate result}),
  which can be represented by a point distribution $\PointDist{x_{0}}$,
  we would notice that
  \begin{equation*}%
    \StVg
    {\PointDist{x_{0}}}
    {\Posterior{\Prior{\pi}}{\Chan{C}}[X][\textsc{n}]}
    -
    \StVg
    {\PointDist{x_{0}}}
    {\Prior{\pi}}
    = 
    0 - 1 = -1
  \qedhere\end{equation*}
\end{Example}

Example~\ref{ex:negative-leakage} above shows that,
under the dynamic scenario where the disease is $x_{0}$ and the result of the medical test was a false negative,
the doctor's initial guess would be correct.
But, because the doctor is a \emph{rational} \qm{adversary},
the rational choice is to change their strategy once they observed the result of the medical test,
and this choice leads them to an incorrect conclusion. 
This behaviour is captured by the negative leakage.

We reinforce that a negative leakage under a multi-step analysis is not an incorrect result.
As shown in Example~\ref{ex:negative-leakage},
it represents a rational adversary who, after observing an execution of the system,
updated their prior strategy to a posterior strategy that,
for a particular execution path,
performs worse than the strategy they had a priori,
and this behaviour is only observable through the lens of the data analyst.

\section{Related Work}\label{sec:rw}

The lack of a suitable definition of dynamic leakage is well known in the literature,
and it is not particular to QIF.
In~\cite{DynLeak:Bielova2016} \CiteAuthor{DynLeak:Bielova2016} describes different measures of information leakage,
and points out the need for a new measure that evaluates information leakage with respect to one concrete output,
regardless of the input.
The approach that gets closer to such a measure is that of Belief Tracking~\cite{Belief:Clarkson2009},
which quantifies the amount of information flow caused by changes in the accuracy of the adversary's belief,
noting that the adversary's \qm{belief} does not necessarily correspond to the \qm{reality}.

The drawback of Belief Tracking is that, in principle,
it requires one to choose a reality against which we want to quantify information leakage 
(that is, to choose a secret input).
Therefore, this does not fill the gap described in~\cite{DynLeak:Bielova2016}.
In contrast, our strategy-based formalisation proposed in Section~\ref{sec:formalisation}
does evaluate information leakage with respect to one concrete output without requiring a fixed secret input.
Yet, Belief Tracking is similar to our proposal, 
in the sense that one has to choose a baseline,
but it goes to the extreme of forcing this baseline to be a point distribution.

As shown in Section~\ref{sec:multi-step}, Belief Tracking can be seen as a multi-step analysis,
as its definition coincides with \eqref{eq:extended-l-leak},
which extends our formalisation of dynamic leakage to multiple dynamic steps.
Although not explicitly defined in~\cite{Belief:Clarkson2009},
it is possible to modify the Belief Tracking framework to single-step analysis 
---~thus obtaining an instance of the information-leakage measure from Definition~\ref{def:st-leak}~---
by taking the expectation over each possible baseline $\PointDist{x}$, weighted by $\Posterior{\Prior{\pi}}{\Chan{C}}[x][y]$.

\CiteAuthor{SecretGen:Alvim2014}~\cite{SecretGen:Alvim2014} have investigated the use of ``strategies'' in QIF,
under ``dynamic secrets''.
However similar, their terminology refers to secrets that change over time (``dynamic''),
and in a scenario in which the adversary's prior is modelled by a hyper-distribution on secrets,
that is, a distribution on ``strategies'' for generating secrets.
Hence, there is little overlap between their contributions and those of this paper.

Privacy analyses
such as those of \CiteAuthor{Alvim2022}~\cite{Alvim2022} and \CiteAuthor{OpenBanking:Soares2025}~\cite{OpenBanking:Soares2025} 
can be viewed as a practical application of this dynamic perspective.
They can be formalised following the framework presented in Section~\ref{sec:case-studies} for Example~\ref{ex:case-studies:data-release}.
More precisely, they follow a sort of mixed approach:
they are dynamic with respect to the dataset observed by the attacker,
but static in regard to the hints that the adversary could obtain.

\section{Conclusion}\label{sec:conclusion}

In this paper we formalised a strategy-based definition of information leakage 
that solves the inconsistencies found when considering dynamic scenarios.
More precisely, our definition preserves the properties of single-step monotonicity 
(i.e., by observing the output of a system, an adversary cannot lose information about the input)
and of single-step data-processing inequality 
(i.e., one step of post-processing of the output of a system cannot create information; counter-intuitively,
additional post-processing might \qm{create} information).  
This approach builds on the idea raised in \cite{Belief:Clarkson2009} and \cite{SecretGen:Alvim2014} 
(albeit with different purposes)
that the adversary's belief does not necessarily correspond to the reality,
and thus a definition of information leakage should take this distinction into account.

We proved that our formulation of strategy-based $g$-leakage preserves fundamental information-theoretic properties 
enjoyed by the static definition of leakage in the QIF framework.
More precisely, 
we showed that the expectation of dynamic vulnerability corresponds to the traditional definition of expected vulnerability 
(equivalently, uncertainty).
A similar property holds for the max-case.
We also showed that our approach generalises that of Belief Tracking~\cite{Belief:Clarkson2009},
as (i) it is parameterised by a gain (or loss) function modelling the attacker's goal;
(ii) when instantiated with the loss function $\LossShannon$, it coincides with the Kullback-Leibler divergence,
the measure adopted in Belief Tracking (Theorem~\ref{thm:kl});
and (iii) it does not require one to choose a secret input as the reality.

We still require that a baseline is picked so that we can evaluate strategies against it,
but the choice of a baseline is flexible, in the sense that one should choose the best baseline available 
(as long as it satisfies the knowledge ordering).
In the cases where the goal is to quantify the information leakage of a system,
the obvious choice is the adversary's posterior knowledge for one concrete output,
whereas when comparing systems under a refinement relation,
the baseline would be the least post-processed implementation available.

The fact that it is possible that the DPI does not hold in some dynamic scenarios 
(as illustrated in Example~\ref{ex:case-studies:query})
gives room to the possibility of a stronger notion of refinement 
(perhaps too strong to be satisfied in practice).
We leave this for future work,
as well as studying other interesting properties that have already been demonstrated for the static perspective,
such as the behaviour of $g$-leakage measures under shifting and scaling of gain functions~\cite[Theorem 5.13]{QIF-Book:Alvim2020}.

\ifanonymous%
\else%
  \section*{Acknowledgment}

  \LuigiSays[TODO]{Do not include in the version submitted for major revision, 
  this is here just to check the space...}

  Luigi~D. C.~Soares was supported by CNPq (Brazil), grant number 140359/2022-2,
  and by the International Cotutelle Macquarie University Research Excellence Scholarship.
\fi
\bibliographystyle{IEEEtranN}
\bibliography{IEEEabrv,references}

\begin{thebibliography}{29}
\providecommand{\natexlab}[1]{#1}
\providecommand{\url}[1]{#1}
\csname url@samestyle\endcsname
\providecommand{\newblock}{\relax}
\providecommand{\bibinfo}[2]{#2}
\providecommand{\BIBentrySTDinterwordspacing}{\spaceskip=0pt\relax}
\providecommand{\BIBentryALTinterwordstretchfactor}{4}
\providecommand{\BIBentryALTinterwordspacing}{\spaceskip=\fontdimen2\font plus
\BIBentryALTinterwordstretchfactor\fontdimen3\font minus \fontdimen4\font\relax}
\providecommand{\BIBforeignlanguage}[2]{{%
\expandafter\ifx\csname l@#1\endcsname\relax
\typeout{** WARNING: IEEEtranN.bst: No hyphenation pattern has been}%
\typeout{** loaded for the language `#1'. Using the pattern for}%
\typeout{** the default language instead.}%
\else
\language=\csname l@#1\endcsname
\fi
#2}}
\providecommand{\BIBdecl}{\relax}
\BIBdecl

\bibitem[Alvim et~al.(2024)Alvim, Fernandes, McIver, and Nunes]{alvim2024privacy}
M.~S. Alvim, N.~Fernandes, A.~McIver, and G.~H. Nunes, ``The privacy-utility trade-off in the topics api,'' in \emph{Proceedings of the 2024 on ACM SIGSAC Conference on Computer and Communications Security}, 2024, pp. 1106--1120.

\bibitem[Alvim et~al.(2023{\natexlab{a}})Alvim, Fernandes, McIver, Morgan, and Nunes]{alvim:23:CCS}
M.~S. Alvim, N.~Fernandes, A.~McIver, C.~Morgan, and G.~H. Nunes, ``A novel analysis of utility in privacy pipelines, using {K}ronecker products and quantitative information flow,'' in \emph{Proc. SIGSAC}, 2023.

\bibitem[Jurado et~al.(2021)Jurado, Palamidessi, and Smith]{jurado2021formal}
M.~Jurado, C.~Palamidessi, and G.~Smith, ``A formal information-theoretic leakage analysis of order-revealing encryption,'' in \emph{Proc. CSF}, 2021.

\bibitem[Fernandes et~al.(2018)Fernandes, Dras, and McIver]{fernandes2018processing}
N.~Fernandes, M.~Dras, and A.~McIver, ``Processing text for privacy: an information flow perspective,'' in \emph{Proc. FM}, 2018.

\bibitem[Alvim et~al.(2015)Alvim, Andr{\'{e}}s, Chatzikokolakis, Degano, and Palamidessi]{Alvim:15:JCS}
\BIBentryALTinterwordspacing
M.~S. Alvim, M.~E. Andr{\'{e}}s, K.~Chatzikokolakis, P.~Degano, and C.~Palamidessi, ``On the information leakage of differentially-private mechanisms,'' \emph{Journal Computer Security}, vol.~23, no.~4, pp. 427--469, 2015. [Online]. Available: \url{https://doi.org/10.3233/JCS-150528}
\BIBentrySTDinterwordspacing

\bibitem[Chatzikokolakis et~al.(2019{\natexlab{a}})Chatzikokolakis, Fernandes, and Palamidessi]{Chatzi:2019}
K.~Chatzikokolakis, N.~Fernandes, and C.~Palamidessi, ``Comparing systems: Max-case refinement orders and application to differential privacy,'' in \emph{Proc. CSF}, 2019.

\bibitem[C.~Soares et~al.(2025)C.~Soares, Alvim, Bu, Fernandes, and Liao]{OpenBanking:Soares2025}
\BIBentryALTinterwordspacing
L.~D. C.~Soares, M.~S. Alvim, D.~Bu, N.~Fernandes, and Y.~Liao, ``\BIBforeignlanguage{en}{Formal {Privacy} {Analyses} for {Open} {Banking}},'' in \emph{\BIBforeignlanguage{en}{Formal {Methods}: {Foundations} and {Applications}}}, S.~C.~Nogueira and C.~Teodorov, Eds.\hskip 1em plus 0.5em minus 0.4em\relax Cham: Springer Nature Switzerland, 2025, vol. 15403, pp. 171--193, series Title: Lecture Notes in Computer Science. [Online]. Available: \url{https://link.springer.com/10.1007/978-3-031-78116-2_11}
\BIBentrySTDinterwordspacing

\bibitem[Silva et~al.(2022)Silva, Gomes, Alvim, and Gon{\c{c}}alves]{Silva:22:IS}
R.~M. Silva, G.~C.~M. Gomes, M.~S. Alvim, and M.~A. Gon{\c{c}}alves, ``How to build high quality {L2R} training data: Unsupervised compression-based selective sampling for learning to rank,'' \emph{Information Sciences}, vol. 601, 2022.

\bibitem[Viegas et~al.(2020)Viegas, Alvim, Canuto, Rosa, Gon{\c{c}}alves, and Rocha]{Viegas:20:IS}
F.~Viegas, M.~S. Alvim, S.~D. Canuto, T.~Rosa, M.~A. Gon{\c{c}}alves, and L.~Rocha, ``Exploiting semantic relationships for unsupervised expansion of sentiment lexicons,'' \emph{Information Systems}, vol.~94, 2020.

\bibitem[Pimentel et~al.(2018)Pimentel, Alvim, Campos, and Macharet]{Pimentel:18:JIRS}
J.~M. Pimentel, M.~S. Alvim, M.~F.~M. Campos, and D.~G. Macharet, ``Information-driven rapidly-exploring random tree for efficient environment exploration,'' \emph{J. Intell. Robotic Syst.}, vol.~91, no.~2, 2018.

\bibitem[Moraes et~al.(2017)Moraes, Alvim, and Santos]{Moraes:17:ICTIR}
F.~Moraes, M.~S. Alvim, and R.~L.~T. Santos, ``Modeling information flow in dynamic information retrieval,'' in \emph{Proc. ICTR}, J.~Kamps, E.~Kanoulas, M.~de~Rijke, H.~Fang, and E.~Yilmaz, Eds., 2017.

\bibitem[Alvim et~al.(2023{\natexlab{b}})Alvim, Fernandes, Nogueira, Palamidessi, and Silva]{Alvim:23:CADE}
M.~S. Alvim, N.~Fernandes, B.~D. Nogueira, C.~Palamidessi, and T.~V.~A. Silva, ``{On the Duality of Privacy and Fairness (Extended Abstract)},'' in \emph{Proc. CADE}, 2023.

\bibitem[M{\'a}rio et~al.(2012)M{\'a}rio, Chatzikokolakis, Palamidessi, and Smith]{Alvim:12:CSF}
S.~A. M{\'a}rio, K.~Chatzikokolakis, C.~Palamidessi, and G.~Smith, ``Measuring information leakage using generalized gain functions,'' in \emph{2012 IEEE 25th Computer Security Foundations Symposium}.\hskip 1em plus 0.5em minus 0.4em\relax IEEE, 2012, pp. 265--279.

\bibitem[Alvim et~al.(2019)Alvim, Chatzikokolakis, McIver, Morgan, Palamidessi, and Smith]{alvim2019axiomatization}
M.~S. Alvim, K.~Chatzikokolakis, A.~McIver, C.~Morgan, C.~Palamidessi, and G.~Smith, ``An axiomatization of information flow measures,'' \emph{Theoretical Computer Science}, vol. 777, pp. 32--54, 2019.

\bibitem[Alvim et~al.(2020)Alvim, Chatzikokolakis, McIver, Morgan, Palamidessi, and Smith]{QIF-Book:Alvim2020}
------, \emph{{The Science of Quantitative Information Flow}}.\hskip 1em plus 0.5em minus 0.4em\relax {Springer}, 2020.

\bibitem[Cover(1999)]{coverthomas:book}
T.~M. Cover, \emph{Elements of information theory}.\hskip 1em plus 0.5em minus 0.4em\relax John Wiley \& Sons, 1999.

\bibitem[Bielova(2016)]{DynLeak:Bielova2016}
\BIBentryALTinterwordspacing
N.~Bielova, ``\BIBforeignlanguage{en}{Short {Paper}: {Dynamic} leakage: {A} {Need} for a {New} {Quantitative} {Information} {Flow} {Measure}},'' in \emph{\BIBforeignlanguage{en}{Proceedings of the 2016 {ACM} {Workshop} on {Programming} {Languages} and {Analysis} for {Security}}}.\hskip 1em plus 0.5em minus 0.4em\relax Vienna Austria: ACM, Oct. 2016, pp. 83--88. [Online]. Available: \url{https://dl.acm.org/doi/10.1145/2993600.2993607}
\BIBentrySTDinterwordspacing

\bibitem[Clarkson et~al.(2009)Clarkson, Myers, and Schneider]{Belief:Clarkson2009}
\BIBentryALTinterwordspacing
M.~R. Clarkson, A.~C. Myers, and F.~B. Schneider, ``\BIBforeignlanguage{en}{Quantifying information flow with beliefs},'' \emph{\BIBforeignlanguage{en}{Journal of Computer Security}}, vol.~17, no.~5, pp. 655--701, Oct. 2009. [Online]. Available: \url{https://www.medra.org/servlet/aliasResolver?alias=iospress&doi=10.3233/JCS-2009-0353}
\BIBentrySTDinterwordspacing

\bibitem[Alvim et~al.(2012)Alvim, Chatzikokolakis, Palamidessi, and Smith]{g-leakage:Alvim2012}
\BIBentryALTinterwordspacing
M.~S. Alvim, K.~Chatzikokolakis, C.~Palamidessi, and G.~Smith, ``Measuring {Information} {Leakage} {Using} {Generalized} {Gain} {Functions},'' in \emph{2012 {IEEE} 25th {Computer} {Security} {Foundations} {Symposium}}.\hskip 1em plus 0.5em minus 0.4em\relax Cambridge, MA, USA: IEEE, Jun. 2012, pp. 265--279. [Online]. Available: \url{http://ieeexplore.ieee.org/document/6266165/}
\BIBentrySTDinterwordspacing

\bibitem[Thomas and Joy(2006)]{thomas2006elements}
M.~Thomas and A.~T. Joy, \emph{Elements of information theory}.\hskip 1em plus 0.5em minus 0.4em\relax Wiley-Interscience, 2006.

\bibitem[Smith(2009)]{Foundations:Smith2009}
\BIBentryALTinterwordspacing
G.~Smith, ``\BIBforeignlanguage{en}{On the {Foundations} of {Quantitative} {Information} {Flow}},'' in \emph{\BIBforeignlanguage{en}{Foundations of {Software} {Science} and {Computational} {Structures}}}, L.~De~Alfaro, Ed.\hskip 1em plus 0.5em minus 0.4em\relax Berlin, Heidelberg: Springer Berlin Heidelberg, 2009, vol. 5504, pp. 288--302, series Title: Lecture Notes in Computer Science. [Online]. Available: \url{http://link.springer.com/10.1007/978-3-642-00596-1_21}
\BIBentrySTDinterwordspacing

\bibitem[Chatzikokolakis et~al.(2019{\natexlab{b}})Chatzikokolakis, Fernandes, and Palamidessi]{Max-CaseRef:Chatzikokolakis2019}
\BIBentryALTinterwordspacing
K.~Chatzikokolakis, N.~Fernandes, and C.~Palamidessi, ``\BIBforeignlanguage{en}{Comparing {Systems}: {Max}-{Case} {Refinement} {Orders} and {Application} to {Differential} {Privacy}},'' in \emph{\BIBforeignlanguage{en}{2019 {IEEE} 32nd {Computer} {Security} {Foundations} {Symposium} ({CSF})}}.\hskip 1em plus 0.5em minus 0.4em\relax Hoboken, NJ, USA: IEEE, Jun. 2019, pp. 442--44\,215. [Online]. Available: \url{https://ieeexplore.ieee.org/document/8823711/}
\BIBentrySTDinterwordspacing

\bibitem[Dalenius(1986)]{Dalenius1986}
T.~Dalenius, ``Finding a needle in a haystack or identifying anonymous census records,'' \emph{Journal of official statistics}, vol.~2, no.~3, p. 329, 1986.

\bibitem[Samarati and Sweeney(1998)]{Samarati1998}
P.~Samarati and L.~Sweeney, ``Protecting privacy when disclosing information: k-anonymity and its enforcement through generalization and suppression,'' 1998.

\bibitem[Sweeney(2000)]{Sweeney2000}
L.~Sweeney, ``Simple demographics often identify people uniquely,'' \emph{Health (San Francisco)}, vol. 671, no. 2000, pp. 1--34, 2000.

\bibitem[Narayanan and Shmatikov(2008)]{Narayanan2008}
A.~Narayanan and V.~Shmatikov, ``Robust {D}e-anonymization of {L}arge {S}parse {D}atasets,'' in \emph{Proc. of S{\&}P}, 2008, pp. 111--125.

\bibitem[Am{\'e}rico et~al.(2018)Am{\'e}rico, Alvim, and McIver]{Americo2018}
A.~Am{\'e}rico, M.~S. Alvim, and A.~McIver, ``An algebraic approach for reasoning about information flow,'' in \emph{Formal Methods}, K.~Havelund, J.~Peleska, B.~Roscoe, and E.~de~Vink, Eds.\hskip 1em plus 0.5em minus 0.4em\relax Cham: Springer International Publishing, 2018, pp. 55--72.

\bibitem[Alvim et~al.(2017)Alvim, Mardziel, and Hicks]{SecretGen:Alvim2014}
\BIBentryALTinterwordspacing
M.~S. Alvim, P.~Mardziel, and M.~W. Hicks, ``Quantifying vulnerability of secret generation using hyper-distributions,'' in \emph{Principles of Security and Trust - 6th International Conference, {POST} 2017, Held as Part of the European Joint Conferences on Theory and Practice of Software, {ETAPS} 2017, Uppsala, Sweden, April 22-29, 2017, Proceedings}, ser. Lecture Notes in Computer Science, M.~Maffei and M.~Ryan, Eds., vol. 10204.\hskip 1em plus 0.5em minus 0.4em\relax Springer, 2017, pp. 26--48. [Online]. Available: \url{https://doi.org/10.1007/978-3-662-54455-6\_2}
\BIBentrySTDinterwordspacing

\bibitem[Alvim et~al.(2022)Alvim, Fernandes, McIver, Morgan, and Nunes]{Alvim2022}
\BIBentryALTinterwordspacing
M.~S. Alvim, N.~Fernandes, A.~McIver, C.~Morgan, and G.~H. Nunes, ``Flexible and scalable privacy assessment for very large datasets, with an application to official governmental microdata,'' \emph{Proc. Priv. Enhancing Technol.}, vol. 2022, no.~4, pp. 378--399, 2022. [Online]. Available: \url{https://doi.org/10.56553/popets-2022-0114}
\BIBentrySTDinterwordspacing

\end{thebibliography}

\ifarxiv%
  \appendices
  \counterwithin{lemma}{section}
  \counterwithin{theorem}{section}
  \counterwithin{example}{section}

  \section{Proofs omitted from the main body}
\label{sec:appendix-proofs}

\subsection{Proofs omitted from Section~\ref{sec:st-leakage}}

In Section~\ref{sec:formalisation} we claimed that
the expected vulnerability caused by an adversary who selects one of the possible optimal strategies at random
converges to the vulnerability caused by an adversary who picks the uniform strategy.
This was illustrated in Example~\ref{ex:avg-st},
by limiting the strategies available to the adversary
to those that only assign to the optimal actions
probabilities with exactly $n = 1$ decimal place.
In this case there were 66 optimal strategies available to the adversary,
and the average over the vulnerability caused by each strategy,
as defined in \eqref{eq:st-vg/any},
was $\sfrac{1}{3}$, exactly the same as the vulnerability caused by the uniform strategy.
Then, we affirmed that this phenomenon is actually observed for any $n \geq 0$ 
that we choose as the number of decimal places allowed in the definition of strategies.
We now formalise and prove this statement:

\begin{theorem}
  \label{thm:avg-st}
  Let $q : \Dist\mathcal{X}$ be the adversary's belief about the secret
  and $\Gain : \mathcal{W} \times \mathcal{X} \to \Reals$ be the gain function modelling the adversary's goal.
  Furthermore, let $\St^{n}(q)$, $n \geq 0$, be the (finite) set of optimal strategies available to the adversary, 
  such that the probability assigned by any strategy in $\St^{n}(q)$ to any action $w \in \mathcal{W}$
  is restricted to exactly $n$ decimal places.
  Finally, let $\St^{u}(q)$ be the uniform strategy.
  Then, assuming that the set of optimal actions with respect to $q$ is finite,
  \begin{align}
    &\;\;
    \lim_{n \to \infty}
    \sum_{s \in \St^{n}(q)} \frac{1}{\Card{\St^{n}(q)}}
    \sum_{w \in \mathcal{W}} s_{w} \sum_{x \in \mathcal{X}} p_{x}\, \Gain(w, x)
    \\ =
    &\;\;
    \sum_{w \in \mathcal{W}} \St^{u}(q)_{w} \sum_{x \in \mathcal{X}} p_{x}\, \Gain(w, x)
    \Comment{}
  \end{align}
  Dually, for any loss function $\Loss : \mathcal{W} \times \mathcal{X} \to \Reals_{\geq 0} \cup \{\infty\}$,
  \begin{align}
    &\;\;
    \lim_{n \to \infty}
    \sum_{s \in \St[\Loss]^{n}(q)} \frac{1}{\Card{\St[\Loss]^{n}(q)}}
    \sum_{w \in \mathcal{W}} s_{w} \sum_{x \in \mathcal{X}} p_{x}\, \Loss(w, x)
    \\ = 
    &\;\;
    \sum_{w \in \mathcal{W}} \St[\Loss]^{u}(q)_{w} \sum_{x \in \mathcal{X}} p_{x}\, \Loss(w, x)
    \Comment{}
  \end{align}
\end{theorem}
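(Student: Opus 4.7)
The plan is to show that the claimed equality in fact holds for every finite $n \ge 0$, so that the limit operation is vacuous. Let $\mathcal{W}^* \subseteq \mathcal{W}$ denote the (finite, by hypothesis) set of actions that are optimal with respect to the belief $q$ and the gain function $g$. By Definition~\ref{def:strategy}, every $s \in \St^n(q)$ satisfies $s_w = 0$ for $w \notin \mathcal{W}^*$, and $\sum_{w \in \mathcal{W}^*} s_w = 1$.

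The core observation is a symmetry argument. For any pair $w_1, w_2 \in \mathcal{W}^*$, the map $\sigma_{w_1, w_2} \colon \St^n(q) \to \St^n(q)$ that exchanges the probabilities assigned to $w_1$ and $w_2$ is a well-defined involution: the resulting distribution still has probabilities expressible with exactly $n$ decimal places, still sums to one, and remains supported on $\mathcal{W}^*$ (hence is still optimal, since every action in $\mathcal{W}^*$ yields the same expected gain under $q$). This involution yields $\sum_{s \in \St^n(q)} s_{w_1} = \sum_{s \in \St^n(q)} s_{w_2}$ for all $w_1, w_2 \in \mathcal{W}^*$. Summing over $w \in \mathcal{W}^*$ and using $\sum_{w \in \mathcal{W}^*} s_w = 1$ for each $s$ then gives $\sum_{s \in \St^n(q)} s_w = \Card{\St^n(q)} / \Card{\mathcal{W}^*}$ for every $w \in \mathcal{W}^*$, so that the average probability of picking $w$ coincides exactly with $\St^u(q)_w$.

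It then remains only to swap the finite summation order on the left-hand side:
\begin{align*}
  &\sum_{s \in \St^n(q)} \frac{1}{\Card{\St^n(q)}} \sum_{w \in \mathcal{W}} s_w \sum_{x \in \mathcal{X}} p_x\, \Gain(w, x) \\
  &\quad = \sum_{w \in \mathcal{W}^*} \left( \frac{1}{\Card{\St^n(q)}} \sum_{s \in \St^n(q)} s_w \right) \sum_{x \in \mathcal{X}} p_x\, \Gain(w, x) \\
  &\quad = \sum_{w \in \mathcal{W}} \St^u(q)_w \sum_{x \in \mathcal{X}} p_x\, \Gain(w, x),
\end{align*}
which is exactly the desired right-hand side. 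Since the equality holds uniformly for every $n \ge 0$, it trivially persists in the limit. The dual statement for loss functions is obtained by the same argument, replacing $\argmax$ with $\argmin$ throughout.

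The only real obstacle is verifying that the involution $\sigma_{w_1,w_2}$ lands back in $\St^n(q)$ and that $\St^n(q)$ is nonempty (so that dividing by $\Card{\St^n(q)}$ is meaningful). Nonemptiness is immediate: each point distribution $[w^*]$ for $w^* \in \mathcal{W}^*$ already belongs to $\St^0(q)$, and by padding with zeros to $n$ decimal places to $\St^n(q)$ for every $n$. Well-definedness of $\sigma_{w_1, w_2}$ reduces to the three closure conditions (discreteness of probability values, normalisation, and support on $\mathcal{W}^*$) each of which is preserved under a coordinate swap. Everything beyond this is a direct symmetry computation.
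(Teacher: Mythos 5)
Your proof is correct and follows essentially the same route as the paper's: both arguments reduce to showing that, for each optimal action $w^{*}$, the total probability mass $\sum_{s \in \St^{n}(q)} s_{w^{*}}$ equals $\Card{\St^{n}(q)}/\Card{\mathcal{W}^{*}}$ by a symmetry of the set of strategies, after which a finite interchange of sums gives the uniform-strategy expression independently of $n$. If anything, your version is slightly tighter, since you make the paper's informal appeal to ``by symmetry, the sum of each row must be the same'' precise via the coordinate-swap involution on $\St^{n}(q)$, and you additionally verify nonemptiness of $\St^{n}(q)$.
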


\begin{proof}
  Recall from Definition~\ref{def:strategy} 
  that an (optimal) adversarial strategy $s \in \Dist\mathcal{W}$,
  with respect to a state of knowledge $q : \Dist\mathcal{X}$ (the adversary's belief) 
  and a gain function $g : \mathcal{X} \times \mathcal{Y} \to \Reals$,
  is a probability distribution on the set of actions $\mathcal{W}$, such that
  \begin{equation}
    s_{w^*} > 0 \implies w^{*} \in \argmax_{w \in \mathcal{W}}
    \sum_{x \in \mathcal{X}} q_{x}\,\Gain(w, x)
  \end{equation}
  and that $\St(q)$ is the set of all strategies satisfying the above.

  Let $\mathcal{W}^{*} = \argmax_{w \in \mathcal{W}} \sum_{x \in \mathcal{X}} q_{x}\,\Gain(w, x)$ be the 
  set of optimal actions (finite, as per hypothesis) 
  with respect to the state of knowledge $q$ and gain function $g$,
  such that the uniform strategy over all optimal actions is defined as follows:
  \begin{equation}\label{eq:uniform-st}
    \St^{u}(q)_{w} \EqDef \frac{1}{\Card{\mathcal{W}^{*}}} 
      \textnormal{ if } w \in \mathcal{W}^{*}
    \textnormal{ else } 0
  \end{equation}

  Fix an arbitrary $n \geq 0$ for the number of decimal places allowed in the definition of strategies.
  Then, consider the following matrix, 
  where each column is one of the (finitely many) strategies $s^{j} \in \St^{n}(q)$
  and each row is labelled by one of the (finitely many) optimal actions $w^{*}_{i} \in \mathcal{W}^{*}$:
  \begin{equation*}
    \begin{array}{c@{\;} c@{} cccc@{\;\;} c}
      && s^{0} & s^{1} & \cdots & s^{\Card{\St^{n}(q)} - 1} & \\
      w^{*}_{0} 
      & \ChanLDelim{6.5} 
      & s^{0}_{w^{*}_{0}} 
      & s^{1}_{w^{*}_{0}} 
      & \cdots 
      & s^{\Card{\St^{n}(q)} - 1}_{w^{*}_{0}} 
      & \ChanRDelim{6.5}
      \\[1ex]
      w^{*}_{1} 
      && s^{0}_{w^{*}_{1}} 
      & s^{1}_{w^{*}_{1}} 
      & \cdots 
      & s^{\Card{\St^{n}(q)} - 1}_{w^{*}_{1}} 
      & 
      \\[1ex]
      \vdots 
      && \vdots 
      & \vdots 
      & \vdots 
      & \vdots 
      & \\[1ex]
      w^{*}_{\Card{\mathcal{W}^{*}} - 1} 
      && s^{0}_{w^{*}_{\Card{\mathcal{W}^{*}} - 1}} 
      & s^{1}_{w^{*}_{\Card{\mathcal{W}^{*}} - 1}} 
      & \cdots 
      & s^{\Card{\St^{n}(q)} - 1}_{w^{*}_{\Card{\mathcal{W}^{*}} - 1}} 
      & 
    \end{array}
  \end{equation*}

  Notice that, in the matrix above, the two following facts must hold:
  \begin{enumerate*}[label=(\roman*)]
    \item each column must add up to 1, since it corresponds to a probability distribution on optimal actions, 
      so the sum of all cells in the matrix must equal the number $\Card{\St^{n}(q)}$ of optimal strategies; and

    \item each row contains \emph{all} probabilities that could be assigned by a strategy to the corresponding action 
      and therefore, by symmetry, the sum of each row must be the same, i.e.,
      it must be the sum $\Card{\St^{n}(q)}$ of all cells in the matrix divided by the number $\Card{\mathcal{W}^{*}}$ of rows.
  \end{enumerate*}
  Putting these two facts together, we conclude that the sum of each row $i$ must be:
  \begin{equation}\label{eq:avg-st/sum-row}
    \sum_{j = 0}^{\Card{\St^{n}(q)} - 1} s^{j}_{w^{*}_{i}} 
    = 
    \frac{\Card{\St^{n}(q)}}{\Card{\mathcal{W}^{*}}}
  \end{equation}

  Equipped with \eqref{eq:avg-st/sum-row}, we can then reason as follows:
  \begin{align*}
    &\;\;
    \lim_{n \to \infty} \sum_{s \in \St^{n}(q)} 
    \frac{1}{\Card{\St^{n}(q)}}
    \sum_{w \in \mathcal{W}} s_{w} \sum_{x \in \mathcal{X}} p_{x}\,g(w, x)
    \\
    =&\;\;
    \lim_{n \to \infty}
    \sum_{w^{*} \in \mathcal{W}^{*}} \sum_{x \in \mathcal{X}} p_{x}\,g(w^{*}, x)
    \\ &\;\;
    \left(\frac{1}{\Card{\St^{n}(q)}}
      \sum_{j = 0}^{\Card{\St^{n}(q)} - 1}
    s^{j}_{w^{*}}\right)
    \Comment{Rearrange terms; \\ Rewrite indices}
    \\
    =&\;\;
    \lim_{n \to \infty}
    \sum_{w^{*} \in \mathcal{W}^{*}} \sum_{x \in \mathcal{X}} p_{x}\,g(w^{*}, x)
    \left(\frac{\Card{\St^{n}(q)}}{\Card{\St^{n}(q)}\, \Card{\mathcal{W}^{*}}}\right)
    \Comment{See \eqref{eq:avg-st/sum-row}}
    \\
    =&\;\;
    \lim_{n \to \infty}
    \sum_{w^{*} \in \mathcal{W}^{*}} \sum_{x \in \mathcal{X}} p_{x}\,g(w^{*}, x)\,
    \frac{1}{\Card{\mathcal{W}^{*}}}
    \\
    =&\;\; 
    \sum_{w^{*} \in \mathcal{W}^{*}} \frac{1}{\Card{\mathcal{W}^{*}}} \sum_{x \in \mathcal{X}} p_{x}\,g(w^{*}, x)
    \Comment{Rearrange terms; \\ No depedency on $n$}
    \\
    =&\;\;
    \sum_{w \in \mathcal{W}} \St^{u}(q)_{w} \sum_{x \in \mathcal{X}} p_{x}\,g(w^{*}, x)
    \Comment{See \eqref{eq:uniform-st}}
  \end{align*}
\end{proof}

\subsection{Proofs omitted from Section~\ref{sec:motivation:contd}}
Lemma~\ref{lemma:bound-stl-shannon} can be seen as a particular case
of a more general lemma, stated as follows:

\begin{lemma}
  \label{lemma:bound-shannon-general}
  Consider two systems modelled by channels
  $\Chan{C} : \mathcal{X} \to \Dist\mathcal{Y}$ and
  $\Chan{D} : \mathcal{X} \to \Dist\CalZ$ such that
  $\Chan{C} \RefinedBy \Chan{D}$, and recall that
  $\Chan{D} = \Chan{C}\Cascade\Chan{R}$.  It follows that, for any prior knowledge
  $\Prior{\pi} : \Dist\mathcal{X}$, and outputs $y \in \mathcal{Y}$ and
  $z \in \CalZ$ such that $\Chan{R}_{y, z} > 0$,
  $
  \StUl[\LossShannon]
  {\Posterior{\Prior{\pi}}{\Chan{C}}[X][y]}
  {\Posterior{\Prior{\pi}}{\Chan{D}}[X][z]}
  $
  is finite and non-negative.
\end{lemma}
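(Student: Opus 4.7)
The plan is to reduce the strategy-based $\LossShannon$-uncertainty to a cross-entropy expression and then handle sign and finiteness separately. First I would identify the unique optimal strategy: for any belief $q$, the expected Shannon loss of an action $w \in \Dist\mathcal{X}$ is $-\sum_x q_x \log_2 w_x$, which by Gibbs' inequality (Jensen applied to $\log$) is uniquely minimised at $w = q$. Hence $\St[\LossShannon]^u(q) = \PointDist{q}$, and Definition~\ref{def:st-measures} collapses to the cross-entropy
\begin{equation*}
\StUl[\LossShannon]{p}{q} \;=\; -\sum_{x \in \mathcal{X}} p_x \log_2 q_x,
\end{equation*}
with $p = \Posterior{\Prior{\pi}}{\Chan{C}}[X][y]$ and $q = \Posterior{\Prior{\pi}}{\Chan{D}}[X][z]$.

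Non-negativity is then immediate, since $-\log_2 q_x \geq 0$ whenever $q_x \in (0, 1]$ and $p_x \geq 0$. The only way the expression could fail to be finite is if there were some $x$ with $p_x > 0$ but $q_x = 0$, so the crux of the argument is the support inclusion $\Supp(p) \subseteq \Supp(q)$. Assume $p_x > 0$; then by \eqref{eq:joint} and \eqref{eq:posterior}, we have $\Prior{\pi}_x > 0$ and $\Chan{C}_{x, y} > 0$. Since $\Chan{D} = \Chan{C}\Cascade\Chan{R}$ is ordinary matrix multiplication (Definition~\ref{def:cascade}),
\begin{equation*}
\Chan{D}_{x, z} \;=\; \sum_{y' \in \mathcal{Y}} \Chan{C}_{x, y'}\, \Chan{R}_{y', z} \;\geq\; \Chan{C}_{x, y}\, \Chan{R}_{y, z} \;>\; 0,
\end{equation*}
where the strict inequality uses the hypothesis $\Chan{R}_{y, z} > 0$. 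Thus $q_x > 0$, so every summand $p_x(-\log_2 q_x)$ is a finite non-negative real and the whole sum is finite and non-negative.

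The main obstacle is really only this support-inclusion step, which is the only place where the refinement structure $\Chan{D} = \Chan{C}\Cascade\Chan{R}$ together with $\Chan{R}_{y, z} > 0$ actually does any work; everything else is either unfolding definitions or invoking Gibbs. As a bonus, Lemma~\ref{lemma:bound-stl-shannon} drops out as two special cases of Lemma~\ref{lemma:bound-shannon-general}: taking $\Chan{D} = \ChanLeaksNothing$ (whose posterior coincides with the prior, via Remark~\ref{rmk:mono-as-dpi}) yields the prior bound, while taking $\Chan{D} = \Chan{C}$ with $\Chan{R}$ the identity channel and $z = y$ yields the posterior bound.
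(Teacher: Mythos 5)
Your proposal is correct and follows essentially the same route as the paper's proof: the entire content is the support-inclusion step, which the paper establishes in contrapositive form (if $\Posterior{\Prior{\pi}}{\Chan{D}}[x][z] = 0$ then $\Posterior{\Prior{\pi}}{\Chan{C}}[x][y] = 0$, using $\Chan{D} = \Chan{C}\Cascade\Chan{R}$ and $\Chan{R}_{y,z} > 0$) while you prove it directly, with the same use of the cascade expansion. Your explicit invocation of Gibbs' inequality to pin down $\St[\LossShannon]^{u}(q) = \PointDist{q}$ and your remark that Lemma~\ref{lemma:bound-stl-shannon} falls out as special cases both match what the paper does elsewhere (in the proof of Theorem~\ref{thm:kl} and in the derivation of Lemma~\ref{lemma:bound-stl-shannon} from this one, respectively).
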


\begin{proof}
  To show that, we just need to demonstrate that if
  $\Posterior{\Prior{\pi}}{\Chan{D}}[x][z] = 0$, then
  $\Posterior{\Prior{\pi}}{\Chan{C}}[x][y] = 0$ as well, so that
  whenever
  $\LossShannon(\Posterior{\Prior{\pi}}{\Chan{D}}[x][z], x) = \infty$,
  $\Posterior{\Prior{\pi}}{\Chan{C}}[x][y] = 0$ and thus
  $\Posterior{\Prior{\pi}}{\Chan{C}}[x][y]\,
  \LossShannon(\Posterior{\Prior{\pi}}{\Chan{D}}[x][z], x) = 0$.
  
  If $\Posterior{\Prior{\pi}}{\Chan{D}}[x][z] = 0$, we
  have (from Defn~\ref{def:refinement}, Eqn~\ref{eq:posterior} and Eqn~\ref{eq:joint})
  \begin{equation*}
    \Posterior{\Prior{\pi}}{\Chan{D}}[x][z]
    =
    \frac
    {\Prior{\pi}_{x}\, (\Chan{C}\Cascade\Chan{R})_{x,z}}
    {\Outer{\Prior{\pi}}{\Chan{C}\Cascade\Chan{R}}[z]}
    =
    \frac
    {\Prior{\pi}_{x}\, \sum_{y'} \Chan{C}_{x, y'}\, \Chan{R}_{y', z}}
    {\Outer{\Prior{\pi}}{\Chan{C}\Cascade\Chan{R}}[z]}
    =
    0
  \end{equation*}
  and we know that $\Chan{R}_{y, z} > 0$. Therefore, either
  $\Prior{\pi}_{x} = 0$ or $\Chan{C}_{x, y} = 0$, and both cases imply
  that $\Posterior{\Prior{\pi}}{\Chan{C}}[x][z] = 0$.
\end{proof}

\restateBoundSTLShannon*%
\begin{proof}
  This follows directly from Lemma~\ref{lemma:bound-shannon-general},
  setting $\Chan{R} = \ChanLeaksNothing$ so that
  $\Posterior{\Prior{\pi}}{\Chan{D}}[X][z]
  =
  \Posterior{\Prior{\pi}}{\ChanLeaksNothing}[X][\bot]
  =
  \Prior{\pi}$.
\end{proof}

\restateKL*%
\begin{proof}
  Using Gibbs' inequality, 
  we can show that there is only one optimal action 
  that minimises the expected loss with respect to the loss function $\LossShannon$,
  as defined in \eqref{eq:loss-shannon}.
  That is, for any states of knowledge $\pi : \Dist\mathcal{X}$ and $\delta : \Dist\mathcal{X}$,
  it follows that
  \begin{equation}%
    \label{eq:gibbs}
    \sum_{x \in \mathcal{X}} \delta_{x}\,\LossShannon(\delta, x)
    <
    \sum_{x \in \mathcal{X}} \delta_{x}\,\LossShannon(\pi, x)
    \quad\forall \pi \neq \delta
  \end{equation}

  With this, we can reason as follows:
  \begin{align*}
    &\;\;
    \StLeak[\LossShannon]{p}{q}
    \\ =
    &\;\;
    \StUl[\LossShannon]{p}{q} - \Ul[\LossShannon](p)
    \Comment{Definition~\ref{def:st-leak}}
    \\ =
    &
    \left(\sum_{w \in \mathcal{W}} \St[\LossShannon]^{u}(q)_{w} \sum_{x \in \mathcal{X}} p_{x}\, \LossShannon(w, x)\right) -
    \Ul[\LossShannon](p)
    \Comment{Definition~\ref{def:st-measures}}
    \\ =
    &
    \left(\sum_{x \in \mathcal{X}} p_{x}\, \LossShannon(q, x)\right) -
    \Ul[\LossShannon](p)
    \Comment{%
      From \eqref{eq:gibbs} and Definition~\ref{def:strategy}, \\
      $\St[\LossShannon]^{u}(q) = \PointDist{q}$, \\
      with \PointDist{q} as defined in \eqref{eq:point-dist}%
    }
    \\ =
    &
    \left(\sum_{x \in \mathcal{X}} p_{x}\, \LossShannon(q, x)\right) -
    \left(\min_{w \in \mathcal{W}} \sum_{x \in \mathcal{X}} p_{x}\, \LossShannon(w, x)\right)
    \Comment{See~\eqref{eq:ul-dist}}
    \\ =
    &
    \left(\sum_{x \in \mathcal{X}} p_{x}\, \LossShannon(q, x)\right) -
    \left(\sum_{x \in \mathcal{X}} p_{x}\, \LossShannon(p, x)\right)
    \Comment{From \eqref{eq:gibbs}, \\ $\argmin_{w} = p$}
    \\ =
    &\;\;
    \sum_{x \in \mathcal{X}} p_{x}\, (\LossShannon(q, x) - \LossShannon(p, x))
    \Comment{Merge sum}
    \\ =
    &\;\;
    \sum_{x \in \mathcal{X}} p_{x}\, (- \log_{2} q_{x} + \log_{2} p_{x})
    \Comment{From~\eqref{eq:loss-shannon}}
    \\ =
    &\;\;
    \sum_{x \in \mathcal{X}} p_{x} \log_{2} \frac{p_{x}}{q_{x}}
    \Comment{Log difference}
    \\ =
    &\;\;
    D_{\mathit{KL}}(p \SubsOp q)
    \Comment{From~\eqref{eq:kl}}
  \end{align*}
\end{proof}

\subsection{Proofs omitted from Section~\ref{sec:soundness-dyn}}

\restateApplyPP*%
\begin{proof}
  Let $\Supp(p)$ be the support of a probability distribution $p$.
  Then, for any state of knowledge $p : \Dist\mathcal{X}$,
  any strategy $s \in \St(p)$
  and any gain function $\Gain : \mathcal{W} \times \mathcal{X} \to \Reals$, we have
  \begin{align*}
    &\;\;
    \sum_{w \in \mathcal{W}} s_{w}
    \sum_{x \in \mathcal{X}} p_{x}\, g(w, x)
    \Comment{From \eqref{eq:st-vg/any}}
    \\ =
    &\;\;
    \sum_{w^{*} \in\, \Supp(s)}
    s_{w^{*}} \sum_{x \in \mathcal{X}} p_{x}\, g(w^{*}, x)
    \\ =
    &\;\;
    \sum_{w^{*} \in\, \Supp(s)} s_{w^{*}} 
      \max_{w \in \mathcal{W}}
      \sum_{x \in \mathcal{X}} p_{x}\, g(w, x)
      \tag{$\star$}\label{eq:proof:p-p:max-step}
    \\ =
    &\;\;
    \sum_{w^{*} \in\, \Supp(s)} s_{w^{*}}\,
    \Vg(p)
    \Comment{From~\eqref{eq:vg-dist}}
    \\ =
    &\;\;
    \Vg(p)
    \sum_{w^{*} \in\, \Supp(s)} s_{w^{*}}\,
    \Comment{Rearrange terms}
    \\ =
    &\;\;
    \Vg(p)
    \Comment{$s$ is a probility distribution; \\
    $\forall w \notin \Supp(s)\colon s_{w} = 0$}
  \end{align*}
  noting that in \eqref{eq:proof:p-p:max-step} 
  every action $w^{*} \in \Supp(s)$ maximises $\sum_{x \in \mathcal{X}} p_{x}\, g(w^{*}, x)$.
  The above holds for any (optimal) strategy $s \in \St(p)$, 
  which includes the uniform strategy adopted in Defintion~\ref{def:st-measures}.
  The proof for the second part (uncertainty) can be obtained in the same way,
  replacing $\max$ with $\min$.
\end{proof}

\restateQleqP*%
\begin{proof}
  The equality $\StVg{p}{p} = \Vg(p)$ (right-hand side) follows from Lemma~\ref{lemma:p-p}.
  Then, for any $s \in \St(q)$, we have
  \begin{align*}
    &\;\;
    \sum_{w \in \mathcal{W}} s_{w} \sum_{x \in \mathcal{X}} p_{x}\, g(w, x)
    \Comment{From \eqref{eq:st-vg/any}}
    \\ \leq
    &\;\;
    \sum_{w \in \mathcal{W}} s_{w} \max_{w' \in \mathcal{W}} \sum_{x \in \mathcal{X}} p_{x}\, g(w', x)
    \\ =
    &\;\;
    \sum_{w \in \mathcal{W}} s_{w}\, \Vg(p)
    \Comment{From~\eqref{eq:vg-dist}}
    \\ =
    &\;\;
    \Vg(p) \sum_{w \in \mathcal{W}} s_{w}
    \Comment{Rearrange terms}
    \\ =
    &\;\;
    \Vg(p)
    \Comment{$s$ is a probability distribution}
  \end{align*}
  Since the above holds for any strategy $s \in \St(q)$,
  it also holds for the uniform strategy $\St^{u}(q)$ adopted in Definition~\ref{def:st-measures}.
  The proof for the second part ($\Loss$-uncertainty) can be obtained in the same way,
  replacing $\max$ with $\min$ and $\leq$ with $\geq$.
\end{proof}

\restateDynLeak*%
\begin{proof}
  This follows directly from Lemma~\ref{lemma:q-leq-p}, 
  by setting the baseline $p$ as $\Posterior{\Prior{\pi}}{\Chan{C}}[X][y]$
  and the adversary's belief $q$ as $\Prior{\pi}$.
\end{proof}

\restateCompSystem*%
\begin{proof}
  This follows directly from Lemma~\ref{lemma:q-leq-p}, 
  by setting the baseline $p$ as $\Posterior{\Prior{\pi}}{\Chan{C}}[X][y]$ 
  and the belief $q$ as $\Posterior{\Prior{\pi}}{\Chan{D}}[X][z]$.
\end{proof}

\subsection{Proofs omitted from Section~\ref{sec:soundness-static}}

\restateRecoverExpPosterior*%
\begin{proof}
  This follows immediately from Lemma~\ref{lemma:p-p}.
\end{proof}

\restateRecoverMaxPosterior*%
\begin{proof}
  This follows immediately from Lemma~\ref{lemma:p-p}.
\end{proof}

\restateRecoverPrior*%
\begin{proof}
  For any observation $y \in \mathcal{Y}$, 
  any gain function $\Gain : \mathcal{W} \times \mathcal{X} \to \Reals$
  and any strategy $s \in \St(\Prior{\pi})$,
  it follows that
  \begin{align*}
    &\;\;
    \sum_{y \in \mathcal{Y}} \Outer{\Prior{\pi}}{\Chan{C}}[y]
    \sum_{w \in \mathcal{W}} s_{w}
    \sum_{x \in \mathcal{X}} \Posterior{\Prior{\pi}}{\Chan{C}}[x][y]\, g(w, x)
    \Comment{From \eqref{eq:st-vg/any}}
    \\ =
    &\;\;
    \sum_{w \in \mathcal{W}} s_{w}
    \\
    &\;\;
    \sum_{y \in \mathcal{Y}}
    \sum_{x \in \mathcal{X}} \Outer{\Prior{\pi}}{\Chan{C}}[y]\,
    \Posterior{\Prior{\pi}}{\Chan{C}}[x][y]\, g(w, x)
    \Comment{Rearrange terms}
    \\ =
    &\;\;
    \sum_{w \in \mathcal{W}} s_{w}
    \\
    &\;\;
    \sum_{y \in \mathcal{Y}}
    \sum_{x \in \mathcal{X}} \Joint{\Prior{\pi}}{\Chan{C}}[x, y]\, g(w, x)
    \Comment{From posterior \eqref{eq:posterior} to joint}
    \\ =
    &\;\;
    \sum_{w \in \mathcal{W}} s_{w}
    \\
    &\;\;
    \sum_{y \in \mathcal{Y}}
    \sum_{x \in \mathcal{X}} \Prior{\pi}_{x}\, \Chan{C}_{x, y}\, g(w, x)
    \Comment{From joint \eqref{eq:joint} to prior \& channel}
    \\ =
    &\;\;
    \sum_{w \in \mathcal{W}} s_{w}
    \sum_{x \in \mathcal{X}} \Prior{\pi}_{x}\, g(w, x)
    \sum_{y \in \mathcal{Y}} \Chan{C}_{x, y}
    \Comment{Rearrange terms}
    \\ =
    &\;\;
    \sum_{w \in \mathcal{W}} s_{w}
    \sum_{x \in \mathcal{X}} \Prior{\pi}_{x}\, g(w, x)
    \Comment{Row $x$ is a probability distribution}
    \\ =
    &\;\;
    \Vg(\Prior{\pi})
    \Comment{Proof of Lemma~\ref{lemma:p-p}}
  \end{align*}
  Since the above holds for any strategy $s \in \St(\pi)$,
  it also holds for the uniform strategy $\St^{u}(\pi)$ adopted in Definition~\ref{def:st-measures}.
  The proof for the second part ($\Loss$-uncertainty) can be obtained in a similar way,
  and thus is omitted.
\end{proof}

To prove Theorems~\ref{thm:recover-exp-refinement} and
\ref{thm:recover-max-refinement}, we first show the
\qm{squashing} effect for a fixed output $z \in \CalZ$ from channel $\Chan{D}$:

\begin{lemma}
  \label{lemma:recover-dyn-refinement}
  Let $\Prior{\pi} : \Dist\mathcal{X}$ be the adversary's prior belief,
  and consider two channels $\Chan{C} : \mathcal{X} \to \Dist\mathcal{Y}$ and $\Chan{D} : \mathcal{X} \to \Dist\CalZ$,
  $\Chan{C} \RefinedBy \Chan{D}$,
  so that $\Chan{D} = \Chan{C}\Cascade\Chan{R}$.
  Moreover, let $z \in \mathcal{Z}$ be an output from $\Chan{D}$.
  For any gain function $\Gain : \mathcal{W} \times \mathcal{X} \to \Reals$,
  the \emph{dynamic} posterior $\Gain$-vulnerability $\Vg(\Posterior{\Prior{\pi}}{\Chan{D}}[X][z])$ is recovered as
  \begin{equation*}
    \frac{1}{\Outer{\Prior{\pi}}{\Chan{D}}[z]} \sum_{y \in \mathcal{Y}}
    \Outer{\Prior{\pi}}{\Chan{C}}[y]\, \Chan{R}_{y, z}\,
    \StVg
    {\Posterior{\Prior{\pi}}{\Chan{C}}[X][y]}
    {\Posterior{\Prior{\pi}}{\Chan{D}}[X][z]}
  \end{equation*}
  Dually, for any loss function $\Loss : \mathcal{W} \times \mathcal{X} \to \Reals_{\geq 0} \cup \{\infty\}$,
  the \emph{dynamic} posterior $\Loss$-uncertainty $\Ul(\Posterior{\Prior{\pi}}{\Chan{D}}[X][z])$ equals to
  \begin{equation*}
    \frac{1}{\Outer{\Prior{\pi}}{\Chan{D}}[z]} \sum_{y \in \mathcal{Y}}
    \Outer{\Prior{\pi}}{\Chan{C}}[y]\, \Chan{R}_{y, z}\,
    \StUl
    {\Posterior{\Prior{\pi}}{\Chan{C}}[X][y]}
    {\Posterior{\Prior{\pi}}{\Chan{D}}[X][z]}
  \end{equation*}
\end{lemma}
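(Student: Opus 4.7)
\textbf{Proof plan for Lemma~\ref{lemma:recover-dyn-refinement}.}
The plan is to massage the right-hand side directly by expanding the definition of $\StVg{\cdot}{\cdot}$, reordering sums, and collapsing products via the cascade identity $\Chan{D} = \Chan{C}\Cascade\Chan{R}$, until we arrive at $\StVg{\Posterior{\Prior{\pi}}{\Chan{D}}[X][z]}{\Posterior{\Prior{\pi}}{\Chan{D}}[X][z]}$, at which point Lemma~\ref{lemma:p-p} yields $\Vg(\Posterior{\Prior{\pi}}{\Chan{D}}[X][z])$.

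First, I would unfold each $\StVg{\Posterior{\Prior{\pi}}{\Chan{C}}[X][y]}{\Posterior{\Prior{\pi}}{\Chan{D}}[X][z]}$ using Definition~\ref{def:st-measures}, noting crucially that the uniform strategy $\St^{u}(\Posterior{\Prior{\pi}}{\Chan{D}}[X][z])$ is determined by the baseline $\Posterior{\Prior{\pi}}{\Chan{D}}[X][z]$ and therefore does \emph{not} depend on $y$. This lets me swap sums and pull the entire $\St^{u}$-factor outside the $y$-summation, leaving the key inner expression
\[
  \sum_{y \in \CalY} \Outer{\Prior{\pi}}{\Chan{C}}[y]\,\Chan{R}_{y,z}\,\Posterior{\Prior{\pi}}{\Chan{C}}[x][y].
\]

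Next, I would recognise $\Outer{\Prior{\pi}}{\Chan{C}}[y]\,\Posterior{\Prior{\pi}}{\Chan{C}}[x][y] = \Joint{\Prior{\pi}}{\Chan{C}}[x,y] = \Prior{\pi}_{x}\,\Chan{C}_{x,y}$ via \eqref{eq:posterior} and \eqref{eq:joint}, so the inner expression collapses to $\Prior{\pi}_{x}\sum_{y}\Chan{C}_{x,y}\,\Chan{R}_{y,z}$. Applying Definition~\ref{def:cascade} gives $\Prior{\pi}_{x}\,\Chan{D}_{x,z} = \Joint{\Prior{\pi}}{\Chan{D}}[x,z]$, so after dividing by the front factor $\Outer{\Prior{\pi}}{\Chan{D}}[z]$ I obtain $\Posterior{\Prior{\pi}}{\Chan{D}}[x][z]$ by \eqref{eq:posterior}. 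Repackaging, the RHS becomes $\sum_{w} \St^{u}(\Posterior{\Prior{\pi}}{\Chan{D}}[X][z])_{w}\sum_{x}\Posterior{\Prior{\pi}}{\Chan{D}}[x][z]\,\Gain(w,x) = \StVg{\Posterior{\Prior{\pi}}{\Chan{D}}[X][z]}{\Posterior{\Prior{\pi}}{\Chan{D}}[X][z]}$, which by Lemma~\ref{lemma:p-p} equals $\Vg(\Posterior{\Prior{\pi}}{\Chan{D}}[X][z])$. The loss version is proven identically, replacing $\Gain$ with $\Loss$ and $\St^{u}$ with $\St[\Loss]^{u}$.

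The only genuine subtlety is justifying the swap of the $y$-sum with the $w$-sum and pulling $\St^{u}(\Posterior{\Prior{\pi}}{\Chan{D}}[X][z])$ outside: this relies on the fact that the baseline is \emph{fixed} at $\Posterior{\Prior{\pi}}{\Chan{D}}[X][z]$ across all summands on the RHS, which is exactly what the lemma asserts should be the baseline. Everything else is bookkeeping via Bayes' rule and the definition of cascade, and no finiteness or positivity issue arises because the cascade identity $\Chan{C}\Cascade\Chan{R} = \Chan{D}$ holds entrywise and $\Outer{\Prior{\pi}}{\Chan{D}}[z] > 0$ may be assumed without loss of generality (otherwise $z$ has probability zero and the identity holds vacuously).
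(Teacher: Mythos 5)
Your proposal is correct and follows essentially the same route as the paper's proof: expand the strategy-based vulnerability with the ($y$-independent) strategy derived from the baseline $\Posterior{\Prior{\pi}}{\Chan{D}}[X][z]$, reorder sums, collapse $\Outer{\Prior{\pi}}{\Chan{C}}[y]\,\Posterior{\Prior{\pi}}{\Chan{C}}[x][y]$ to the joint, use the cascade identity to obtain $\Chan{D}_{x,z}$, recover $\Posterior{\Prior{\pi}}{\Chan{D}}[x][z]$, and finish with Lemma~\ref{lemma:p-p}. The only cosmetic difference is that the paper carries out the computation for an arbitrary optimal strategy $s \in \St(\Posterior{\Prior{\pi}}{\Chan{D}}[X][z])$ and then specialises to the uniform one, whereas you work with $\St^{u}$ directly; both are fine.
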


\begin{proof}
  For any observation $z \in \CalZ$,
  any gain function $\Gain : \mathcal{W} \times \mathcal{X} \to \Reals$
  and any strategy $s \in \St(\Posterior{\Prior{\pi}}{\Chan{D}}[X][z])$,
  it follows that
  \begin{align*}
    &\;\;
    \frac{1}{\Outer{\Prior{\pi}}{\Chan{D}}[z]}
    \sum_{y \in \mathcal{Y}} \Outer{\Prior{\pi}}{\Chan{C}}[y]\, \Chan{R}_{y, z}
    \\
    &\;\;
    \sum_{w \in \mathcal{W}}
    s_{w}
    \sum_{x \in \mathcal{X}}
    \Posterior{\Prior{\pi}}{\Chan{C}}[x][y]\, g(w, x)
    \Comment{From \eqref{eq:st-vg/any}}
    \\ =
    &\;\;
    \frac{1}{\Outer{\Prior{\pi}}{\Chan{D}}[z]}
    \sum_{y \in \mathcal{Y}}  
    \sum_{w \in \mathcal{W}}
    s_{w}
    \\
    &\;\;
    \sum_{x \in \mathcal{X}}
    \Outer{\Prior{\pi}}{\Chan{C}}[y]\,
    \Posterior{\Prior{\pi}}{\Chan{C}}[x][y]\,
    \Chan{R}_{y, z}\, g(w, x)
    \Comment{Rearrange terms}
    \\ =
    &\;\;
    \frac{1}{\Outer{\Prior{\pi}}{\Chan{D}}[z]}
    \sum_{y \in \mathcal{Y}}  
    \sum_{w \in \mathcal{W}}
    s_{w}
    \\
    &\;\;
    \sum_{x \in \mathcal{X}}
    \Joint{\Prior{\pi}}{\Chan{C}}[x, y]\,
    \Chan{R}_{y, z}\, g(w, x)
    \Comment{From posterior \eqref{eq:posterior} to joint}
    \\ =
    &\;\;
    \frac{1}{\Outer{\Prior{\pi}}{\Chan{D}}[z]}
    \sum_{y \in \mathcal{Y}} \sum_{w \in \mathcal{W}} s_{w}
    \\
    &\;\;
    \sum_{x \in \mathcal{X}} \Prior{\pi}_{x}\, \Chan{C}_{x, y}\, \Chan{R}_{y, z}\, g(w, x)
    \Comment{From joint \eqref{eq:joint} to prior \& channel}
    \\ =
    &\;\;
    \frac{1}{\Outer{\Prior{\pi}}{\Chan{D}}[z]} \sum_{w \in \mathcal{W}} s_{w}
    \\
    &\;\;
    \sum_{x \in \mathcal{X}} g(w, x)\, \Prior{\pi}_{x} \sum_{y \in \mathcal{Y}} \Chan{C}_{x, y}\, \Chan{R}_{y, z}
    \Comment{Rearrange terms}
    \\ =
    &\;\;
    \frac{1}{\Outer{\Prior{\pi}}{\Chan{D}}[z]} \sum_{w \in \mathcal{W}} s_{w}
    \\
    &\;\;
    \sum_{x \in \mathcal{X}} g(w, x)\, \Prior{\pi}_{x}\, \Chan{D}_{x, z}
    \Comment{Matrix multiplication $\Chan{C}\Chan{R}$; \\
    $\Chan{D} = \Chan{C}\Cascade\Chan{R}$ (Definition~\ref{def:cascade})}
    \\ =
    &\;\;
    \sum_{w \in \mathcal{W}} s_{w} \sum_{x \in \mathcal{X}} g(w, x)\, \frac
    {\Joint{\Prior{\pi}}{\Chan{D}}[x, z]}
    {\Outer{\Prior{\pi}}{\Chan{D}}[z]}
    \Comment{From~\eqref{eq:joint}; \\ Rearrange terms}
    \\ =
    &\;\;
    \sum_{w \in \mathcal{W}} s_{w} \sum_{x \in \mathcal{X}} g(w, x)\, \Posterior{\Prior{\pi}}{\Chan{D}}[x][z]
    \Comment{From~\eqref{eq:posterior}}
  \end{align*}
  Since the above holds for any $s \in \St(\Posterior{\Prior{\pi}}{\Chan{D}}[X][z])$,
  it also holds for the uniform strategy $\St^{u}(\Posterior{\Prior{\pi}}{\Chan{D}}[X][z])$,
  which from Defn~\ref{def:st-measures} gives 
    $\StVg
    {\Posterior{\Prior{\pi}}{\Chan{D}}[X][z]}
    {\Posterior{\Prior{\pi}}{\Chan{D}}[X][z]}
  $ and by Lemma~\ref{lemma:p-p} is $\Vg(\Posterior{\Prior{\pi}}{\Chan{D}}[X][z])$.
  The proof for the second part ($\Loss$-uncertainty) can be obtained in a similar way,
  and thus is omitted.
\end{proof}

\restateRecoverExpRefinement*%
\begin{proof}
  This follows almost immediately from Lemma~\ref{lemma:recover-dyn-refinement}:
  \begin{align*}
    &\;\;
    \sum_{z \in \CalZ} \sum_{y \in \mathcal{Y}}
    \Outer{\Prior{\pi}}{\Chan{C}}[y]\, \Chan{R}_{y, z}\,
    \StVg
    {\Posterior{\Prior{\pi}}{\Chan{C}}[X][y]}
    {\Posterior{\Prior{\pi}}{\Chan{D}}[X][z]}
    \\ =
    &\;\;
    \sum_{z \in \CalZ} \Outer{\Prior{\pi}}{\Chan{D}}[z]\,
    \\
    &\;\;
    \biggl(
      \frac{1}{\Outer{\Prior{\pi}}{\Chan{D}}[z]}
      \sum_{y \in \mathcal{Y}} \Outer{\Prior{\pi}}{\Chan{C}}[y]\,
      \Chan{R}_{y, z}\,
      \StVg
      {\Posterior{\Prior{\pi}}{\Chan{C}}[X][y]}
      {\Posterior{\Prior{\pi}}{\Chan{D}}[X][z]}
    \biggr)
    \\ =
    &\;\;
    \sum_{z \in \CalZ} \Outer{\Prior{\pi}}{\Chan{D}}[z]\,
    \Vg(\Posterior{\Prior{\pi}}{\Chan{D}}[X][z])
    \Comment{Lemma~\ref{lemma:recover-dyn-refinement}} 
    \\ =
    &\;\;
    \Vg\Hyper{\Prior{\pi}}{\Chan{D}}
    \Comment{Definition~\ref{def:post-static-measures}}
  \end{align*}
  The proof for the second part
  ($\Loss$-uncertainty) can be obtained in a similar way, and thus is omitted.
\end{proof}

\restateRecoverMaxRefinement*%
\begin{proof}
  This follows immediately from Lemma~\ref{lemma:recover-dyn-refinement}:
  \begin{align*}
    &\;\;
    \max_{z \in \CalZ} \frac{1}{\Outer{\Prior{\pi}}{\Chan{D}}[z]}
    \\
    &\;\;
    \sum_{y \in \mathcal{Y}}
    \Outer{\Prior{\pi}}{\Chan{C}}[y]\, \Chan{R}_{y, z}\,
    \StVg
    {\Posterior{\Prior{\pi}}{\Chan{C}}[X][y]}
    {\Posterior{\Prior{\pi}}{\Chan{D}}[X][z]}
    \\ =
    &\;\;
    \max_{z \in \CalZ} \Vg(\Posterior{\Prior{\pi}}{\Chan{D}}[X][z])
    \Comment{Lemma~\ref{lemma:recover-dyn-refinement}} 
    \\ =
    &\;\;
    \VgMax\Hyper{\Prior{\pi}}{\Chan{D}}
    \Comment{Definition~\ref{def:post-static-measures}}
  \end{align*}
  The proof for the second part
  ($\Loss$-uncertainty) can be obtained in a similar way, and thus is omitted.
\end{proof}

\fi
\end{document}